\newlength{\pboxwidth}		
\newlength{\tablesepp}
\newcommand{\nin}{\not\in}
\newcommand{\spref}{\preceq} 
\newcommand{\sprefp}{\prec} 
\newcommand{\dlm}{\,\$\,} 
\newcommand{\seqA}{\ensuremath{\al_A^1 \dlm \cdots \dlm \al_A^k}}
\newcommand{\prSeqA}{\ensuremath{\al_A^1 \dlm \cdots \dlm \al_A^\l}}
\newcommand{\lastExecA}{\ensuremath{\theta_A^{\l+1}}}
\newcommand{\prSeqAptl}{\ensuremath{\al_A^1 \dlm \cdots \dlm \al_A^\l \dlm \lastExecA}}
\newcommand{\Bseq}{\ensuremath{\pi_B^1 \dlm \cdots \dlm \pi_B^k}}
\newcommand{\prBseq}{\ensuremath{\pi_B^1 \dlm \cdots \dlm \pi_B^\l}}
\newcommand{\BLastExec}{\ensuremath{\kappa_B^{\l+1}}}
\newcommand{\prBseqPtl}{\ensuremath{\pi_B^1 \dlm \cdots \dlm \pi_B^\l \dlm \BLastExec}}
\newcommand{\al}{\alpha}
\newcommand{\dl}{\delta}
\newcommand{\vp}{\varphi}
\newcommand{\bfg}{\begin{figure}}
\newcommand{\efg}{\end{figure}}
\newcommand{\remove}[1]{}
\newcommand{\REMOVE}[1]{}
\newcommand{\ie}{i.e.,\xspace}
\newcommand{\eg}{e.g.,\xspace}
\newcommand{\wrt}{w.r.t.}
\newcommand{\lbr}{\linebreak}
\newcommand{\beqn}{\begin{centeqn}}
\newcommand{\eeqn}{\end{centeqn}}
\newcommand{\bleqn}[1]{\begin{centlabeqn}{#1}}
\newcommand{\eleqn}{\end{centlabeqn}}
\newcommand{\blq}[1]{\begin{multcentlbeqn}{#1}}
\newcommand{\elq}{\end{multcentlbeqn}}
\newcommand{\be}{\begin{itemize}}
\newcommand{\ee}{\end{itemize}}
\newcommand{\bn}{\begin{enumerate}}
\newcommand{\en}{\end{enumerate}}
\newcommand{\bp}{\begin{proposition}}
\newcommand{\ep}{\end{proposition}}
\newcommand{\bl}{\begin{lemma}}
\newcommand{\el}{\end{lemma}}
\newcommand{\bco}{\begin{corollary}}
\newcommand{\eco}{\end{corollary}}
\newcommand{\bt}{\begin{theorem}}
\newcommand{\et}{\end{theorem}}
\newcommand{\bd}{\begin{definition}}
\newcommand{\ed}{\end{definition}}
\newcommand{\bpr}{\begin{proof}}
\newcommand{\epr}{\end{proof}}
\newcommand{\pcase}[1]{\textbf{#1\/}}
\newcommand{\phdr}[1]{\vspace{1.0ex}\textit{#1\/}:}
\newcommand{\case}[2]{\vspace{1.5ex} \noindent \textbf{Case} #1: #2.}
\newcommand{\scase}[2]{\vspace{1.0ex} \noindent \textit{Subcase} #1: #2.}
\newcommand{\set}[1]{\left\{ {#1} \right\}}
\newcommand{\Union}{\bigcup}
\newcommand{\setinsert}[2]{\ensuremath{#1 \gets #1 \cup \set{#2}}}
\newcommand{\setdelete}[2]{\ensuremath{#1 \gets #1 - \set{#2}}}
\newcommand{\cat}{\mathbin{\frown}}
\newcommand{\stt}{~|~}
\newcommand{\Bool}{\ms{Bool}}
\newcommand{\true}{\ms{true}}
\newcommand{\false}{\ms{false}}
\newcommand{\mktuple}[1]{\ensuremath{\langle #1 \rangle}}
\newcommand{\tpl}[1]{\ensuremath{\langle #1 \rangle}}
\newcommand{\bottom}{\perp}
\newcommand{\df}{\mbox{$\:\stackrel{\rm df}{=\!\!=}\:$}}
\newcommand{\ex}{\exists}
\newcommand{\fa}{\forall}
\newcommand{\ind}{\hspace*{3.0em}}
\newcommand{\ifind}{\hspace*{1em}}
\newcommand{\ints}{\cap}
\renewcommand{\l}{\ell}
\newcommand{\oneton}{[1:n]}
\newcommand{\n}{[1\!:\!n]}
\newcommand{\rng}[2]{[#1\!:\!#2]}
\newcommand{\sth}{~|~}
\newcommand{\sub}{\subseteq}
\newcommand{\sups}{\supseteq}
\newcommand{\skipst}{\ms{skip}}
\newcommand{\un}{\cup}
\newcommand{\AND}{\bigwedge}
\newcommand{\UN}{\bigcup}
\newcommand{\qf}[4]{\ensuremath{(#1 #2, #3: #4)}}
\newtheorem{theorem}{Theorem}
\newtheorem{lemma}[theorem]{Lemma}
\newtheorem{proposition}[theorem]{Proposition}
\newtheorem{corollary}[theorem]{Corollary}
\newtheorem{definition}{Definition}
\newtheorem{example}{Example}
\newenvironment{proof}[1]{{\noindent\bf Proof:~}#1}{\qed}
\def\whitebox{{\hbox{\hskip 1pt
         \vrule height 6pt depth 1.5pt
         \lower 1.5pt\vbox to 7.5pt{\hrule width
                   3.2pt\vfill\hrule width 3.2pt}%
         \vrule height 6pt depth 1.5pt
         \hskip 1pt } }}
\def\qed{\ifhmode\allowbreak\else\nobreak\fi\hfill\quad\nobreak
     \whitebox\medbreak}
\def\algorithm#1{\alghead*{\kern\leftmargin #1:}\rightmargin=\leftmargin}
\def\endalgorithm{\ifhmode\unskip \par\fi
        \ifdim\lastskip >\z@ \@tempskipa\lastskip \vskip -\lastskip
         \advance\@tempskipa\parskip \advance\@tempskipa -\@outerparskip
         \vskip\@tempskipa
   \fi\addpenalty\@endparpenalty\addvspace\topsep\@endpetrue
        }
\def\exlalgorithm#1{\list{}{
                        \itemindent    0pt%
                        \rightmargin   0pt%
                        \leftmargin     0pt
                        \advance\@listdepth by -1
                        }%
                \item[\rlap{\hskip -\@totalleftmargin\hskip\leftmargini\hskip\labelsep\vbox{\hrule width\linewidth height .5pt
                                \hbox{\bf\boldmath #1:\vrule height 16pt width 0pt}
                                \hbox{\vrule height 14pt width 0pt}}}]}
\def\endexlalgorithm{\ifhmode\unskip \par\fi
        \ifdim\lastskip >\z@ \@tempskipa\lastskip \vskip -\lastskip\nobreak
            \advance\@tempskipa\parskip \advance\@tempskipa -\@outerparskip
            \vskip\@tempskipa\nobreak\fi
        \vbox to 0pt{\leftskip=\leftmargini\vskip -1pc\vrule width\linewidth height .5pt
                \vss}
        \vskip -2pc
        \@endparenv
        \advance\@listdepth by 1
        \hbox{}\endlist}
\newcommand{\ms}[1]{%
        \relax\ifmmode
                \mathord{\mathcode`\-="702D\it #1\mathcode`\-="2200}%
        \else
                {\it #1}%
        \fi
}
\newcommand{\cnst}[1]{\vspace{1.0ex} \noindent \textit{#1}}
\newenvironment{centeqn}        
   {{\sms\\ \hspace*{\fill}}} 
   {\hspace*{\fill}\sms\\}
\newsavebox{\EqnLabel}
\newenvironment{centlabeqn}[1]          
   {\sbox{\EqnLabel}{#1}
    {\sms\\ \hspace*{\fill}}
   } 
   {\hfill{\makebox[0in][r]{\usebox{\EqnLabel}}}\\[1ex]}
\newenvironment{multcentlbeqn}[1]            
   {\begin{centlabeqn}{#1}
    \begin{minipage}[b]{\textwidth}
    \begin{tabbing}
   }
   {\end{tabbing}
    \end{minipage}
    \end{centlabeqn}
   }
\newenvironment{centeqn-nbsp} 
   {{\ms\\ \hspace*{\fill}}}
   {\hspace*{\fill}}
\def\ioisize{\footnotesize}
\newcommand{\ioi}[3]{\bgroup\ioisize \begin{tabbing}
XX\=XX\=    \kill           
Input:\+ \\                 
  #1 \-  \\
Output: \+ \\
  #2 \- \\
Internal: \+\\
  #3 \-
\end{tabbing} \egroup} 
\newcommand{\io}[2]{\bgroup\ioisize \begin{tabbing}
XX\=XX\=    \kill           
Input:\+ \\                 
  #1 \-  \\
Output: \+ \\
  #2 \- 
\end{tabbing} \egroup} 
\newcommand{\ioti}[4]{\bgroup\ioisize \begin{tabbing}
XX\=XX\=    \kill           
Input:\+ \\                 
  #1 \-  \\
Output: \+ \\
  #2 \- \\
Time-passing: \+ \\
  #3 \- \\
Internal: \+\\
  #4 \-
\end{tabbing} \egroup} 
\newcommand{\ei}[2]{\bgroup\ioisize \begin{tabbing}
XX\=XX\=    \kill           
External:\+ \\              
  #1 \-  \\
Internal: \+\\
  #2 \-
\end{tabbing} \egroup}
\def\iocodesize{\scriptsize}
\newcommand{\iocodeonecol}[1]{\begin{center}\iocodesize
\begin{minipage}[t]{.95\linewidth}
\hbox to\linewidth{}
#1
\end{minipage}
\end{center}}
\newcommand{\iocode}[2]{\vspace{-15pt}\begin{center}\iocodesize
\begin{minipage}[t]{.475\linewidth}
\hbox to\linewidth{}
#1
\end{minipage}
\hspace{.01\linewidth} 
\begin{minipage}[t]{.475\linewidth}
\hbox to\linewidth{}
#2
\end{minipage}
\end{center}}
\newcommand{\ef}[2]{
\begin{tabbing}         
\= XXX\=XX\=XX\=XX\=XX\=   \kill
\protect #1\\
\>Eff: \+ \+ \>
   #2  \- \-
\end{tabbing}}
\newcommand{\prcef}[3]{\begin{tabbing}        
\= XXX\=XX\=XX\=XX\=  \kill
\protect #1\\
\>Pre: \+ \+ \>
  #2  \-  \- \\ 
\>Eff: \+  \+ \>
  #3 \- \- 
\end{tabbing}}
\newlength\Mwidth
\newlength\jsawidth
\def\statedef#1{\def\auxstatedef{#1}%
\def\auxempty{}%
\settowidth{\Mwidth}{M}
\jsawidth=\textwidth
\advance\jsawidth by -7cm
\advance\jsawidth by -\Mwidth
\advance\jsawidth by -10\tabcolsep
\advance\jsawidth by -6\arrayrulewidth
\begin{trivlist}\item[]%
\begin{tabular}{|p{2.5cm}|c|p{2.2cm}|p{2.3cm}|p{\jsawidth}|}
\hline
{\bf Variable} & \phantom{M} & {\bf Type} & {\bf Initially} & {\bf Description}\\
\hline\hline}
\def\endstatedef{\\ 
\ifx\auxstatedef\auxempty
  \hline
\else%
  \hline\hline
   \multicolumn{5}{|l|}{\auxstatedef}\\\hline
 \fi
\end{tabular}%
\end{trivlist}}
\def\statetext{\begin{minipage}[t]{\linewidth}}
\def\endstatetext{\end{minipage}\vspace{2pt}}
\newcommand{\action}[1]{\textsf{#1}}
\newcommand{\statevar}[1]{\ms{#1}}
\newcommand{\automatontitle}[1]{\vspace{2ex} \noindent \textbf{#1}\\}
\newenvironment{signature}[1][Signature]{%
\noindent\textbf{#1}\iocodesize\vspace{-2ex}}{}
\newenvironment{statevarlist}[1][State]{%
\noindent\textbf{#1}\iocodesize\vspace{-1ex}
\begin{description}}{\end{description}}
\newenvironment{actionlist}[1][Actions]{%
\noindent\textbf{#1}\iocodesize}{}
\newcommand{\inputaction}[2]{\ef{\textbf{Input} #1}{#2}}
\newcommand{\outputaction}[3]{\prcef{\textbf{Output} #1}{#2}{#3}}
\newcommand{\internalaction}[3]{\prcef{\textbf{Internal} #1}{#2}{#3}}
\renewcommand{\iocode}[3][.475]{\vspace{-15pt}\begin{center}\iocodesize
\begin{minipage}[t]{#1\linewidth}
\hbox to\linewidth{}
#2
\end{minipage}
\hspace{.01\linewidth} 
\begin{minipage}[t]{.95\linewidth-#1\linewidth}
\hbox to\linewidth{}
#3
\end{minipage}
\end{center}}
\newcommand{\ccreated}[3]{\ms{created}({#1})({#2})({#3})}
\newcommand{\autm}{\ms{aut}}
\newcommand{\auts}{\mbox{\textsf{Auts}}}
\newcommand{\autids}{\mbox{\textsf{Autids}}}
\renewcommand{\sin}[2]{\ms{in}({#1})({#2})}
\newcommand{\sout}[2]{\ms{out}({#1})({#2})}
\newcommand{\sint}[2]{\ms{int}({#1})({#2})}
\newcommand{\sext}[2]{\ms{ext}({#1})({#2})}
\newcommand{\ssig}[2]{\ms{sig}({#1})({#2})}
\newcommand{\sextacts}[2]{\widehat{\ms{ext}}({#1})({#2})}
\newcommand{\ssigacts}[2]{\widehat{\ms{sig}}({#1})({#2})}
\newcommand{\lastacts}[1]{\widehat{\ms{last}}({#1})}
\newcommand{\icin}[1]{\ms{in}({#1})}
\newcommand{\icout}[1]{\ms{out}({#1})}
\newcommand{\icint}[1]{\ms{int}({#1})}
\newcommand{\icext}[1]{\ms{ext}({#1})}
\newcommand{\icsig}[1]{\ms{sig}({#1})}
\newcommand{\icaut}[1]{\ms{auts}({#1})}
\newcommand{\icassig}[1]{\ms{map}({#1})}
\newcommand{\icmap}[1]{\ms{map}({#1})}
\newcommand{\icextacts}[1]{\widehat{\ms{ext}}({#1})}
\newcommand{\icsigacts}[1]{\widehat{\ms{sig}}({#1})}
\newcommand{\cin}[2]{\ms{in}({#1})({#2})}
\newcommand{\cout}[2]{\ms{out}({#1})({#2})}
\newcommand{\cint}[2]{\ms{int}({#1})({#2})}
\newcommand{\cext}[2]{\ms{ext}({#1})({#2})}
\newcommand{\csig}[2]{\ms{sig}({#1})({#2})}
\newcommand{\csigacts}[2]{\widehat{\ms{sig}}({#1})({#2})}
\newcommand{\hide}{\setminus}
\newcommand{\pl}{\parallel}
\newcommand{\project}{\pj}
\newcommand{\proj}{\pj}
\newcommand{\pj}{\raisebox{0.2ex}{$\upharpoonright$}}
\newcommand{\ppj}{\raisebox{0.2ex}{$\upharpoonright \!\upharpoonright$}}
\newcommand{\ren}[1]{\rho{(#1)}}
\newcommand{\HActs}{\Sigma}
\newcommand{\caut}[2]{\ms{auts}({#1})(#2)}       
\newcommand{\cmap}[2]{\ms{map}({#1})({#2})} 
\newcommand{\autcmap}[1]{\ms{config}({#1})}
\newcommand{\config}[2]{\ms{config}({#1})({#2})}
\newcommand{\sioa}[1]{\ms{sioa}({#1})}
\newcommand{\A}{\mathcal{A}}
\newcommand{\Sm}{\mathcal{S}}
\renewcommand{\S}[1]{\mathcal{S}(#1)}
\newcommand{\Spr}[1]{\mathcal{S}'(#1)}
\newcommand{\execfrag}[3]{\ms{{}_{#2} | #1 |_{#3}}}
\newcommand{\execs}[1]{\ms{execs}({#1})}
\newcommand{\trexecs}[1]{\ms{texecs}({#1})}
\newcommand{\fexecs}[1]{\ms{execs^*}({#1})}
\newcommand{\fptraces}[1]{\ms{pretraces^*}({#1})}
\newcommand{\ftraces}[1]{\ms{traces^*}({#1})}
\newcommand{\traces}[1]{\ms{traces}({#1})}
\newcommand{\ttraces}[1]{\ms{ttraces}({#1})}
\newcommand{\ptraces}[1]{\ms{pretraces}({#1})}
\newcommand{\RAB}{\ensuremath{\,R_{AB}\,}}
\newcommand{\simAB}{\lhd_{AB}}
\newcommand{\simBA}{\lhd_{BA}}
\newcommand{\texecs}[2]{\ms{texecs}({#1})({#2})}
\newcommand{\ftexecs}[2]{\ms{execs^*}({#1})({#2})}
\newcommand{\fte}[2]{\ms{execs^*}({#1})({#2})}
\newcommand{\trace}[1]{\ms{trace}({#1})}
\newcommand{\traceA}[2]{\ms{trace}_{#2}({#1})}
\newcommand{\straceA}[2]{\ms{strace}_{#2}({#1})}
\newcommand{\reduce}[1]{\ms{reduce}({#1})}
\newcommand{\zip}{\ms{zip}}
\newcommand{\zips}{\ms{zips}}
\newcommand{\ispretrace}[1]{\ms{ispretrace}({#1})}
\newcommand{\prefixes}[1]{\ms{prefixes}({#1})}
\newcommand{\seq}{\approx}
\newcommand{\acts}[1]{\ms{acts}({#1})}
\newcommand{\sig}[1]{\ms{sig}({#1})}
\renewcommand{\int}{\ms{int}}
\newcommand{\st}[1]{\widehat{\ms{#1}}}	
\newcommand{\insig}{\statevar{in}}
\newcommand{\outsig}{\statevar{out}}
\newcommand{\intsig}{\statevar{int}}
\newcommand{\G}{\Gamma}
\newcommand{\Ga}{\widehat{\Gamma}}
\newcommand{\Sig}{\Sigma}
\newcommand{\Gam}{\Gamma}
\newcommand{\Sigs}{\st{\Sigma}}
\newcommand{\Gams}{\st{\Gamma}}
\newcommand{\createact}{\mathsf{create}}
\newcommand{\move}{\action{move}}
\newcommand{\terminate}{\action{terminate}}
\newcommand{\autstates}[1]{\ms{states}({#1})}
\newcommand{\autstart}[1]{\ms{start}({#1})}
\newcommand{\autsteps}[1]{\ms{steps}({#1})}
\newcommand{\states}[1]{\ms{states}({#1})}
\newcommand{\start}[1]{\ms{start}({#1})}
\newcommand{\steps}[1]{\ms{steps}({#1})}
\newcommand{\first}[1]{\ms{first}({#1})}
\newcommand{\last}[1]{\ms{last}({#1})}
\newcommand{\llas}[2]{\mbox{$\, \stackrel{#1}{\longrightarrow}_{#2} \,$}}
\newcommand{\ctrans}[2]{\mbox{$\, \stackrel{#1}{\Longrightarrow}_{#2} \,$}}
\renewcommand{\b}{\beta}
\newcommand{\gam}{\gamma}
\newcommand{\g}{\gamma}
\newcommand{\aut}[1]{\ms{aut}({#1})}
\newcommand{\loc}{\statevar{location}}
\newcommand{\Car}{\ms{Car}}
\newcommand{\TransOne}{\ms{Trans1}}
\newcommand{\TransTwo}{\ms{Trans2}}
\newcommand{\Control}{\ms{Control}}
\newcommand{\lose}{\action{lose}}
\newcommand{\loseOne}{\action{lose}_1}
\newcommand{\loseTwo}{\action{lose}_2}
\newcommand{\gain}{\action{gain}}
\newcommand{\gainOne}{\action{gain}_1}
\newcommand{\gainTwo}{\action{gain}_2}
\newcommand{\talk}{\action{talk}}
\newcommand{\talkOne}{\action{talk}_1}
\newcommand{\talkTwo}{\action{talk}_2}
\newcommand{\switch}{\action{switch}}
\newcommand{\switchOne}{\action{switch}_1}
\newcommand{\switchTwo}{\action{switch}_2}
\newcommand{\assigned}{\statevar{assigned}}
\newcommand{\transferring}{\statevar{transferring}}
\newcommand{\activ}{\statevar{active}}
\newcommand{\transmitter}{\statevar{transmitter}}
\newcommand{\Spec}{\ms{Spec}}
\newcommand{\SpcTwo}{\ms{Spec2}}
\newcommand{\SpcThree}{\ms{Spec3}}
\newcommand{\Impl}{\ms{Impl}}
\newcommand{\ImplP}{\ms{Impl'}}
\newcommand{\ImpTwo}{\ms{Impl2}}
\newcommand{\ImpThree}{\ms{Impl3}}
\newcommand{\ClientAgent}{\ms{ClientAgt}}
\newcommand{\RequestAgent}{\ms{ReqAgt}}
\newcommand{\DatabaseAgent}{\ms{DBAgt}}
\newcommand{\tkt}{\statevar{tkt}}
\newcommand{\rcreated}{\statevar{created}}
\newcommand{\done}{\statevar{done}}
\newcommand{\okflts}{\statevar{okflts}}
\newcommand{\ordered}{\statevar{ordered}}
\newcommand{\queried}{\statevar{queried}}
\newcommand{\reqset}{\statevar{reqs}}
\newcommand{\respset}{\statevar{resps}}
\newcommand{\status}{\statevar{status}}
\newcommand{\trans}{\statevar{trans}}
\newcommand{\rcvd}{\statevar{received}}
\newcommand{\avail}{\statevar{avail}}
\newcommand{\orders}{\statevar{orders}}
\newcommand{\conforms}{\ms{conforms}}
\newcommand{\DBagts}{\mathcal{D}\!-\!\mathrm{remaining}}
\newcommand{\DBrem}{\mathcal{D}\!-\!\mathrm{remaining}}
\newcommand{\notsubmitted}{\mathrm{notsubmitted}}
\newcommand{\submitted}{\mathrm{submitted}}
\newcommand{\computed}{\mathrm{computed}}
\newcommand{\replied}{\mathrm{replied}}
\newcommand{\purchased}{\mathrm{purchased}}
\newcommand{\failed}{\mathrm{failed}}
\newcommand{\unknown}{\mathrm{unknown}}
\newcommand{\nextt}{\mathrm{next}}
\newcommand{\fltinf}{\ms{f}}
\newcommand{\fltdesc}{\ms{fd}}
\newcommand{\flights}{\ms{flts}}
\newcommand{\maxprice}{\ms{mp}}
\newcommand{\price}{p}
\newcommand{\ok}{\ms{ok?}}
\newcommand{\adjustsig}{\action{adjustsig}}
\newcommand{\clientrequest}{\action{request}}
\newcommand{\response}{\action{response}}
\newcommand{\reqagentresponse}{\action{req-agent-response}}
\newcommand{\DBselect}{\action{select}}
\newcommand{\DBquery}{\action{query}}
\newcommand{\DBinform}{\action{inform}}
\newcommand{\DBbuy}{\action{buy}}
\newcommand{\DBconf}{\action{conf}}
\newcommand{\flightdescs}{\mathcal{F}}
\newcommand{\UnivDBagts}{\mathcal{D}}
\newcommand{\smpage}{\noindent \parbox{\textwidth}}
\newcommand{\intr}{\emph}
\newcommand{\intrdef}{\emph}
\newcommand{\lb}{\linebreak}
\newcommand{\sms}{\smallskip}
\begin{document}

\begin{titlepage}

\begin{center}

{\LARGE Dynamic Input/Output Automata: a Formal and Compositional Model for Dynamic Systems}\\[4ex]

\begin{minipage}[t]{3in}
\begin{center}
{\large \textit{Paul C. Attie}}
\vskip 0.05in
{\large Department of Computer Science}\\
{\large American University of Beirut}\\
\texttt{paul.attie@aub.edu.lb}
\end{center}
\end{minipage}
\hspace{0.4in}
\begin{minipage}[t]{3in}
\begin{center}
{\large \textit{Nancy A. Lynch}} \\
\vskip 0.05in
{\large MIT Computer Science and Artificial Intelligence Laboratory}\\
\texttt{lynch@csail.mit.edu}
\end{center}
\end{minipage}

15 February, 2016

\end{center}

\begin{abstract}

We present dynamic I/O automata (DIOA), a compositional model of dynamic systems, based on I/O automata.  In our model,
automata can be created and destroyed dynamically, as computation proceeds.  In addition, an automaton can dynamically
change its signature, that is, the set of actions in which it can participate.  This allows us to model mobility, by
enforcing the constraint that only automata at the same location may synchronize on common actions.

Our model features operators for \intr{parallel composition}, \intr{action hiding}, and \intr{action renaming}.  It also
features a notion of \emph{automaton creation}, and a notion of \intr{trace inclusion} from one dynamic system to another,
which can be used to prove that one system implements the other.  Our model is hierarchical: a dynamically changing
system of interacting automata is itself modeled as a single automaton that is ``one level higher.'' This can be
repeated, so that an automaton that represents such a dynamic system can itself be created and destroyed. We can thus
model the addition and removal of entire subsystems with a single action.

We establish fundamental compositionality results for DIOA: if one component is replaced by another whose traces are a
subset of the former, then the set of traces of the system as a whole can only be reduced, and not increased, i.e., no
new behaviors are added.  That is, parallel composition, action hiding, and action renaming, are all monotonic with
respect to trace inclusion.  We also show that, under certain technical conditions, automaton creation is monotonic with respect
to trace inclusion: if a system creates automaton $A_i$ instead of (previously) creating automaton $A'_i$, and 
the traces of $A_i$ are a subset of the traces of $A'_i$,
then the set of traces of the overall system is possibly reduced, but not increased.
Our trace inclusion results imply that trace equivalence is a congruence relation with respect to 
parallel composition, action hiding, and action renaming.

Our trace inclusion results enable a design and refinement methodology based solely on the notion of externally visible
behavior, and which is therefore independent of specific methods of establishing trace inclusion.  It permits the
refinement of components and subsystems in isolation from the entire system, and provides more flexibility in refinement
than a methodology which is, for example, based on the monotonicity of forward simulation with respect to parallel
composition. In the latter, every automaton must be refined using forward simulation, whereas in our framework different
automata can be refined using different methods.

The DIOA model was defined to support the analysis of {\em mobile agent systems\/}, in a joint project with researchers
at Nippon Telegraph and Telephone.  It can also be used for other forms of dynamic systems, such as systems described by
means of object-oriented programs, and systems containing services with changing access permissions.

\end{abstract}

Keywords: dynamic systems, formal methods, semantics, automata,
process creation, mobility\\

This work is licensed under the Creative Commons
Attribution-NonCommercial-NoDerivatives 4.0 International License. To
view a copy of this license, visit
\url{http://creativecommons.org/licenses/by-nc-nd/4.0/} or send a letter to
Creative Commons, PO Box 1866, Mountain View, CA 94042, USA.

Published version available at \url{http://dx.doi.org/10.1016/j.ic.2016.03.008}

\end{titlepage}


\section{Introduction}


Many modern distributed systems are \intr{dynamic}: they involve changing sets of components, which
are created and destroyed as computation proceeds, and changing capabilities for existing
components.  For example, programs written in object-oriented languages such as Java involve objects
that create new objects as needed, and create new references to existing objects.  Mobile agent
systems involve agents that create and destroy other agents, travel to different network locations,
and transfer communication capabilities.


To describe and analyze such distributed systems rigorously, one needs an appropriate
\emph{mathematical foundation}: a state-machine-based framework that allows modeling of individual
components and their interactions and changes.  The framework should admit standard modeling methods
such as parallel composition and levels of abstraction, and standard proof methods such as
invariants and simulation relations.
As dynamic systems are even more complex than static distributed systems, the development of
practical techniques for specification and reasoning is imperative. For static distributed systems
and concurrent programs, compositional reasoning is proposed as a means of reducing the proof
burden: reason about small components and subsystems as much as possible, and about the large global
system as little as possible. For dynamic systems, compositional reasoning is \emph{a priori}
necessary, since the environment in which dynamic software components (e.g., software agents)
operate is continuously changing. For example, given a software agent $B$, suppose we then refine
$B$ to generate a new agent $A$, and we prove that $A$'s externally visible behaviors are a subset
of $B$'s. We would like to then conclude that replacing $B$ by $A$, within \emph{any} environment
does not introduce new, and possibly erroneous, behaviors.


One issue that arises in systems where components can be created dynamically is
that of \emph{clones}. Suppose that a particular component is created twice, in
succession. In general, this can result in the creation of two (or more)
indistinguishable copies of the component, known as clones.
We make the fundamental assumption in our model that this situation does not
arise: components can always be distinguished, for example, 
by a logical timestamp at the time of creation.
This absence of clones assumption does not preclude reasoning about situations
in which an automaton $A_1$ cannot be distinguished from another
automaton $A_2$ \emph{by
the other automata in the system}. This could occur, for example, due to a malicious
host which ``replicates'' agents that visit it. We distinguish between such
replicas at the meta-theoretic level by assigning unique identifiers to each.
These identifiers are not available to the other automata in the system, which
remain unable to tell $A_1$ and $A_2$ apart, for example in the sense of the
``knowledge'' \cite{HM90} about $A_1$ and $A_2$ which the other automata possess.


Static mathematical models like I/O automata \cite{LT89}
could be used to model dynamic systems, with the addition of some extra
structure (special Boolean flags) for modeling dynamic aspects. 
For example, in \cite{LMWF},
dynamically-created transactions were modeled as if they existed all
along, but were ``awakened'' upon execution of special {\em create\/}
actions.
However, dynamic behavior has by now become so prevalent that it
deserves to be modeled directly. 
The main challenge is to identify a small, simple set of constructs that
can be used as a basis for describing most interesting dynamic systems.


In this paper, we present our proposal for such a model: the \emph{Dynamic I/O Automaton (DIOA)
  model\/}.  Our basic idea is to extend I/O automata with the ability to change their signatures
dynamically, and to create other I/O automata.  We then combine such extended automata into global
\emph{configurations\/}.  Our model provides: 
\bn
\item parallel composition, action hiding, and action renaming operators;
\item the ability to dynamically change the signature of an automaton;
    that is, the set of actions in which the automaton can
    participate;
\item the ability to create and destroy automata dynamically, as
computation proceeds; and
\item  a notion of externally visible behavior based on sets of traces.
\en
Our notion of externally visible behavior provides a foundation for abstraction, and a notion of
behavioral subtyping by means of trace inclusion.
Dynamically changing signatures allow us to model mobility, by
enforcing the constraint that only automata at the same location may
synchronize on common actions.
%
This capability is not present in a static model with extra structure (\eg boolean flags). Modeling a
mobile agent in a static setting would be difficult at best, and would result in a contrived
and over-complicated model (how would you simulate location and signature change?) that would lose the benefits
of simple and direct representation that our model affords.



Our model is hierarchical: a dynamically changing system of interacting
automata is itself modeled as a single automaton that is ``one level
higher.'' This can be repeated, so that an automaton that represents
such a dynamic system can itself be created and destroyed. This allows
us to model the addition and removal of entire subsystems with a single
action. This would also be quite difficult to represent naturally in a static model.

As in I/O automata \cite{LT89,Lyn96}, there are three kinds of actions: input, output,
and internal. A trace of an execution results by removing all states
and internal actions.  We use the set of traces of an automaton as our
notion of external behavior. 
We show that parallel composition is monotonic with
respect to trace inclusion: if we have two systems
$A = A_1 \pl \cdots \pl A_i \pl \cdots \pl A_n$ and
$A' = A_1 \pl \cdots \pl A'_i \pl \cdots \pl A_n$
consisting of $n$ automata, executing in parallel, 
then if the traces of $A_i$ are a subset of the traces of  $A'_i$ (which it
``replaces''), then the traces of $A$ are a subset of the traces of $A'$.
We also show that action hiding (convert output actions to internal
actions) and action renaming (change action names using an injective map)
are monotonic with respect to trace inclusion, and, finally, 
we show that, if we have a system $X$ in which an automaton $A$ is created,
and a system $Y$ in which an automaton $B$ is created ``instead of $A$'', and if the traces of $A$
are a subset of the traces of $B$, then the traces of $X$ will be a subset of
the traces of $Y$, but only under certain conditions.
Specifically, in the system $Y$, the creation of automaton $B$ at some point must be
correlated with the finite trace of $Y$ up to that point. Otherwise,
monotonicity of trace inclusion can be violated by having the
system $X$ create the replacement $A$ in more contexts than those
in which $Y$ creates $B$, resulting in $X$ possessing some traces which are
not traces of $Y$. This phenomenon
appears to be inherent in situations where the creation of new
automata can depend upon global conditions (as in our model) and can be
independent of the externally visible behavior (trace).
Our monotonicity results imply that trace
equivalence is a congruence with respect to parallel composition,
action hiding, and action renaming.

Our results enable a refinement methodology for
dynamic systems that is independent of specific methods of
establishing trace inclusion.  Different automata in the system can be
refined using different methods, e.g., different simulation relations
such as forward simulations or backward simulations, or by using methods
not based on simulation relations.
This provides more flexibility in refinement than a methodology which,
for example, shows that forward simulation is monotonic with respect
to parallel composition, since in the latter every 
automaton must be refined using forward simulation.


We defined the DIOA model initially to support the analysis of {\em mobile
agent systems\/}, in a joint project with researchers at Nippon Telephone
and Telegraph.
Creation and destruction of agents are modeled directly within the DIOA
model.  Other important agent concepts such as changing locations and
capabilities are described in terms of changing signatures, using
additional structure.



This paper is organized as follows. 
Section~\ref{sec:sioa} presents
\intr{signature I/O automata} (SIOA), which are I/O automata that also have the ability
to change their signature, and also defines a parallel composition, action hiding, and action renaming operators for them.
Section~\ref{sec:SIOA:compositional-reasoning} shows that parallel composition of SIOA is monotonic with respect to 
trace inclusion.
Section~\ref{sec:SIOA:hiding-and-renaming-monotonic} establishes 
that action hiding and action renaming are monotonic with respect to trace inclusion.
It also shows that trace equivalence is a congruence with respect to 
parallel composition, action hiding, and action renaming.
%
%
Section~\ref{sec:CA} presents \intr{configuration automata} (CA), which have the ability to dynamically create SIOA as
execution proceeds. Section~\ref{sec:CA} also extends the parallel composition, action hiding, and action renaming
operators to configuration automata, and shows that configuration automata inherit the trace monotonicity results of
SIOA.
Section~\ref{sec:CA:creation-substitutivity} shows that SIOA creation is monotonic with respect to trace inclusion,
under certain technical conditions.
Section~\ref{subsec:location} discusses how mobility and locations can
be modeled in DIOA.
Section~\ref{sec:travel-agent-example} presents
an example: an agent whose purpose is to traverse a set of
databases in search of a satisfactory airline flight, and to purchase
such a flight if it finds it.
Section~\ref{sec:related} discusses related work.
Section~\ref{sec:research} discusses further research and presents our
conclusions.

\section{Signature I/O Automata}
\label{sec:sioa}


We introduce signature input-output automata (SIOA).
We assume the existence of a set $\autids$ of unique SIOA identifiers,
an underlying universal set $\auts$ of SIOA,
and a mapping $\autm: \autids \mapsto \auts$.
$\aut{A}$ is the SIOA with identifier $A$.
We use ``the automaton $A$'' to mean
``the SIOA with identifier $A$''. We use the letters $A,
B$, possibly subscripted or primed, for SIOA identifiers.

The executable actions of an SIOA $A$ are
drawn from a signature 
        $\ssig{A}{s}$ = $\langle \sin{A}{s}$, $\sout{A}{s}$, $\sint{A}{s} \rangle$,
called the \emph{state signature}, which 
is a function of the current state $s$.
$\sin{A}{s}$, $\sout{A}{s}$, $\sint{A}{s}$ are pairwise disjoint sets
of input, output, and internal actions, respectively.
%
We define $\sext{A}{s}$, the external signature of $A$ in state $s$, to be
        $\sext{A}{s}$ = $\langle \sin{A}{s}$, $\sout{A}{s} \rangle$.

%
For any signature component, generally, the $\ \widehat{}\ $ operator yields the union of
sets of actions within the signature, e.g., 
 $\ssigacts{A}{s} =$  $\sin{A}{s}$ $\un$ $\sout{A}{s}$ $\un$ $\sint{A}{s}$.
Also define $\acts{A} = \UN_{s \in \autstates{A}} \ssigacts{A}{s}$,
that is $\acts{A}$ is the ``universal'' set of all actions that $A$
could possibly execute, in any state.


\bd[SIOA]
\label{def:SIOA}
An \emph{SIOA} $\aut{A}$ consists of the following components 
\bn

\item A set $\autstates{A}$ of states.

\item A nonempty set $\autstart{A} \sub \autstates{A}$ of start states.

\item A signature mapping $\sig{A}$ where for each $s \in \autstates{A}$,
      $\ssig{A}{s} = \tpl{\sin{A}{s},\sout{A}{s},\sint{A}{s}}$,
   where $\sin{A}{s}$, $\sout{A}{s}$, $\sint{A}{s}$ are sets of actions.

\item A transition relation
  $\autsteps{A} \sub \autstates{A} \times \acts{A} \times \autstates{A}$

\en
and satisfies the following constraints on those components:
   \bn

   \item \label{def:SIOA:sig}
   $\fa (s,a,s') \in \autsteps{A}: a \in \ssigacts{A}{s}$.

   \item \label{def:SIOA:input}
   $\fa s \in \autstates{A}: \fa a \in \sin{A}{s}, \ex s' : (s,a,s') \in \autsteps{A}$.

   \item \label{def:SIOA:disjoint}
   $\fa s \in \autstates{A}: 
	\sin{A}{s} \ints \sout{A}{s} = 
	\sin{A}{s} \ints \sint{A}{s} =
	\sout{A}{s} \ints \sint{A}{s} = 
        \emptyset$.

   \en

\ed

Constraint~\ref{def:SIOA:sig} requires that any
executed action be in the signature of the initial state of the transition.
Constraint~\ref{def:SIOA:input} extends the input
enabling requirement of I/O automata to SIOA.
Constraint~\ref{def:SIOA:disjoint} requires that in any state, an
action cannot be both an input and an output, etc.
However, the same action can be an input in one state and an output in
another.
This is in contrast to ordinary I/O automata, where the signature of
an automaton is fixed once and for all, and cannot vary with the
state. Thus, an action is either always an input, always an output, or
always an internal.

If $(s,a,s') \in \autsteps{A}$, we also write $s \llas{a}{A} s'$.
For the sake of brevity, we write $\autstates{A}$ instead of
$\autstates{\aut{A}}$, i.e., the components of an automaton are
identified by applying the appropriate selector function to the
automaton identifier, rather than the automaton itself.


\bd[Execution, trace of SIOA]
\label{def:SIOA:execution}
An execution fragment $\alpha$ of an SIOA $A$ is a nonempty (finite or infinite) sequence
$s^0 a^1 s^1 a^2 \ldots$  of alternating states and actions such that
$(s^{i-1},a^i, s^i) \in \autsteps{A}$ for each triple 
$(s^{i-1}, a^i,s^i)$ occurring in $\alpha$. Also, $\alpha$ ends in a state
if it is finite. 
An \emph{execution} of $A$ is an execution fragment of $A$
whose first state is in $\autstart{A}$.
$\execs{A}$ denotes the set of executions of SIOA $A$.

Given an execution fragment $\al = s^0 a^1 s^1 a^2 \ldots$ of $A$, the trace
of $\alpha$ in $A$ $($denoted $\traceA{\al}{A})$ is the sequence that results from
\bn


\item remove all $a^i$ such that $a^i \not\in \sextacts{A}{s^{i-1}}$,
      i.e., $a^i$ is an internal action of $A$ in state $s^{i-1}$, and then

\item replace each $s^i$ by its external signature $\sext{A}{s^i}$, and then

\item replace each maximal block $\sext{A}{s^i},\ldots,\sext{A}{s^{i+k}}$
      such that\linebreak $(\fa j : 0 \le j \le k: \sext{A}{s^{i+j}} = \sext{A}{s^i})$
      by $\sext{A}{s^i}$, 
      i.e., replace each maximal block of identical external signatures
      by a single representative. (Note: also applies to an infinite suffix of
      identical signatures, i.e., $k = \omega$.)

\en
\ed
Thus, a trace is a sequence of external actions and external
signatures that starts with an external signature. Also, if the trace 
is finite, then it ends with an external signature.
When the automaton $A$ is understood from context, we write simply $\trace{\al}$.
We need to indicate the automaton, since it is possible for two automata to have the 
same executions, but difference traces, \eg when one results from the other by action hiding (see
Section~\ref{sec:SIOA:hiding} below). 

Traces are our notion of externally visible behavior.
A trace $\beta$ of an execution $\al$ exposes the external actions along $\al$,
and the external signatures of states along $\al$,
except that repeated identical external signatures along $\al$ do not
show up in $\beta$. Thus, the external signature of the first state of
$\al$, and then all subsequent changes to the external signature, are
made visible in $\beta$. This includes signature changes caused by
internal actions, \ie these signature changes are also made visible.
$\traces{A}$, the set of traces of an SIOA $A$, is the set
	$\{ \beta ~|~ \exists \alpha \in \execs{A}: \beta = \trace{\alpha}\}$.

\paragraph{Notation.}
We write $s \llas{\alpha}{A} s'$ iff there exists an execution fragment
$\alpha$ of $A$ starting in $s$ and ending in $s'$. 
If a state $s$ lies along some execution, then we say that $s$ is
\intr{reachable}. Otherwise, $s$ is \intr{unreachable}.
%
The length $|\al|$ of a finite execution fragment $\al$ is the number of
transitions along $\al$. The length of an infinite execution fragment 
is infinite $(\omega)$. If $|\al| = 0$, then $\al$ consists of a
single state.
When we write, for example,  $0 \leq i \leq |\al|$, it is understood
that when $\al$ is infinite, that $i = |\al|$ does not arise, \ie we
consider only finite indices for states and actions along an execution.
%
If execution fragment $\al = s^0 a^1 s^1 a^2 \ldots$, then for $0 \leq i \leq |\al|$,
define $\al |_i = s^0 a^1 s^1 a^2 \ldots a^i s^i$, and for $0 \leq i,j \leq |\al| \land j < i$, define
$\execfrag{\al}{j}{i} = s^j a^{j+1} \ldots a^i s^i$.
We define a concatenation operator $\cat$ for execution fragments as follows.
If $\al' = s^0 a^1 s^1 a^2 \ldots a^i s^i$ is a finite execution fragment and
$\al'' = t^0 b^1 t^1 b^2 \ldots$ is an execution fragment, then 
$\al' \cat \al''$ is defined to be 
the execution fragment 
$s^0 a^1 s^1 a^2 \ldots a^i t^0  b^1 t^1 b^2 \ldots$ only when $s^i = t^0$.
If $s^i \neq t^0$, then $\al' \cat \al''$ is undefined.
We also use $\al' \cat (a, s)$ to mean  $s^0 a^1 s^1 a^2 \ldots a^i s^i a\, s$, \ie we concatenate a transition to the end of $\al'$.
%
Let $\al, \al'$ be execution fragments. Then $\al$ is a proper prefix of $\al'$ iff there exists an execution fragment $\al''$ such that
$\al = \al' \cat \al''$. We write $\al < \al'$ in this case.  If $\al < \al'$ or $\al = \al'$, then we write $\al \le \al'$, and say that $\al$ is a
prefix of $\al'$. We also overload $\cat$ and use it for concatenating traces and parts of traces (\ie single signatures and actions), in the obvious manner.

Throughout the paper, we will use a superscript, \ie $s^j$, to mean
the $j$'th state along an execution, and we will use a subscript, \ie
$s_i$, to mean the state of SIOA $A_i$ (\eg in a parallel composition
$A = A_1 \pl \cdots \pl A_i \pl \cdots \pl A_n$). When we require both
usages, we will use $s_i^j$, which means the $A_i$-component of the
$j$'th state along an execution. For consistency of notation, we also
use a superscript, \ie $a^j$, to mean the $j$'th action along an execution.

Let $\rng{k}{\l} \df \{ i ~|~ k \leq i \leq \l\}$.
We use \qf{Q}{i}{r(i)}{e(i)} to indicate quantification with
quantifier $Q$, bound variable $i$, range $r(i)$, and quantified
expression $e(i)$. For compactness, we sometimes give the bound
variable and range as a subscript.

\subsection{Parallel Composition of Signature I/O Automata}
\label{sec:SIOA:composition}

The operation of composing a finite number $n$ of SIOA together gives the technical
definition of the idea of $n$ SIOA executing concurrently. 
As with ordinary I/O automata, we require that the signatures of the
SIOA be compatible, in the usual sense that there are no common
outputs, and no internal action of one automaton is an action of another.

\bd[Compatible signatures]
\label{def:SIOA:sig-compatible}
Let $S$ be a set of signatures. Then $S$ is 
\intrdef{compatible} iff, for all $sig \in S$, $sig' \in S$, 
where $sig = \tpl{in,out,int}$, $sig' = \tpl{in',out',int'}$ and
$sig \ne sig'$, we have:
\bn

\item $(in \un out \un int) \ints int' = \emptyset$, and

\item $out \ints out' = \emptyset$.

\en
\ed

Since the signatures of SIOA vary with the state, we require
compatibility for all possible combinations of states of the automata
being composed. Our definition is ``conservative'' in that it requires
compatibility for all combinations of states, not just those that are
reachable in the execution of the composed automaton.
This results in significantly simpler and cleaner definitions, and
does not detract from the applicability of the theory.

\begin{definition}[Compatible SIOA]
\label{def:compatible-SIOA}
Let $A_1, \ldots, A_n$, be SIOA.
$A_1, \ldots, A_n$ are \emph{compatible} if and only if
for every 
$\tpl{s_1,\ldots,s_n} \in \autstates{A_1} \times \cdots \times \autstates{A_n}$, 
$\{\ssig{A_1}{s_1}, \ldots, \ssig{A_n}{s_n}\}$ is a compatible set of signatures.
\end{definition}



\begin{definition}[Composition of Signatures]
\label{def:SIOA:sig-composition}
Let $\Sig = (in,out,int)$ and $\Sig' = (in',out',int')$ be
compatible signatures.  Then we define their composition
$\Sig \times \Sig' = (in \un in' - (out \un out'),\ out \un out',\ int \un int')$.
\end{definition}
Signature composition is clearly commutative and associative.
We therefore use $\prod$ for the $n$-ary version of $\times$.
As with I/O automata, SIOA synchronize on same-named actions.
To devise a theory that accommodates the hierarchical construction of
systems, we ensure that the composition of $n$ SIOA is itself an SIOA. 
\begin{definition}[Composition of SIOA]
\label{def:SIOA:composition}
Let $A_1, \ldots, A_n$, be compatible SIOA.
Then $A = A_1 \pl \cdots \pl A_n$ is the state-machine
consisting of the following components:
\begin{enumerate}

\item A set of states $\autstates{A} = \autstates{A_1} \times \cdots \times \autstates{A_n}$.

\item A set of start states $\autstart{A} = \autstart{A_1} \times \cdots \times \autstart{A_n}$.

\item \label{def:SIOA:composition:sig}
A signature mapping $\ms{sig}(A)$ as follows. For each
$s = \tpl{s_1,\ldots,s_n} \in \autstates{A}$,
          $\ssig{A}{s} = \ssig{A_1}{s_1} \times  \cdots \times \ssig{A_n}{s_n}$.

\item \label{def:SIOA:composition:trans}
A transition relation
          $\autsteps{A} \sub \autstates{A} \times \acts{A} \times \autstates{A}$
      which is the set of all 
          $(\tpl{s_1,\ldots,s_n},a,\tpl{t_1,\ldots,t_n})$ such that
   \bn
   \item $a \in \ssigacts{A_1}{s_1} \un \ldots \un \ssigacts{A_n}{s_n}$, and
   \item for all $i \in \oneton :$
            if $a \in \ssigacts{A_i}{s_i}$, then $(s_i,a,t_i) \in \autsteps{A_i}$, 
            otherwise $s_i = t_i$.
   \en

\end{enumerate}

\end{definition}
If $s = \tpl{s_1,\ldots,s_n} \in \autstates{A}$, then define 
$s \pj A_i = s_i$, for $i \in \oneton$.

Since our goal is to deal with dynamic systems, we must define the composition
of a variable number of SIOA at some point. We do this below in
Section~\ref{sec:CA}, where we deal with creation and destruction of SIOA.
Roughly speaking, parallel composition is intended to model the composition of a
finite number of large systems, for example a local-area network together with
all of the attached hosts.
Within each system however, an unbounded number of new components, for example
processes, threads, or software agents, can be created. 
Thus, at any time, there is a finite but unbounded number of components in 
each system, and a finite, fixed, number of ``top level'' systems.

\bp Let $A_1, \ldots, A_n$, be compatible SIOA.
Then $A = A_1 \pl \cdots \pl A_n$ is an SIOA.
\ep
\bpr
We must show that $A$ satisfies the constraints of Definition~\ref{def:SIOA}.
We deal with each constraint in turn.

\cnst{Constraint~\ref{def:SIOA:sig}}: Let $(s,a,s') \in \autsteps{A}$. Then,
$s$ can be written as $\tpl{s_1,\ldots,s_n}$.
From Definition~\ref{def:SIOA:composition},
clause~\ref{def:SIOA:composition:trans}, $a \in \ssigacts{A_1}{s_1}
\un \ldots \un \ssigacts{A_n}{s_n}$
From Definition~\ref{def:SIOA:composition},
clause~\ref{def:SIOA:composition:sig}, $\ssigacts{A_1}{s_1}
\un \ldots \un \ssigacts{A_n}{s_n} = \ssigacts{A}{s}$. Hence 
$a \in \ssigacts{A}{s}$.

\cnst{Constraint~\ref{def:SIOA:input}}: Let $s \in \autstates{A}$, $a \in
\sin{A}{s}$. Then,
$s$ can be written as $\tpl{s_1,\ldots,s_n}$.
From Definition~\ref{def:SIOA:composition},
clause~\ref{def:SIOA:composition:sig}, 
$a \in (\UN_{1 \leq i \leq n} \sin{A_i}{s_i}) - \sout{A}{s}$.
Hence, there exists $\varphi \sub \n$ such that 
$\fa i \in \varphi: a \in \sin{A_i}{s_i}$, and
$\fa i \in \n - \varphi: a \not\in \ssigacts{A_i}{s_i}$.
Since each $A_i$ satisfies Constraint~\ref{def:SIOA:input} of
Definition~\ref{def:SIOA}, we have:
\blq{}
$\fa i \in \varphi: \ex t_i : (s_i,a,t_i) \in \autsteps{A_i}$
\elq
By Definition~\ref{def:SIOA:composition},
Clause~\ref{def:SIOA:composition:trans},
\blq{}
   $\ex t: (s,a,t) \in \autsteps{A}$, where 
      $\fa i \in \varphi: t \pj i = t_i$, and 
      $\fa i \in \n - \varphi: t \pj i = s_i$.
\elq
Hence Constraint~\ref{def:SIOA:input} is satisfied.

\cnst{Constraint~\ref{def:SIOA:disjoint}}: 
From Definitions~\ref{def:SIOA:sig-composition} and
\ref{def:SIOA:composition}, it follows that the sets of input and output actions
of $A$ in any state are disjoint.
Each $A_i$ is an SIOA and so satisfies Constraint~\ref{def:SIOA:disjoint} of Definition~\ref{def:SIOA}. 
From this and 
Definitions~\ref{def:SIOA:sig-compatible},
\ref{def:compatible-SIOA},
\ref{def:SIOA:sig-composition}, and
\ref{def:SIOA:composition},
 it follows that the set of internal actions of $A$ in any state has
 no action in common with either the input actions or the output
 actions. Hence $A$ 
satisfies Constraint~\ref{def:SIOA:disjoint}.
\epr

\subsection{Action Hiding for Signature I/O Automata}
\label{sec:SIOA:hiding}

The operation of action hiding allows us to convert  output
actions into internal actions, and is useful in specifying the set of
actions that are to be visible at the interface of a system.
\begin{definition}[Action hiding for SIOA]
\label{def:SIOA:hiding}
Let $A$ be an SIOA and $\HActs$ a set of actions.
Then $A \hide \HActs$ is the state-machine given by:
\begin{enumerate}

\item A set of states $\autstates{A \hide \HActs} = \autstates{A}$.
            
\item A set of start states $\autstart{A \hide \HActs} = \autstart{A}$.


\item A signature mapping $\ms{sig}(A)$ as follows. For each $s \in \autstates{A}$,\\
        $\ssig{A \hide \HActs}{s} =
         \tpl{\sin{A \hide \HActs}{s},\sout{A \hide \HActs}{s},\sint{A \hide \HActs}{s}}$,
where
   \bn

   \item $\sout{A \hide \HActs}{s} = \sout{A}{s} - \HActs$,

   \item $\sin{A \hide \HActs}{s} =  \sin{A}{s}$, and

   \item $\sint{A \hide \HActs}{s} = \sint{A}{s} \un (\sout{A}{s} \ints \HActs)$.

   \en

\item A transition relation
  $\autsteps{A \hide \HActs} = \autsteps{A}$.

\end{enumerate}

\end{definition}

\bp Let $A$ be an SIOA and $\HActs$ a set of actions.
Then $A \hide \HActs$ is an SIOA.
\ep
\bpr
We must show that $A \hide \HActs$ satisfies the constraints of
Definition~\ref{def:SIOA}. We deal with each constraint in turn.

\cnst{Constraint~\ref{def:SIOA:sig}}: 
From Definition~\ref{def:SIOA:hiding}, we have, for any $s \in \autstates{A \hide \HActs}$:
$\ssigacts{A \hide \HActs}{s}$ = 
$(\sout{A}{s} - \HActs) \un \sin{A}{s} \un (\sint{A}{s} \un (\sout{A}{s} \ints \HActs))$ =
$((\sout{A}{s} - \HActs) \un (\sout{A}{s} \ints \HActs)) \un \sin{A}{s} \un \sint{A}{s}$ =
$\sout{A}{s} \un \sin{A}{s} \un \sint{A}{s}$ =
$\ssigacts{A}{s}$.

Since $A$ is an SIOA, we have $\fa (s,a,s') \in \autsteps{A}: a \in \ssigacts{A}{s}$.
From Definition~\ref{def:SIOA:hiding}, $\autsteps{A \hide \HActs} = \autsteps{A}$.
Hence, $\fa (s,a,s') \in \autsteps{A \hide \HActs}: a \in \ssigacts{A \hide \HActs}{s}$.
Thus, Constraint~\ref{def:SIOA:sig} holds for ${A \hide \HActs}$.

\cnst{Constraint~\ref{def:SIOA:input}}:
From Definition~\ref{def:SIOA:hiding}, 
$\autstates{A \hide \HActs} = \autstates{A}$,
$\autsteps{A \hide \HActs} = \autsteps{A}$, and
for all $s \in \autstates{A \hide \HActs}$, $\sin{A \hide \HActs}{s} =  \sin{A}{s}$.

Since $A$ is an SIOA, we have Constraint~\ref{def:SIOA:input} for $A$:
\blq{}
$\fa s \in \autstates{A}, \fa a \in \sin{A}{s}, \ex s' : (s,a,s') \in \autsteps{A}$.
\elq
Hence, we also have
\blq{}
$\fa s \in \autstates{A \hide \HActs}, \fa a \in \sin{A \hide \HActs}{s},
         \ex s' : (s,a,s') \in \autsteps{A \hide \HActs}$.
\elq
Hence Constraint~\ref{def:SIOA:input} holds for ${A \hide \HActs}$.

\cnst{Constraint~\ref{def:SIOA:disjoint}}: 
$A$ is an SIOA and so satisfies Constraint~\ref{def:SIOA:disjoint} of 
Definition~\ref{def:SIOA}. 
Definition~\ref{def:SIOA:hiding} states that, in every state $s$, some actions are removed
from the output action set and added to the internal action set. 
Hence the sets of input, output, and internal actions remain disjoint.
So $A \hide \HActs$ also satisfies Constraint~\ref{def:SIOA:disjoint}.
\epr

\subsection{Action Renaming for Signature I/O Automata}
\label{sec:SIOA:renaming}

The operation of action renaming allows us to rename actions
uniformly, that is, all occurrences of an action name are replaced
by another action name, and the mapping is also one-to-one, 
so that different actions are not identified (mapped to the same action).
This is useful in defining ``parameterized'' systems, in which there
are many instances of a ``generic'' component, all of which have
similar functionality. Examples of this include the servers in a
client-server system, the components of a distributed database system,
and hosts in a network. 
\begin{definition}[Action renaming for SIOA]
\label{def:SIOA:renaming}
Let $A$ be an SIOA and
let $\rho$ be an injective mapping from actions to actions whose
domain includes $\acts{A}$.	
Then $\ren{A}$ is the state machine given by:
\begin{enumerate}
\item $\autstart{\ren{A}} = \autstart{A}$.

\item $\autstates{\ren{A}} = \autstates{A}$.
            

\item for each $s \in \autstates{A}$,
             $\ssig{\ren{A}}{s} = \tpl{\sin{\ren{A}}{s},\sout{\ren{A}}{s},\sint{\ren{A}}{s}}$, where
    \bn

    \item $\sout{\ren{A}}{s} =  \ren{\sout{A}{s}}$,

    \item $\sin{\ren{A}}{s} =  \ren{\sin{A}{s}}$, and

    \item $\sint{\ren{A}}{s} = \ren{\sint{A}{s}}$.

    \en

\item A transition relation
   $\autsteps{\ren{A}} = \{(s, \ren{a}, t) ~|~ (s, a, t) \in \autsteps{A}\}$.

\end{enumerate}
\end{definition}
Here we write $\rho(\Sigma) = \set{ \rho(a) \sth a \in \Sigma }$, \ie
we extend $\rho$ to sets of actions element-wise.

\bp Let $A$ be an SIOA and
let $\rho$ be an injective mapping from actions to actions whose
domain includes $\acts{A}$.  
Then, $\ren{A}$ is an SIOA.
\ep
\bpr
We must show that $\ren{A}$ satisfies the constraints of
Definition~\ref{def:SIOA}. We deal with each constraint in turn.

\cnst{Constraint~\ref{def:SIOA:sig}}: 
From Definition~\ref{def:SIOA:renaming}, we have, for any
$s \in \autstates{\ren{A}}$:
$\ssigacts{\ren{A}}{s}$ = 
$\sout{\ren{A}}{s} \un \sin{\ren{A}}{s} \un \sint{\ren{A}}{s}$ =
$\ren{\sout{A}{s}} \un \ren{\sin{A}{s}} \un \ren{\sint{A}{s}}$ =
$\ren{\ssigacts{A}{s}}$.

Since $A$ is an SIOA, we have
$\fa (s,a,s') \in \autsteps{A}: a \in \ssigacts{A}{s}$.
From Definition~\ref{def:SIOA:renaming},
   $\autsteps{\ren{A}} = \{(s, \ren{a}, t) ~|~ (s, a, t) \in \autsteps{A}\}$

Hence, if $(s, \ren{a}, t)$ is an arbitrary element of
$\steps{\ren{A}}$, then $(s, a, t) \in \autsteps{A}$, and so
$a \in \ssigacts{A}{s}$. Hence $\ren{a} \in \ren{\ssigacts{A}{s}}$.
Since $\ren{\ssigacts{A}{s}} = \ssigacts{\ren{A}}{s}$, we conclude
$\ren{a} \in \ssigacts{\ren{A}}{s}$.
Hence, $\fa (s,\ren{a},s') \in \autsteps{\ren{A}}:
   \ren{a} \in \ssigacts{\ren{A}}{s}$.
Thus, Constraint~\ref{def:SIOA:sig} holds for $\ren{A}$.

\cnst{Constraint~\ref{def:SIOA:input}}:
From Definition~\ref{def:SIOA:renaming}, 
$\autstates{\ren{A}} = \autstates{A}$,
$\autsteps{\ren{A}} = \{(s, \ren{a}, t) ~|~ (s, a, t) \in \autsteps{A}\}$,
and
for all $s \in \autstates{\ren{A}}$, 
     $\sin{\ren{A}}{s} =  \ren{\sin{A}{s}}$.

Let $s$ be any state of $\ren{A}$, and let $b \in \sin{\ren{A}}{s}$. 
Then $b = \ren{a}$ for some $a \in \sin{A}{s}$.
We have 
$(s, a, t) \in \autsteps{A}$ for some $t$, by Constraint~\ref{def:SIOA:input} for $A$.
Hence $(s, \ren{a}, t) \in \autsteps{\ren{A}}$. 
Hence $(s, b, t) \in \autsteps{\ren{A}}$. 
Hence Constraint~\ref{def:SIOA:input} holds for $\ren{A}$.


\cnst{Constraint~\ref{def:SIOA:disjoint}}: 
$A$ is an SIOA and so satisfies Constraint~\ref{def:SIOA:disjoint} of
Definition~\ref{def:SIOA}. 
From this and Definition~\ref{def:SIOA:renaming}
and the requirement that $\rho$ be injective, 
it is easy to see that $\ren{A}$ also
satisfies Constraint~\ref{def:SIOA:disjoint}.
\epr

\subsection{Example: mobile phones}

We illustrate SIOA using the mobile phone example from Milner
\cite[chapter 8]{Mil99}.
There are four SIOA:
\bn
\item $\Car$: a car containing a mobile phone
\item $\TransOne, \TransTwo$: two transmitter stations
\item $\Control$: a control station
\en
$\Control$, $\TransOne$, and $\Car$ are given in 
Figures~\ref{fig:control}, \ref{fig:transOne}, and \ref{fig:car} respectively.
$\TransTwo$ results by applying renaming to $\TransOne$, and changing
the initial state appropriately, since initially $\Car$ is
communicating with $\TransOne$.

We use the usual I/O automata ``precondition effect'' pseudocode
\cite{Lyn96}, augmented by additional constructs to describe signature
changes and SIOA creation, as follows.
We use ``state variables'' $\insig$, $\outsig$, and $\intsig$ to
denote the current sets of input, output, and internal
actions in the SIOA state signature.
The \textbf{Signature} section of the pseudocode for each SIOA
describes $\acts{A}$, \ie the ``universal'' set of all actions that $A$
could possibly execute, in any state.
We partition this description into the input, output, and internal
components of the signature. We indicate the signature components in every start state
using an ``initially'' keyword at the end of the ``Input,''
``Output,'' and ``Internal'' sections, followed by the actions
present in the signature of every start state. This convention restricts all
start states to have the same signature. We emphasize that this is a restriction
of the pseudocode only, and not of the underlying SIOA model.
When a signature component does not change, we replace the keyword ``initially''
by the keyword ``constant'' as a convenient reminder of this.

At any time, $\Car$ is connected to either $\TransOne$ or
$\TransTwo$. Normal conversation is conducted using a $\talk$ action.
Under direction of $\Control$ 
(via $\lose$ and $\gain$ actions)
the transmitters transfer $\Car$ between them, using $\switch$ actions.
Upon receiving a $\lose$ input from $\Control$, a transmitter goes on
to send a $\switch$ to $\Car$, and also removes the $\talk$ and
$\switch$ actions from its signature. Upon receiving a $\switch$ from
a transmitter, $\Car$ will remove the $\talk$ and $\switch$ actions
for that transmitter from its signature, and add the 
$\talk$ and $\switch$ actions for the other transmitter to its signature.

\bfg

\automatontitle{$\Control$}

\begin{signature}
\ioi{
$\emptyset$\\
constant
}{
$\loseOne$, $\gainOne$, $\loseTwo$, $\gainTwo$\\
constant
}{
$\emptyset$\\
constant
}

\end{signature}

\begin{statevarlist}

\item $\assigned \in \{1, 2 \}$,
        transmitter that $\Car$ is assigned to, initially $1$

\item $\transferring \in \{\true, \false \}$,
        true iff in the middle of a transfer of $\Car$ from one
        transmitter to another, initially $\false$
        
\end{statevarlist}

\begin{actionlist}

\iocode{ 

\outputaction{$\loseOne$}
{
  $\assigned = 1 \land \neg \transferring$
}{
  $\assigned \gets 2;$\\
  $\transferring \gets \true$
}

\outputaction{$\gainTwo$}
{
  $\assigned = 1 \land \transferring$
}{
  $\transferring \gets \false$
}

}{ 

\outputaction{$\loseTwo$}
{
  $\assigned = 2 \land \neg \transferring$
}{
  $\assigned \gets 1;$\\
  $\transferring \gets \true$
}

\outputaction{$\gainOne$}
{
  $\assigned = 1 \land \transferring$
}{
  $\transferring \gets \false$
}

}

\end{actionlist}
\caption{The $\Control$ SIOA}
\label{fig:control}
\efg

\bfg

\automatontitle{$\TransOne$}

\begin{signature}
\ioi{
$\loseOne$, $\gainOne$, $\talkOne$ 
initially: $\loseOne$, $\gainOne$, $\talkOne$
}{
$\switchOne$ 
initially: $\switchOne$
}{
$\emptyset$\\
constant
}

\end{signature}

\begin{statevarlist}

\item $\transferring \in \{\true, \false \}$,
        true iff in the middle of a transfer of $\Car$ to the other controller

\item $\activ \in \{\true, \false \}$,
        true iff this transmitter is currently handling the $\Car$, initially $\false$
        
\end{statevarlist}

\begin{actionlist}

\iocode{ 

\inputaction{$\loseOne$}
{
  if $\activ$ then\\
  \ind     $\transferring \gets \true$;\\
  \ind     $\activ \gets \false$
}

\inputaction{$\gainOne$}
{
  $\insig \gets \insig \un \set{\talkOne}$;\\
  $\outsig \gets \outsig \un \set{\switchOne}$;\\
  $\activ \gets \true$
}

}{ 

\outputaction{$\switchOne$}
{
  $\transferring$
}{
  $\transferring \gets \false$;\\
  $\insig \gets \insig - \set{\talkOne}$;\\
  $\outsig \gets \outsig - \set{\switchOne}$
}

\inputaction{$\talkOne$}
{
  $\skipst$
}

}

\end{actionlist}
\caption{The $\TransOne$ SIOA}
\label{fig:transOne}
\efg

\bfg

\automatontitle{$\Car$}

\begin{signature}
\ioi{
$\switchOne$, $\switchTwo$ 
initially: $\switchOne$
}{
$\talkOne, \talkTwo$
initially: $\talkOne$
}{
$\emptyset$\\
constant
}

\end{signature}

\begin{statevarlist}

\item $\transmitter \in \{1, 2 \}$,
        the identity of the transmitter that $\Car$ is currently
        connected to

\end{statevarlist}

\begin{actionlist}

\iocode{ 

\outputaction{$\talkOne$}
{
  $\transmitter = 1$
}{
  $\skipst$
}

\inputaction{$\switchOne$}
{
  $\insig \gets \insig - \set{\switchOne} \un \set{\switchTwo}$;\\
  $\outsig \gets \outsig - \set{\talkOne} \un \set{\talkTwo}$;
}

}{ 

\outputaction{$\talkTwo$}
{
  $\transmitter = 2$
}{
  $\skipst$
}

\inputaction{$\switchTwo$}
{
  $\insig \gets \insig - \set{\switchTwo} \un \set{\switchOne}$;\\
  $\outsig \gets \outsig - \set{\talkTwo} \un \set{\talkOne}$;
}

}

\end{actionlist}
\caption{The $\Car$ SIOA}
\label{fig:car}
\efg

\section{Compositional Reasoning for Signature I/O Automata}
\label{sec:SIOA:compositional-reasoning}


To confirm that our model provides a reasonable notion of concurrent
composition, which has expected properties, and to enable
compositional reasoning, we establish execution ``projection'' and ``pasting''
results for compositions. We deal with both execution projection/pasting
and with trace pasting. 
%
The main goal is to establish that \emph{parallel composition is monotonic
with respect to trace inclusion}: if an SIOA in a parallel composition is
replaced by one with less traces, then the overall composition cannot
have more traces than before, \ie no new behaviors are added.

\subsection{Execution Projection and Pasting for SIOA}
\label{subsec:SIOA:exec-projection-and-pasting}

Given a parallel composition $A = A_1 \pl \cdots \pl A_n$ of $n$ SIOA, we define
the projection of an alternating sequence of states and actions of $A$ onto one
of the $A_i$, $i \in \oneton$, in the usual way: the state components for all
SIOA other than $A_i$ are removed, and so are all actions in which $A_i$ does
not participate.

\begin{definition}[Execution projection for SIOA]
\label{def:SIOA:exec-projection}
Let $A = A_1 \pl \cdots \pl A_n$ be an SIOA.
Let $\alpha$ be a sequence $s^0 a^1 s^1 a^2 s^2 \ldots s^{j-1} a^j s^j \ldots$ where
$\forall j \geq 0, s^j = \tpl{s^{j}_{1},\ldots,s^{j}_{n}} \in \autstates{A}$ and
$\forall j > 0, a^j \in \csigacts{A}{s^{j-1}}$.
%
Then, for $i \in \oneton$, define $\alpha \proj A_i$ to be the sequence resulting from:
\begin{enumerate}

\item \label{clause:SIOA:exec-projection:config}
replacing each $s^j$ by its $i$'th component $s^{j}_{i}$, and then

\item \label{clause:SIOA:exec-projection:act}
removing all $a^j s^{j}_{i}$ such that $a^j \not\in \ssigacts{A_i}{s^{j-1}_{i}}$.
\end{enumerate}
\end{definition}

$s^{j}_{i}$ is the component of $s^j$ which gives the state of $A_i$.
$\ssig{A_i}{s^{j-1}_{i}}$ is the signature of $A_i$ when in state 
$s^{j-1}_{i}$. Thus, if $a^j \not\in \ssigacts{A_i}{s^{j-1}_{i}}$, then 
the action $a^j$ does not occur in the signature
$\ssig{A_i}{s^{j-1}_{i}}$, and $A_i$ does not participate in the execution of
$a^j$. In this case, $a^j$ and the following state are removed from the
projection, since the idea behind execution projection is to retain only the
state of $A_i$, and only the actions which $A_i$ participates in.
Note that we do not require $\al$ to actually be an execution of $A$, since this
is unnecessary for the definition, and also facilitates the statement of
execution pasting below.

Our execution projection result states that the projection of an
execution of a composed SIOA $A = A_1 \pl \cdots \pl A_n$ onto a
component $A_i$, is an execution of $A_i$.

\begin{theorem}[Execution projection for SIOA]
\label{thm:SIOA:exec-projection}
Let $A = A_1 \pl \cdots \pl A_n$ be an SIOA, and let $i \in \n$.
If $\alpha \in \execs{A}$ then $\alpha \project A_i \in \execs{A_i}$ for all $i \in \n$.
\end{theorem}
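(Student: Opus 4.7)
The plan is to prove the theorem by induction on the length of $\alpha$. For the base case, suppose $|\alpha| = 0$, so $\alpha$ is a single state $s^0 \in \autstart{A}$. By Definition~\ref{def:SIOA:composition}, $\autstart{A} = \autstart{A_1} \times \cdots \times \autstart{A_n}$, so the $i$'th component $s^0_i$ lies in $\autstart{A_i}$. The projection $\alpha \project A_i$ is the single state $s^0_i$, which is a (length-0) execution of $A_i$.

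For the inductive step, assume the claim for all executions of $A$ of length $k$, and let $\alpha' = \alpha \cat (a^{k+1}, s^{k+1})$ be an execution of length $k+1$, where $\alpha = s^0 a^1 s^1 \cdots a^k s^k$. By the induction hypothesis, $\alpha \project A_i \in \execs{A_i}$. Since $(s^k, a^{k+1}, s^{k+1}) \in \autsteps{A}$, Definition~\ref{def:SIOA:composition}, clause~\ref{def:SIOA:composition:trans}, forces one of two cases. In the first case, $a^{k+1} \in \ssigacts{A_i}{s^k_i}$: then $(s^k_i, a^{k+1}, s^{k+1}_i) \in \autsteps{A_i}$, and by Definition~\ref{def:SIOA:exec-projection} the projection appends $a^{k+1} s^{k+1}_i$ to $\alpha \project A_i$, giving a valid execution of $A_i$. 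In the second case, $a^{k+1} \not\in \ssigacts{A_i}{s^k_i}$: then $s^k_i = s^{k+1}_i$, and clause~\ref{clause:SIOA:exec-projection:act} of the projection definition discards $a^{k+1} s^{k+1}_i$ outright, so $\alpha' \project A_i = \alpha \project A_i$, which is in $\execs{A_i}$ by the induction hypothesis.

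If $\alpha$ is infinite, I would argue by extension: every finite prefix $\alpha |_k$ is an execution of $A$, hence projects to an execution of $A_i$ by the inductive argument above, and $\alpha \project A_i$ is precisely the limit (in the obvious sense on alternating sequences) of the sequence $\{\alpha |_k \project A_i\}_{k \geq 0}$, since projection acts pointwise on the prefix structure. Because each finite projection is a prefix of the next and each is an execution of $A_i$, the limit sequence is itself an execution of $A_i$ (finite or infinite).

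I do not anticipate a serious obstacle here, as the result is essentially a direct unpacking of Definitions~\ref{def:SIOA:composition} and~\ref{def:SIOA:exec-projection}. The only point requiring care is the second case of the inductive step, where one must verify both that $s^k_i = s^{k+1}_i$ (so no state is ``lost'' by projection) and that the projection really does collapse the step $a^{k+1} s^{k+1}_i$; and, separately, the bookkeeping needed to extend the argument from finite to infinite executions.
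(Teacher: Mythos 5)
Your proof is correct, but it takes a genuinely different route from the paper's. The paper argues directly and uniformly for finite and infinite executions: it picks an arbitrary step $(s^{j-1},b^j,s^j)$ of $\alpha \project A_i$, identifies the index $k$ in $\alpha$ at which that step originates, and shows that all the states of $\alpha$ between the last retained index and $k-1$ project to the same $A_i$-state, so that the step is exactly $(u^{k-1}\pj A_i,\, a^k,\, u^k \pj A_i)$ and hence a step of $A_i$ by Definition~\ref{def:SIOA:composition}. Your version replaces this by induction on length plus a limit argument for infinite $\alpha$; it is more elementary step-by-step, at the cost of some extra bookkeeping for the infinite case that the paper avoids entirely. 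One point you should make explicit: in your first case, concluding that appending $a^{k+1}\,s^{k+1}_i$ yields an execution of $A_i$ requires knowing that $\last{\alpha \project A_i} = s^k_i$, i.e., that the last state of the projection is the $i$'th component of the last state of $\alpha$. This holds precisely because every discarded step satisfies $s^{j-1}_i = s^j_i$, but it is not automatic from ``$\alpha \project A_i \in \execs{A_i}$'' alone; the clean fix is to strengthen the induction hypothesis to carry this invariant (it is preserved trivially in both cases). This is the same fact the paper establishes explicitly with its ``$u^\l \pj A_i = u^{k-1}\pj A_i$'' step; you allude to it only in your closing remarks, so fold it into the induction proper. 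With that adjustment the argument is complete.
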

\begin{proof}
Let $\al = u^0 a^1 u^1 a^2 u^2 \ldots \in \execs{A}$, and let $s^0 = u^0 \proj A_i$.
Then, by Definition~\ref{def:SIOA:exec-projection}, $s^0 \in \autstart{A_i}$ and
$\al \proj A_i = s^0 b^1 s^1 b^2 s^2 \ldots$ for some $b^1 s^1 b^2 s^2 \ldots$,
where $s^j \in \autstates{A_i}$ for $j \ge 1$.

Consider an arbitrary step $(s^{j-1}, b^j, s^j)$ of $\al \proj A_i$.
Since $b^j s^j$ was not removed in Clause~\ref{clause:SIOA:exec-projection:act} of
Definition~\ref{def:SIOA:exec-projection}, we have 
\blq{}
(1) $s^j = u^k \proj A_i$ for some $k > 0$ and such that
        $a^k \in \ssigacts{A_i}{u^{k-1} \proj A_i}$\\
(2) $b^j = a^k$, and\\
(3) $s^{j-1} = u^\l \proj A_i$ for the smallest $\l$ such that\\
   \ind $\l < k$ and
        $\fa m : {\l+1 \leq m < k} : a^{m} \not\in \ssigacts{A_i}{u^{m-1} \proj A_i}$
\elq
From (3) and Definitions~\ref{def:SIOA:composition} and \ref{def:SIOA:exec-projection},
    $u^\l \pj A_i = u^{k-1} \pj A_i$.
Hence $s^{j-1} = u^{k-1} \proj A_i$.
From $u^{k-1} \llas{a^k}{A} u^k$, $a^k \in \ssigacts{A_i}{u^{k-1} \pj A_i}$, and
Definition~\ref{def:SIOA:composition}, we have 
        $u^{k-1} \pj A_i \llas{a^k}{A_i} u^k \pj A_i$.
Hence 
        $s^{j-1} \llas{b^j}{A_i} s^j$
        from $s^{j-1} = u^{k-1} \proj A_i$ established above and (1), (2).
Now $s^{j-1}, s^j \in \autstates{A_i}$, and so 
$(s^{j-1}, b^j, s^j) \in \autsteps{A_i}$.

Since $(s^{j-1}, b^j, s^j)$ was arbitrarily chosen, we conclude that
every step of $\al \pj A_i$  is a step of $A_i$. Since the first state of
$\al \pj A_i$ is $s^0$, and $s^0 \in \autstart{A_i}$, we have established
that $\al \pj A_i$ is an execution of $A_i$.
\end{proof}

Execution pasting is, roughly, an ``inverse'' of projection.
If $\al$ is an alternating sequence of states and actions of
a composed SIOA $A = A_1 \pl \cdots \pl A_n$ such that 
(1) the projection of $\al$ onto each $A_i$ is an actual execution of $A_i$, and 
(2) every action of $\alpha$ not involving $A_i$ does not change the state of $A_i$,
then $\alpha$ will be an actual execution of $A$.
Condition (1) is the ``inverse'' of execution projection.
Condition (2) is a consistency condition which requires that
$A_i$ cannot ``spuriously'' change its state when an action not in the current
signature of $A_i$ is executed.

\begin{theorem}[Execution pasting for SIOA]
\label{thm:SIOA:exec-pasting}
Let $A = A_1 \pl \cdots \pl A_n$ be an SIOA.
Let $\alpha$ be a sequence $s^0 a^1 s^1 a^2 s^2 \ldots s^{j-1} a^j s^j \ldots$ where
$\forall j \geq 0, s^j = \tpl{s^{j}_{1},\ldots,s^{j}_{n}} \in \autstates{A}$ and
$\forall j > 0, a^j \in \csigacts{A}{s^{j-1}}$.
Furthermore, suppose that, for all $i \in \n$:
\begin{enumerate}

\item
\label{clause:SIOA:exec-pasting:project}
$\al \pj A_i \in \execs{A_i}$, and

\item 
\label{clause:SIOA:exec-pasting:notinsig}
$\forall j > 0:$
   if $a^j \not\in \ssigacts{A_i}{s^{j-1}_{i}}$ then $s^{j-1}_{i} = s^{j}_{i}$.

\end{enumerate}
Then,
        $\alpha \in \execs{A}$.
\end{theorem}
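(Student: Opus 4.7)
The plan is to unpack what it means for $\alpha$ to be an execution of $A$ using Definition~\ref{def:SIOA:composition}: we need $s^0 \in \autstart{A}$ and $(s^{j-1}, a^j, s^j) \in \autsteps{A}$ for each $j > 0$. The start-state condition is immediate, because by clause~\ref{clause:SIOA:exec-pasting:project} each $\alpha \pj A_i$ is an execution of $A_i$, so its first state $s^0_i$ lies in $\autstart{A_i}$; hence $s^0 = \tpl{s^0_1,\ldots,s^0_n} \in \autstart{A_1} \times \cdots \times \autstart{A_n} = \autstart{A}$.

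For each step $(s^{j-1}, a^j, s^j)$, I would verify the two conditions of Definition~\ref{def:SIOA:composition}, clause~\ref{def:SIOA:composition:trans}. First, $a^j \in \csigacts{A}{s^{j-1}}$ holds by hypothesis, and by Definition~\ref{def:SIOA:composition}, clause~\ref{def:SIOA:composition:sig} (together with Definition~\ref{def:SIOA:sig-composition}), $\csigacts{A}{s^{j-1}} = \bigcup_{i \in \n} \ssigacts{A_i}{s^{j-1}_i}$, which gives the ``union of local signatures'' requirement. Second, fix $i \in \n$; I must show that if $a^j \in \ssigacts{A_i}{s^{j-1}_i}$ then $(s^{j-1}_i, a^j, s^j_i) \in \autsteps{A_i}$, and otherwise $s^{j-1}_i = s^j_i$. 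The ``otherwise'' case is exactly hypothesis~\ref{clause:SIOA:exec-pasting:notinsig}.

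The main technical obstacle is the remaining case: when $a^j \in \ssigacts{A_i}{s^{j-1}_i}$, showing $(s^{j-1}_i, a^j, s^j_i) \in \autsteps{A_i}$. The subtlety is that in passing from $\alpha$ to $\alpha \pj A_i$, the $A_i$-relevant actions are indexed differently; so I cannot directly read off the triple. My plan is to enumerate the indices $j_1 < j_2 < \cdots$ at which $A_i$ participates, so that $\alpha \pj A_i = s^0_i \, a^{j_1} s^{j_1}_i \, a^{j_2} s^{j_2}_i \cdots$. By clause~\ref{clause:SIOA:exec-pasting:project} this is an execution of $A_i$, so each consecutive triple $(s^{j_{k-1}}_i, a^{j_k}, s^{j_k}_i)$ lies in $\autsteps{A_i}$ (with $j_0 = 0$). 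Now suppose $a^j \in \ssigacts{A_i}{s^{j-1}_i}$, so $j = j_k$ for some $k$. By hypothesis~\ref{clause:SIOA:exec-pasting:notinsig}, for every intervening index $m$ with $j_{k-1} < m \le j_k - 1$ we have $a^m \notin \ssigacts{A_i}{s^{m-1}_i}$ and so $s^{m-1}_i = s^m_i$; chaining these equalities yields $s^{j_{k-1}}_i = s^{j_k - 1}_i = s^{j-1}_i$. Substituting into the triple from the projected execution gives $(s^{j-1}_i, a^j, s^j_i) \in \autsteps{A_i}$, as required.

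Assembling these pieces, every $(s^{j-1}, a^j, s^j)$ satisfies both conditions of Definition~\ref{def:SIOA:composition}, clause~\ref{def:SIOA:composition:trans}, so it lies in $\autsteps{A}$; combined with $s^0 \in \autstart{A}$, this shows $\alpha \in \execs{A}$. The bookkeeping around the index map $k \mapsto j_k$ and the chain of equalities from hypothesis~\ref{clause:SIOA:exec-pasting:notinsig} is the only nontrivial part; everything else is a direct unfolding of the composition definitions.
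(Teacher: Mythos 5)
Your proof is correct and follows essentially the same route as the paper's: verify the start-state condition from the projections, and verify each transition of $\alpha$ against Definition~\ref{def:SIOA:composition} by splitting the components into those that participate in $a^j$ (handled via hypothesis~\ref{clause:SIOA:exec-pasting:project}) and those that do not (handled via hypothesis~\ref{clause:SIOA:exec-pasting:notinsig}); the paper merely packages this as an induction on prefixes of $\alpha$. Your explicit enumeration of the participation indices $j_1 < j_2 < \cdots$ and the chain of equalities $s^{j_{k-1}}_i = s^{j_k-1}_i$ spells out a re-indexing step that the paper's proof passes over when it asserts that $(s^{j-1}\pj A_i, a^j, s^j \pj A_i)$ ``lies along'' $\alpha \pj A_i$, so this is a welcome clarification rather than a deviation.
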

\begin{proof}
We shall establish, by induction on $j$:
\blq{(*)}
        $\forall j \geq 0: \al |_j \in \execs{A}$.
\elq
From which we can conclude
       $s^0 \in \autstart{A}$ and  $\fa j \geq 0 : (s^{j-1},a^j,s^j) \in \autsteps{A}$.
Definition~\ref{def:SIOA:execution} then
implies the desired conclusion, $\al \in \execs{A}$.

\noindent
\pcase{Base case}: $j = 0$.\\
So $\al |_j = s^0$. 
Now $s^0 = \tpl{s^{0}_{1},\ldots,s^{0}_{n}}$ by \textit{assumption.}
By Definition~\ref{def:SIOA:exec-projection}, $s^{0}_{i}$ is the first state of $\al \pj
A_i$, for $1 \le i \le n$.
By clause~\ref{clause:SIOA:exec-pasting:project}, $\al \pj A_i \in \execs{A_i}$,
and so $s^{0}_{i} \in \autstart{A_i}$, for $1 \le i \le n$.
Thus, by Definition~\ref{def:SIOA:composition}, $s^0 \in \autstart{A}$.

\noindent
\pcase{Induction step}: $j > 0$.\\
Assume the induction hypothesis:
\blq{(ind. hyp.)}
        $\al |_{j-1} \in \execs{A}$
\elq
and establish $\al |_j \in \execs{A}$. By
Definition~\ref{def:SIOA:execution}, it is clearly sufficient to establish
$s^{j-1} \llas{a^j}{A} s^j$.

By assumption, $a^j \in \csigacts{A}{s^{j-1}}$.
Let $\varphi \sub \n$ be the unique set such that
$\fa i \in \varphi: a^j \in \csigacts{A_i}{s^{j-1} \pj A_i}$ and
$\fa i \in \n - \varphi: a^j \not\in \csigacts{A_i}{s^{j-1} \pj A_i}$.
Thus, by Definition~\ref{def:SIOA:exec-projection}:
\blq{}
        $\fa i \in \varphi: (s^{j-1} \pj A_i, a^j, s^j \pj A_i)$ lies along $\al \pj A_i$.
\elq
Since $\fa i \in \n: \al \pj A_i \in \execs{A_i}$ and $A_i$ is an
SIOA,
\blq{}
        $\fa i \in \varphi: s^{j-1} \pj A_i \llas{a^j}{A_i} s^j \pj A_i$.
\elq
Also, by clause~\ref{clause:SIOA:exec-pasting:notinsig},
\blq{}
        $\fa i \in \n - \varphi: s^{j-1} \pj A_i =  s^j \pj A_i$.
\elq
By Definition~\ref{def:SIOA:composition}
\blq{}
        $\tpl{s^{j-1} \pj A_1,\ldots, s^{j-1} \pj A_n} \llas{a^j}{A}
         \tpl{s^{j} \pj A_1,\ldots, s^{j} \pj A_n}$
\elq
Hence
\blq{}
        $s^{j-1} \llas{a^j}{A} s^{j}$.
\elq
From the induction hypothesis ($\al |_{j-1} \in \execs{A}$),
$s^{j-1} \llas{a^j}{A} s^{j}$,  and Definition~\ref{def:SIOA:execution}, we 
have $\al |_{j} \in \execs{A}$.
\end{proof}

\subsection{Trace Pasting for SIOA}
\label{subsec:SIOA:trace-projection-and-pasting}

We deal only with trace pasting, and not trace projection. 
Trace projection is not well-defined since a trace
of $A =  A_1 \pl \cdots \pl A_n$ 
does not contain information about the $A_i, i \in \oneton$.
Since the external signatures of each $A_i$ vary, there is no way
of determining, from a trace $\beta$, which $A_i$ participate in each
action along $\beta$. Thus, the projection of $\beta$ onto some $A_i$
cannot be recovered from $\beta$ itself, but only from an
execution $\al$ whose trace is $\beta$. Since there are in general,
several such executions, the projection of $\beta$ onto $A_i$ can be
different, depending on which execution we select. Hence, 
the projection of $\beta$ onto $A_i$ is not well-defined as a single
trace.
It could be defined as the set 
$\beta \pj A_i = \set{ \beta_i \sth 
   (\ex \al \in \execs{A}: 
        \trace{\al} = \beta \land
        \beta_i = \trace{\al \pj A_i}) }$, 
\ie all traces of $A_i$ that can be generated by taking all executions $\al$
whose trace is $\beta$, projecting those executions onto $A_i$, and then taking
the trace.
We do not pursue this avenue here.

We find it sufficient to deal only with trace pasting, since we are able to
establish our main result, \intr{trace substitutivity}, which states that replacing an
SIOA in a parallel composition by one whose traces are a subset of the former's,
results in a parallel composition whose traces are a subset of the original
parallel composition's. In other words, trace-containment is monotonic with
respect to parallel composition.

Let $\Sig = (in,out,int)$ and $\Sig' = (in',out',int')$ be
signatures.  We define $\Sigs = in \un out \un int$,
and $\Sig \sub \Sig'$ to mean $in \sub in'$ and $out \sub out'$
and $int \sub int'$.

\bd[Pretrace]
\label{def:pretrace}
A \intrdef{pretrace} $\g = \g(1) \g(2) \ldots$ is a nonempty sequence such that
\bn

\item \label{def:pretrace:elements}
For all $i \ge 1$, $\g(i)$ is an external signature or an action

\item \label{def:pretrace:first}
$\g(1)$ is an external signature

\item \label{def:pretrace:successive}
No two successive elements of $\g$ are actions

\item \label{def:pretrace:action}
For all $i > 1$, 
      if $\g(i)$ is an action $a$, then $\g(i-1)$ is an external signature
      containing $a$ ($a \in \widehat{\g}(i-1)$)


\item \label{def:pretrace:last}
If $\g$ is finite, then it ends in an external signature

\en
\ed
The notion of a pretrace is similar to that of a trace, but it permits
``stuttering'': the (possibly infinite) 
repetition of the same external signature.
This simplifies the subsequent proofs, since it
allows us to ``stretch'' and ``compress'' pretraces corresponding to
different SIOA so that they ``line up'' nicely.
Our definition of a pretrace does not depend on a particular SIOA,
i.e, we have not defined ``a pretrace of an SIOA $A$,'' but rather
just a pretrace in general. We define 
``pretrace of an SIOA $A$'' below.
\bd[Reduction of pretrace to a trace]
\label{def:reduction-pretrace-to-trace}
Let $\g$ be a pretrace. Then $r(\g)$ is the result of
replacing all maximal blocks of identical external signatures in $\g$ by a
single representative. In particular, if $\g$ has an infinite suffix
consisting of repetitions of an external signature, then that is
replaced by a single representative. 
\ed
If $\g = r(\g)$, then we say that $\g$ is a trace. This defines a
notion of trace in general, as opposed to ``trace of an SIOA $A$.'' 
We now define \emph{stuttering-equivalence} $(\seq)$ for pre-traces.
Essentially, if one
pretrace can be obtained from another by adding and/or removing repeated
external signatures, then they are stuttering equivalent.
\bd[$\seq$]
\label{def:seq}
Let $\g, \g'$ be pretraces. Then $\g \seq \g'$ iff $r(\g) = r(\g')$.
\ed
It is obvious that $\seq$ is an equivalence relation.
Note that every trace is also a pretrace, but not necessarily vice-versa, since
repeated external signatures (stuttering) are disallowed in traces.
The length $|\g|$ of a finite pretrace $\g$ is the number of occurrences of
external signatures and actions in $\g$.  The length of an infinite pretrace is
$\omega$.
Let pretrace $\g = \g(1) \g(2) \ldots$. Then for $1 \leq i \leq |\g|$,
define $\g |_i = \g(1) \g(2) \ldots \g(i)$.
We define concatenation for pretraces as simply sequence concatenation, and will
usually use juxtaposition to denote pretrace concatenation, but will
sometimes use the $\cat$ operator for clarity.
The concatenation of two pretraces is always a pretrace (note that this is not
true of traces, since concatenating two traces can result in a repeated external signature).
We use $<, \le$ for proper prefix, prefix, respectively, of a
pretrace:
$\g < \g'$ iff there exists a pretrace $\g''$ such that $\g = \g' \g''$, and 
$\g \le \g'$ iff $\g = \g'$ or $\g < \g'$.
If $\g'$ is a pretrace and $\g < \g'$, then $\g$ satisfies
clauses~\ref{def:pretrace:elements}--\ref{def:pretrace:action} of
Definition~\ref{def:pretrace}, but may not satisfy
clause~\ref{def:pretrace:last}.  For a finite sequence $\g$ that does satisfy
clauses~\ref{def:pretrace:elements}--\ref{def:pretrace:action} of
Definition~\ref{def:pretrace}, define the predicate
$\ispretrace{\g}$ $\df$ ($\last{\g}$ is an external signature),
where $\last{\g}$ is the last element of $\g$.

We now define a predicate $\zips(\g,\g_1,\ldots,\g_n)$ which takes $n+1$ pretraces
and holds when $\g$ is a possible result of
``zipping'' up $\g_1,\ldots,\g_n$, as would result when 
$\g_1,\ldots,\g_n$ are pretraces of compatible SIOA $A_1,\ldots,A_n$ respectively, and 
$\g$ is the corresponding pretrace of $A = A_1 \pl \cdots \pl A_n$.

\bd[zip of pretraces]
\label{def:zips}
Let $\g$, $\g_1,\ldots,\g_n$ be pretraces $(n \ge 1)$. 
The predicate\linebreak $\zips(\g,\g_1,\ldots,\g_n)$ holds iff all the
following hold:

\bn

\item \label{def:zips:length}
$|\g| = |\g_1| = \cdots = |\g_n|$.

\item \label{def:zips:external-action}
For all $i > 1$: if $\g(i)$ is an action $a$, then there exists nonempty
      $\varphi_i \sub \oneton$ such that
   \bn

   \item $\fa k \in \varphi_i: \g_k(i) = a$, and

   \item $\fa \l \in \oneton - \varphi_i$:
             $\g_\l(i-1) = \g_\l(i) = \g_\l(i+1)$,
             $\g_\l(i)$ is an external signature $\Gam_\l$, and 
	     $a \not\in \Gams_\l$.

   \en	

\item \label{def:zips:signature}
For all $i > 0$: if $\gam(i)$ is an external signature $\Gam$, then
      for all $j \in \oneton$, $\g_j(i)$ is an external signature $\Gam_j$, and
      $\G = \prod_{j \in \oneton} \G_j$.

\item \label{def:zips:internal-action}
For all $i > 0$, if $\g(i-1)$ and $\g(i)$ are both external signatures,
      then there exists $k \in \oneton$ such that
		$\fa \l \in \oneton - k : \g_\l(i-1) = \g_\l(i)$.
\en
\ed
Clause~\ref{def:zips:length} requires that $\g, \g_1, \ldots, \g_n$
all have the same length, so that they ``line up'' nicely.
Clause~\ref{def:zips:external-action} requires that external actions $a$
appearing in $\g$ are executed by a nonempty subset of the
corresponding SIOA, and that the $\g_j$ corresponding to automata that
do not execute $a$ are unchanged in the corresponding positions.
Clause~\ref{def:zips:signature} requires that an external signature
appearing in $\g$ is the product of the external signatures in the
same position in all the $\g_j$, which moreover cannot have an
external action at that position.
Clause~\ref{def:zips:internal-action} requires that, whenever there
are two consecutive external signatures in $\g$, that this corresponds
to the execution of an internal action by one particular SIOA $k$, so that
the $\g_\l$ for all $\l \ne k$ are unchanged in the corresponding positions.

\bp
\label{prop:zips-on-prefixes}
Let
$\g$, $\g_1,\ldots,\g_n$ all be pretraces $(n \ge 1)$. 
Suppose, $\zips(\g,\g_1,\ldots,\g_n)$. Then, for all $i$ such that
$1 \le i \le |\g|$ and $\ispretrace{\g|_i}$ (i.e., $\g(i)$ is an external signature):
(1) $(\fa j \in \n: \ispretrace{\g_j |_i})$, and 
(2) $\zips(\g|_i,\g_1|_{i},\ldots,\g_n|_{i})$.
\ep
\bpr
Immediate from Definition~\ref{def:zips}.
\epr


We use the $\zips$ predicate on pretraces together with the 
$\seq$ relation on pretraces to define a ``zipping'' predicate for traces:
the trace $\b$ is a possible result of ``zipping up'' the traces 
$\b_1,\ldots,\b_n$ if there exist pretraces  
$\g$, $\g_1,\ldots,\g_n$ that are stuttering-equivalent to
$\b$, $\b_1,\ldots,\b_n$ respectively, and for which the $\zips$ predicate
holds. The predicate so defined is named $\zip$. Thus, $\zips$ is
``zipping with stuttering,'' as applied to pretraces, and $\zip$ is 
``zipping without stuttering,'' as applied to traces.

\bd[zip of traces]
\label{def:zip}
Let
$\b$, $\b_1,\ldots,\b_n$ be traces $(n \ge 1)$. 
The predicate\linebreak $\zip(\b,\b_1,\ldots,\b_n)$ holds iff there exist 
pretraces $\g$, $\g_1,\ldots,\g_n$ such that $\g \seq \b$,
$(\fa j \in \oneton: \g_j \seq \b_j)$, and 
$\zips(\g,\g_1,\ldots,\g_n)$.
\ed

Define $\ptraces{A} = \{ \g ~|~ \ex \b \in \traces{A}: \b \seq \g \}$.
That is, $\ptraces{A}$ is the set of pretraces which are
stuttering-equivalent to some trace of $A$.  
An equivalent definition which is sometimes more convenient is
$\ptraces{A} = \{ \g ~|~ \ex \al \in \execs{A}: \trace{\al} \seq \g \}$.
We also define 
$\fptraces{A} = \{\g ~|~ \g \in \ptraces{A} \mbox{ and $\g$ is finite } \}$.

Given $\g \in \ptraces{A}$, we define $\texecs{A}{\g} = \{ \al ~|~ \al
\in \execs{A} \land \trace{\al} \seq \g \}$.  In other words,
$\texecs{A}{\g}$ is the set of executions (possibly empty) of $A$
whose trace is stuttering-equivalent to $\g$. 
Also, $\ftexecs{A}{\g} = \{ \al ~|~ \al
\in \fexecs{A} \land \trace{\al} \seq \g \}$, i.e., the set of
finite executions (possibly empty) of $A$
whose trace is stuttering-equivalent to $\g$.

Theorem~\ref{thm:SIOA:finite-pretrace-pasting} states that if a set 
of finite pretraces consisting of one $\g_j \in \ptraces{A_j}$ for each 
$j \in \oneton$, can be
``zipped  up'' to generate a finite pretrace $\g$, then $\g$ is a pretrace of
$A_1 \pl \cdots \pl A_n$, and furthermore, any set of executions
corresponding to the $\g_j$ can be pasted together to generate an execution
of $A_1 \pl \cdots \pl A_n$ corresponding to $\g$.
Theorem~\ref{thm:SIOA:finite-pretrace-pasting} is established by
induction on the length of $\g$, and the explicit use of 
executions corresponding to the pretraces $\g$, $\g_1, \ldots, \g_n$,
is needed to make the induction go through.

\bt[Finite-pretrace pasting for SIOA]
\label{thm:SIOA:finite-pretrace-pasting}
Let $A_1,\ldots,A_n$ be compatible SIOA, and let $A = A_1 \pl \cdots \pl A_n$.
Let $\g$ be a finite pretrace.
If, for all $j \in \oneton$, 
a finite pretrace $\g_j \in \fptraces{A_j}$ can be chosen so that 
$\zips(\g,\g_1,\ldots,\g_n)$ holds, then

\noindent
\ind $\fa \al_1 \in \ftexecs{A_1}{\g_1},\ldots,\fa \al_n \in \ftexecs{A_n}{\g_n}$,\\
\ind \ind $\ex \al \in \ftexecs{A}{\g} : 
          		(\fa j \in \n: \al \pj A_j = \al_j)$.
\et
\bpr
Let
$\g_j \in \fptraces{A_j}$ for $j \in \oneton$ be the pretraces given by the
antecedent of the theorem.
Also let $\g$ be the finite pretrace such that $\zips(\g,\g_1,\ldots,\g_n)$.
Hence $\ftexecs{A_j}{\g_j} \neq \emptyset$ for all $j \in \oneton$.
Fix $\al_j$ to be an arbitrary element of $\ftexecs{A_j}{\g_j}$,
for all $j \in \oneton$.
The theorem is established if we prove
\bleqn{(*)}
$\ex \al \in \ftexecs{A}{\g} : 
          		(\fa j \in \n: \al \pj A_j = \al_j)$.
\eleqn
The proof is by induction on $|\g|$, the length of $\g$.
We assume the induction hypothesis for all prefixes of $\g$ that are pretraces.

\vspace{1ex}

\noindent
\pcase{Base case}: $|\g| = 1$.
Hence $\g$ consists of a single external signature $\G$.
For the rest of the base case, let $j$ range over $\oneton$.
By $\zips(\g,\g_1,\ldots,\g_n)$ and Definition~\ref{def:zips},
we have that each $\g_j$ consists of a single external signature $\G_j$, and
$\G = \prod_{j \in \oneton} \G_j$.
Since $\g_1,\ldots,\g_n$ contain no actions, $\al_1,\ldots,\al_n$ must contain
only internal actions (if any).  Furthermore, all the states along $\al_j$, $j \in
\oneton$, must have the same external signature, namely $\G_j$.

By Definition~\ref{def:SIOA:composition}, we can construct an
execution $\al$ of $A$ by first executing all the internal
actions in $\al_1$ (in the sequence in which they occur in $\al_1$),
and then executing all the internal actions in $\al_2$, etc. until we have
executed all the actions of $\al_n$, in sequence.
It immediately follows, by Definition~\ref{def:SIOA:exec-projection},
that $\fa j \in \oneton: \al \pj A_j = \al_j$.
The external signature of every state along $\al$ is 
$\prod_{j \in \oneton} \G_j$, i.e., $\G$, since the external signature
component contributed by each $A_j$ is always $\G_j$.
Hence, by Definition~\ref{def:SIOA:execution},
$\trace{\al} \seq \G$. Thus, $\trace{\al} \seq \g$.
We have thus established $\trace{\al} \seq \g$ and
$(\AND_{j \in \oneton} \al \pj A_j = \al_j)$.
Hence (*) is established.

\vspace{1.5ex}

\noindent
\pcase{Induction step}: $|\g| > 1$.
There are two cases to consider, according to Definition~\ref{def:zips}.

\case{1}{$\g = \g'a\G$, $\g'$ is a pretrace, $a$ is an action, and
$\G$ is an external signature}\\
Hence, by Definition~\ref{def:zips}, we have
\bleqn{(a)}
$\begin{array}[b]{l}
\hspace{-0.5in}\ex \varphi: \emptyset \ne \varphi \land \varphi \sub \oneton\ \land\\
(\fa k \in \varphi: \g_k = \g'_k a \G_k \land a \in \lastacts{\g'_k})\ \land\\
(\fa \l \in \oneton - \varphi: 
  \g_\l = \g'_\l \G_\l \G_\l \land \G_\l = \last{\g'_\l} \land a \not\in \widehat{\G}_\l)\ \land\\
\zips(\g',\g'_1,\ldots,\g'_n)\ \land\\
\G = (\prod_{k \in \varphi} \G_k) \times (\prod_{\l \in \oneton - \varphi} \G_\l).
\end{array}$
\eleqn
For the rest of this case, let $j$ range over $\oneton$, $k$ range over
$\varphi$, and $\l$ range over $\oneton - \varphi$.
Figure~\ref{thm:SIOA:finite-pretrace-pasting:case1} gives a diagram of
the relevant executions, pretraces, and external signatures for this case.
Horizontal solid lines indicate executions and pretraces, and vertical dashed ones indicate 
the $\zips$ relation. Bullets indicate particular states that are used in the proof.

In (a), we have that $\g'_j \in \fptraces{A_j}$ for all $j$,
since $\g'_j < \g_j$ and $\g_j \in \fptraces{A_j}$ for all $j$,
Since we also have $\g' < \g$ and $\zips(\g',\g'_1,\ldots,\g'_n)$, 
we can apply the inductive hypothesis for $\g'$ to obtain
\bleqn{(b)}
$\begin{array}[b]{l}
\fa \al'_1 \in \ftexecs{A_1}{\g'_1},\ldots,\fa \al'_n \in \ftexecs{A_n}{\g'_n} :\\
 \ind \ex \al' \in \ftexecs{A}{\g'} : 
		    (\fa j \in \n: \al' \pj A_j = \al'_j)
\end{array}$
\eleqn
By assumption, $\al_k \in \ftexecs{A_k}{\g_k}$. Hence, we can find a finite
execution $\al'_k$, and finite execution fragment $\al''_k$ such that
$\al_k = \al'_k \cat (s_k \llas{a}{A_k} t_k) \cat \al''_k$,
where $s_k = \last{\al'_k}$, $\sext{A_k}{t_k} = \G_k$, and $t_k = \first{\al''_k}$.
Furthermore,
$\al'_k \in \ftexecs{A_k}{\g'_k}$, since 
$\al_k \in \ftexecs{A_k}{\g_k}$, $\g_k = \g'_k a \G_k$, and $\sext{A_k}{t_k} = \G_k$.
Also, $\al''_k$ consists entirely of internal actions, and 
$\trace{\al''_k} \seq \G_k$, i.e., every state along $\al''_k$ has
external signature $\G_k$.

By assumption, $\al_\l \in \ftexecs{A_\l}{\g_\l}$.
For all $\l$, let $\al'_\l = \al_\l$, and let
$s_\l = t_\l = \last{\al'_\l}$. Hence 
$\al'_\l \in \ftexecs{A_\l}{\g'_\l}$, since $\g'_\l \seq \g_\l$
(from  $\g_\l = \g'_\l \G_\l \G_\l \land \G_\l = \last{\g'_\l}$ in (a)).
Instantiating (b) for these choices of $\al'_k, \al'_\l$, we obtain, that some
$\al'$ exists such that:
\blq{(c)}
$(\fa j \in \n: \al' \pj A_j = \al'_j)\ \land$\\
$\al' \in \fte{A}{\g'} \ \land$ \\
$(\fa k \in \varphi: (s_k, a, t_k) \in \autsteps{A_k} \land \sext{A_k}{t_k} = \G_k)$.
\elq
%
By $\al'_\l \in \fte{A_\l}{\g'_\l}$ and $s_\l = \last{\al'_\l}$,
we have $\sext{A_\l}{s_\l} = \last{\g'}$. 
Hence, by (a), we have $\sext{A_\l}{s_\l} = \G_\l$. Also, by (a), $a
\not\in \Ga_\l$. Thus,
\bleqn{(d)}
$(\fa \l \in \n - \varphi: a \not\in \sextacts{A_\l}{s_\l} \land \sext{A_\l}{s_\l} = \G_\l)$.
\eleqn
Also, since $A_1,\ldots,A_n$ are compatible SIOA, we have
$(\fa \l \in \n - \varphi: a \not\in \sint{A_\l}{s_\l})$.
Hence $(\fa \l \in \n - \varphi: a \not\in \ssigacts{A_\l}{s_\l})$.
Now let $s = \tpl{s_1,\ldots,s_n}$, and let $t = \tpl{t_1,\ldots,t_n}$.
By (b) and Definition~\ref{def:SIOA:exec-projection}, we have
$s = \last{\al'}$. 
By (b), $(\fa \l \in \n - \varphi: a \not\in \sint{A_\l}{s_\l})$, and Definition~\ref{def:SIOA:composition},
we have $(s,a,t) \in \autsteps{A}$.
Now let $\al''$ be a finite execution fragment of $A$ constructed
as follows. Let $t$ be the first state of $\al''$. Starting from $t$,
execute in sequence first all the (internal) transitions along
$\al_{k_1}$, where $k_1$ is some element of $\varphi$, and then all
the (internal) transitions along
$\al_{k_2}$, where $k_1$ is another element of $\varphi$, etc. until
all elements of $\varphi$ have been exhausted. 
Since all the transitions are internal,
Definition~\ref{def:SIOA:composition} shows that $\al''$ is indeed
an execution fragment of $A$. Furthermore, since no external
signatures change along any of the $\al''_k$, it follows that the
external signature does not change along $\al''$, and hence must equal 
$\sext{A}{t}$ at all states along $\al''$.
Hence $\trace{\al''} \seq \sext{A}{t}$.
Finally, by its construction, we have $\al'' \pj A_k = \al''_k$ for
all $k$. 

Let $\al = \al' \cat (s \llas{a}{A} t) \cat \al''$. By the above, $\al$ is well
defined, and is an execution of $A$.


\noindent
We now have\\
\begin{tabbing}
mmm\= \kill
         \> $\sext{A}{t}$   \\
$=$  \>  $(\prod_k \sext{A_k}{t_k}) \times (\prod_\l \sext{A_\l}{t_\l})$
                                                                    \` definition of $t$\\
$=$  \> $(\prod_k \G_k) \times (\prod_\l \sext{A_\l}{t_\l})$      \`  (c) \\
$=$  \> $(\prod_k \G_k) \times (\prod_\l \G_\l)$                  \` (d) \\
$=$  \>  $\G$       						    \` (a) 
\end{tabbing}
%
%
\noindent
Also,
\begin{tabbing}
mmm\= \kill
             \>  $\trace{\al}$                      \\
$\seq$  \>  $\trace{\al'} \cat a \cat \trace{\al''}$  \` definition of $\al$\\
$\seq$  \>  $\trace{\al'} \cat a \cat \sext{A}{t}$    \` $\trace{\al''} \seq \sext{A}{t}$\\
$\seq$  \> $\trace{\al'} \cat a \cat \G$             \` $\sext{A}{t} = \G$ established above\\
$\seq$  \> $\g' a \G$             \` $\al' \in \fte{A}{\g'}$, hence $\trace{\al'} \seq \g'$ \\
$\seq$  \> $\g$       				     \`  case condition 
\end{tabbing}
%

\noindent
For all $k \in \varphi$,
\begin{tabbing}
mmm\= \kill
         \>  $\al \pj A_k$                   \\
$=$  \>  $(\al' \pj A_k) \cat ({s_k} \llas{a}{A_k} {t_k}) \cat (\al'' \pj A_k)$
                   \` Definition~\ref{def:SIOA:exec-projection} and definition of $\al$\\
$=$   \>  $\al'_k  \cat ({s_k} \llas{a}{A_k} {t_k}) \cat (\al'' \pj A_k)$    
                   \` by (c), $\al' \pj A_k = \al'_k$ \\
$=$   \>  $\al'_k  \cat ({s_k} \llas{a}{A_k} {t_k}) \cat \al''_k$    
                   \` by the preceding remarks, $\al'' \pj A_k = \al''_k$ \\
$=$   \>  $\al_k$  \` by definition of $\al'_k$, $\al''_k$:
                       $\al_k = \al'_k  \cat ({s_k} \llas{a}{A_k} {t_k}) \cat \al''_k$ 
\end{tabbing}


\noindent
For all $\l \in \oneton - \varphi$,
\begin{tabbing}
mmm\= \kill
           \>  $\al \pj A_\l$                   \\
$=$    \>  $\al' \pj A_\l$     
               \`  Definition~\ref{def:SIOA:exec-projection} and definition of $\al$\\
$=$    \>  $\al'_\l$		           \` by (c), $\al' \pj A_\l = \al'_\l$ \\
$=$    \>  $\al_\l$                \` by our choice of $\al'_\l$, $\al_\l = \al'_\l$ 
\end{tabbing}

We have just established  $\al \in \fexecs{A}$,
$\al \pj j = \al_j$ for all $j \in \oneton$,
and $\trace{\al} \seq \g$. Hence (*) is established for case 1.

\vspace{0.2in}

\case{2}{$\g = \g'\G$, $\g'$ is a pretrace,  and $\G$ is an external signature}\\
Hence, by Definition~\ref{def:zips}, we have
\bleqn{(a)}
$\begin{array}[b]{l}
\hspace{-0.5in}\ex k \in \oneton:\\
       \g_k = \g'_k \G_k \land \last{\g'_k} \mbox{ is an external signature} \ \land\\
(\fa \l \in \n - k: \g_\l = \g'_\l \G_\l \land \last{\g'_\l} = \G_\l) \ \land\\
\zips(\g',\g'_1,\ldots,\g'_n) \ \land\\
\G = \G_k \times (\prod_{\l \in \oneton - k} \G_\l).
\end{array}$
\eleqn
For the rest of this case, let $j$ range over $\oneton$, 
and $\l$ range over $\oneton - k$.
In (a), we have that $\g'_j \in \fptraces{A_j}$ for all $j$,
since $\g'_j < \g_j$ and $\g_j \in \fptraces{A_j}$ for all $j$.
Since we also have $\g' < \g$ and $\zips(\g',\g'_1,\ldots,\g'_n)$, 
we can apply the inductive hypothesis for $\g'$ to obtain
\bleqn{(b)}
$\begin{array}[b]{l}
\fa \al'_1 \in \ftexecs{A_1}{\g'_1},\ldots,\fa \al'_n \in \ftexecs{A_n}{\g'_n} :\\
 \ind \ex \al' \in \ftexecs{A}{\g'} : 
		(\fa j \in \oneton: \al' \pj A_j = \al'_j)
\end{array}$
\eleqn
By assumption, $\al_\l \in \ftexecs{A_\l}{\g_\l}$.
For all $\l$, let $\al'_\l = \al_\l$, and let
$s_\l = t_\l = \last{\al'_\l}$. Hence 
$\al'_\l \in \texecs{A_\l}{\g'_\l}$, since $\g'_\l \seq \g_\l$.


%
We now have two subcases.

\scase{2.1}{$\G_k = \last{\g'_k}$}\\
Let $\al'_k = \al_k$.
Since $\al'_\l = \al_\l$ for all $\l \in \oneton - k$,
we get $\al'_j = \al_j$ for all $j \in \oneton$.
Instantiating (b) for these $\al'_j$, we have the existence of an $\al'$ such
that
$\al' \in \fte{A}{\g'} \land (\fa j \in \n: \al' \pj A_j = \al'_j)$.
Now let $\al = \al'$. Hence $\trace{\al} = \trace{\al'} \seq \g'$ since 
$\al' \in \fte{A}{\g'}$.
Figure~\ref{thm:SIOA:finite-pretrace-pasting:case21} gives a diagram of
the relevant executions, pretraces, and external signatures for this case. 

By the case 2 assumption, $\g'$ is a pretrace, and so $\last{\g'}$ is an external signature.
So, we have
\begin{tabbing}
mmm\= \kill
           \>  $\last{\g'}$                     \\
$=$    \>  $\last{\g'_k} \times (\prod_{\l} \last{\g'_\l})$
                          \`  $\zips(\g',\g'_1,\ldots,\g'_n)$ and Definition~\ref{def:zips}\\
$=$    \>  $\last{\g'_k} \times (\prod_{\l} \G_\l)$  \`  (a) \\
$=$    \>  $\G_k \times (\prod_\l \G_\l)$            \`  subcase assumption \\
$=$    \>  $\G$       				    \`  (a) 
\end{tabbing}

By the case assumption, $\g = \g'\G$. Hence $\g \seq \g'$.
So, $\trace{\al} \seq \g$.
We have just established  $\al \in \execs{A}$,
$\al \pj A_j = \al_j$ for all $j \in \oneton$,
and $\trace{\al} \seq \g$. Hence (*) is established for subcase 2.1.

\scase{2.2}{$\G_k \neq \last{\g'_k}$}\\
In this case, we can find a finite
execution $\al'_k$, and finite execution fragment $\al''_k$ such that
$\al_k = \al'_k \cat (s_k \llas{\tau}{A_k} t_k) \cat \al''_k$,
where $s_k = \last{\al'_k}$, $\sext{A_k}{t_k} = \G_k$,  and $t_k = \first{\al''_k}$.
Figure~\ref{thm:SIOA:finite-pretrace-pasting:case22} gives a diagram of
the relevant executions, pretraces, and external signatures for this case. 
The transition $s_k \llas{\tau}{A_k} t_k$ must exist, since the external signature of
$A_k$ changed along $\g_k$.
Also, $\al''_k$ consists entirely of internal actions, and 
$\trace{\al''_k} \seq \G_k$, i.e., every state along $\al''_k$ has
external signature $\G_k$.

Hence $\al_k = \al'_k \cat (s_k \llas{\tau}{A_k} t_k) \cat \al''_k$,
where $s_k = \last{\al'_k}$ and $\sext{A_k}{t_k} = \G_k$ and
$\tau \in \sint{A_k}{s_k}$.

Now let $s = \tpl{s_1,\ldots,s_n}$, and let $t = \tpl{t_1,\ldots,t_n}$.
For all $\l \in \n - k$, let $\al'_\l = \al_\l$. Instantiating (b) for $\al'_k$
and the $\al'_\l$, we have the existence of an $\al'$ such that 
$\al' \in \fte{A}{\g'} \land 
 (\fa \l \in \n -k: \al' \pj A_\l = \al'_\l) \land 
 (\al' \pj A_k = \al'_k)$.
By (b) and Definition~\ref{def:SIOA:exec-projection}, we have
$s = \last{\al'}$. By Definition~\ref{def:SIOA:composition}, 
we have $(s,\tau,t) \in \autsteps{A}$.
Let $\al = \al' \cat (s \llas{\tau}{A} t) \cat \al''$, where $\al''$
is the finite-execution fragment of $A$ with first state $t$, and
whose transitions are exactly those of $\al''_k$, with no other SIOA
making any transitions. Since all the transitions of $\al''_k$ are internal,
Definition~\ref{def:SIOA:composition} shows that $\al''$ is indeed
an execution fragment of $A$. Furthermore, since the external
signature does not change along  $\al''_k$, it follows that the
external signature does not change along $\al''$, and hence must equal 
$\sext{A}{t}$ at all states along $\al''$.
Hence $\trace{\al''} \seq \sext{A}{t}$.
Finally, by its construction, we have $\al'' \pj A_k = \al''_k$.

By the above, $\al$ is well
defined, and is an execution of $A$.

\noindent
We now have\\
\smpage{
\begin{tabbing}
mmm\= \kill
         \>  $\sext{A}{t}$                      \\
$=$   \>  $\sext{A_k}{t_k} \times (\prod_\l \sext{A_\l}{t_\l})$
                                                          \`  definition of $t$\\
$=$   \>  $\G_k \times (\prod_\l \sext{A_\l}{t_\l})$      \`  definition of $t_k$ \\
$=$   \>  $\G_k \times (\prod_\l \G_\l)$                  \`  $t_\l = \last{\al'_\l}$, (a) \\
$=$   \>  $\G$       					  \`  (a) 
\end{tabbing}
}

\noindent
And so,
\begin{tabbing}
mmm\= \kill
             \>  $\trace{\al}$                        \\
$\seq$  \>  $\trace{\al'} \cat \trace{\al''}$  \` definition of $\al$\\
$\seq$  \>  $\trace{\al'} \cat \sext{A}{t}$    \` $\trace{\al''} \seq \sext{A}{t}$\\
$\seq$  \>  $\trace{\al'} \cat \G$             \` $\sext{A}{t} = \G$ established above\\
$\seq$  \>  $\g' \G$             \` $\al' \in \fte{A}{\g'}$, hence $\trace{\al'} \seq \g'$ \\
$\seq$  \>  $\g$       				    \`  case condition 
\end{tabbing}

\noindent
For $k$,
\begin{tabbing}
mmm\= \kill
          \>  $\al \pj A_k$                    \\
$=$    \>  $(\al' \pj A_k) \cat ({s_k} \llas{\tau}{A_k} {t_k}) \cat (\al'' \pj A_k)$
                   \` Definition~\ref{def:SIOA:exec-projection} and definition of $\al$\\
$=$    \>  $\al'_k  \cat (s_k \llas{\tau}{A_k} t_k) \cat (\al'' \pj A_k)$  
                   \` by (c), $\al' \pj A_k = \al'_k$ \\
$=$    \>  $\al'_k  \cat (s_k \llas{\tau}{A_k} t_k) \cat \al''_k$
                   \` by the preceding remarks, $\al'' \pj A_k = \al''_k$ \\
$=$    \>  $\al_k$  \` by definition of $\al'_k$, $\al''_k$:
                        $\al_k = \al'_k  \cat ({s_k} \llas{\tau}{A_k}{t_k}) \cat \al''_k$ 
\end{tabbing}

\noindent
For all $\l \in \oneton - k$,
\begin{tabbing}
mmm\= \kill
          \>  $\al \pj A_\l$      \\
$=$   \>  $\al' \pj A_\l$     
               \`  Definition~\ref{def:SIOA:exec-projection} and definition of $\al$\\
$=$    \>  $\al'_\l$		     \` by (c), $\al' \pj A_\l = \al'_\l$ \\
$=$    \>  $\al_\l$                \` by our choice of $\al'_\l$, $\al_\l = \al'_\l$ 
\end{tabbing}

We have just established  $\al \in \fexecs{A}$,
$\al \pj A_j = \al_j$ for all $j \in \oneton$,
and $\trace{\al} \seq \g$. Hence (*) is established for subcase 2.2.
Hence Case 2 of the inductive step is established.

Since both cases of the inductive step have been established, the
theorem follows.
\epr

\begin{figure}[t]
\begin{center}
\resizebox{5in}{!}{\input 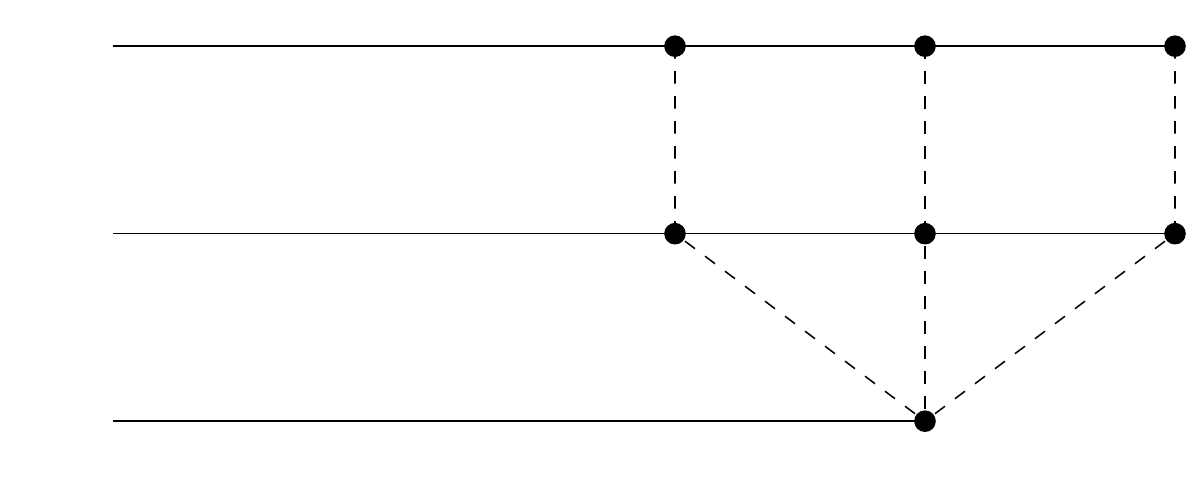_t}
\end{center}
\caption{Proof of Theorem~\ref{thm:SIOA:finite-pretrace-pasting}: illustration of case one}
\label{thm:SIOA:finite-pretrace-pasting:case1}
\end{figure}

\begin{figure}[t]
\begin{center}
\resizebox{4.0in}{!}{\input 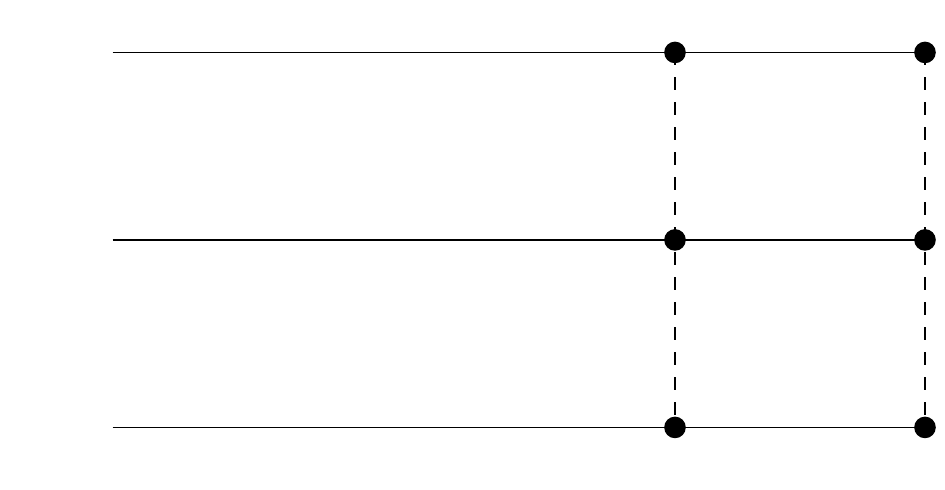_t}
\end{center}
\caption{Proof of Theorem~\ref{thm:SIOA:finite-pretrace-pasting}: illustration of subcase 2.1}
\label{thm:SIOA:finite-pretrace-pasting:case21}
\end{figure}

\begin{figure}[t]
\begin{center}
\resizebox{5in}{!}{\input 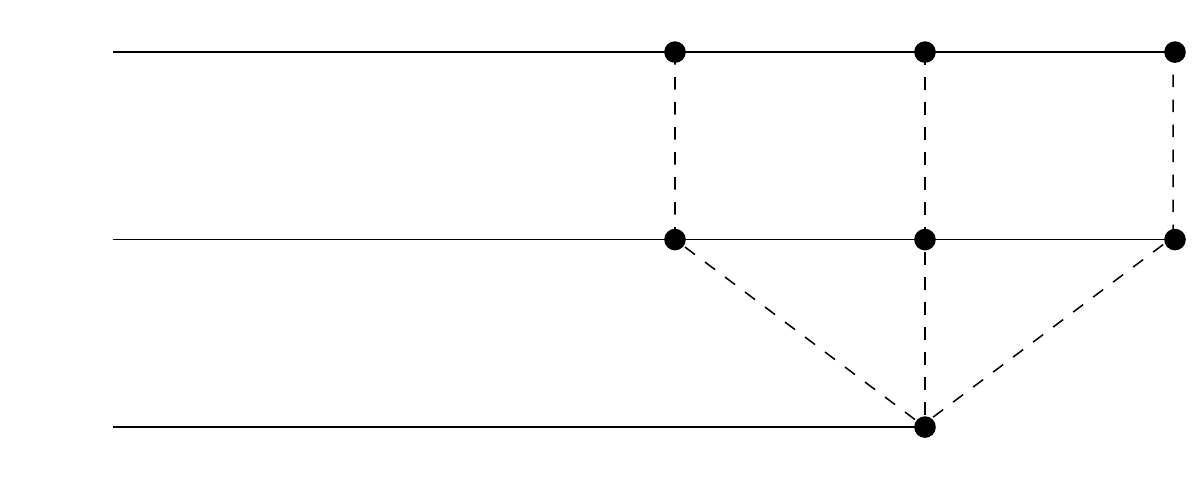_t}
\end{center}
\caption{Proof of Theorem~\ref{thm:SIOA:finite-pretrace-pasting}: illustration of subcase 2.2}
\label{thm:SIOA:finite-pretrace-pasting:case22}
\end{figure}



We use Theorem~\ref{thm:SIOA:finite-pretrace-pasting} and the definition of
$\zip$ (Definition~\ref{def:zip}) to establish a similar result for traces.

\bco[Finite-trace pasting for SIOA]
\label{cor:SIOA:finite-trace-pasting}
Let $A_1,\ldots,A_n$ be compatible SIOA, and let $A = A_1 \pl \cdots \pl A_n$.
Let $\b$ be a finite trace and assume that there exist
$\b_1,\ldots,\b_n$ such that
(1) $(\fa j \in \n: \b_j \in \ftraces{A_j})$, and (2) $\zip(\b,\b_1,\ldots,\b_n)$.
Then $\b \in \ftraces{A}$.
\eco
\bpr
By Definition~\ref{def:zip}, there exist finite
pretraces $\g$, $\g_1,\ldots,\g_n$ such that $\g \seq \b$,
$(\AND_{j \in \oneton} \g_j \seq \b_j)$, and 
$\zips(\g,\g_1,\ldots,\g_n)$.
By Theorem~\ref{thm:SIOA:finite-pretrace-pasting}, 
$\ex \al \in \fexecs{A} : \trace{\al} \seq \g$. Hence
$\trace{\al} \seq \b$. Since $\b$ is a trace, we obtain
$\trace{\al} = \b$.
Since $\b$ is finite, $\b \in \ftraces{A}$.
\epr

Theorem~\ref{thm:SIOA:pretrace-pasting} extends 
theorem~\ref{thm:SIOA:finite-pretrace-pasting} to infinite pretraces.
That is, if a set of pretraces $\g_j$ of $A_j$, for all $j \in \oneton$, can be
``zipped  up'' to generate a pretrace $\g$, then $\g$ is a pretrace of
$A = A_1 \pl \cdots \pl A_n$.
The proof uses the result of Theorem~\ref{thm:SIOA:finite-pretrace-pasting} to
construct an infinite family of finite executions, each of which is a prefix of
the next, and such that the trace of each finite execution is
stuttering-equivalent to a prefix of $\g$.  Taking the limit
of these executions under the prefix ordering then yields an infinite execution
$\al$ of $A$ whose trace is stuttering-equivalent to $\g$, as desired.

\bt[Pretrace pasting for SIOA]
\label{thm:SIOA:pretrace-pasting}
Let $A_1,\ldots,A_n$ be compatible SIOA, and let $A = A_1 \pl \cdots \pl A_n$.
Let $\g$ be a pretrace. If, for all $j \in \oneton$, 
$\g_j \in \ptraces{A_j}$ can be chosen so that 
$\zips(\g,\g_1,\ldots,\g_n)$ holds, then
$\ex \al \in \execs{A}: \trace{\al} \seq \g$.
\et
\bpr
If $\g$ is finite, then the result follows from
Theorem~\ref{thm:SIOA:finite-pretrace-pasting}.
Hence assume that $\g$ is infinite for the remainder of the proof.
By Proposition~\ref{prop:zips-on-prefixes}, we have
\bleqn{(a)}
$\fa i, i > 0 \land \ispretrace{\g |_i} : 
     (\fa j \in \n: \ispretrace{\g_j |_i}) \land
	\zips(\g |_i,\g_1 |_i,\ldots,\g_n |_i)$.
\eleqn
Hence, by $\g_j \in \ptraces{A_j}$ and Definition~\ref{def:pretrace}, we have
\bleqn{(b)}
$\fa i, i > 0 \land \ispretrace{\g |_i}, \fa j \in \n:
	\g_j |_i \in \ptraces{A_j}$
\eleqn
By (a,b) and Theorem~\ref{thm:SIOA:finite-pretrace-pasting}, we have
\bleqn{(c)}
$\fa i, i > 0 \land \ispretrace{\g |_i},
	\ex \al^i \in \execs{A}: \trace{\al^i} \seq \g |_i$
\eleqn
Now let $i', i''$ be such that $i' < i''$,
$\ispretrace{\g |_{i'}}$, $\ispretrace{\g |_{i''}}$, and there is no
$i' < i < i''$ such that $\ispretrace{\g |_i}$.
By Definition~\ref{def:pretrace}, we have that either
${\g |_{i''}} = (\g |_{i'}) a \G$ or 
${\g |_{i''}} = (\g |_{i'}) \G$,
for some action $a$ and external signature $\G$.
We can show that there exist 
$\al^{i'} \in \execs{A}$, $\al^{i''} \in \execs{A}$ such that 
$\al^{i'} < \al^{i''}$, 
$\trace{\al^{i'}} \seq {\g |_{i'}}$,
$\trace{\al^{i''}} \seq {\g |_{i''}}$.
This is established by the same argument as used for the inductive
step in the proof of Theorem~\ref{thm:SIOA:finite-pretrace-pasting}.
In essence, $\al^{i''}$ is obtained inductively as an extension of $\al^{i'}$.
We omit the (repetitive) details.

Let $\prefixes{\g} = \{i ~|~ i > 0 \land \ispretrace{\g |_i}\}$.
By (c), we have
\blq{(d)}
mmm\= \kill
there exists a set $\{\al^i ~|~ i \in \prefixes{\g}\}$ such that\\
   \>$\fa i \in \prefixes{\g}: \al^i \in \execs{A} \land \trace{\al^i} \seq \g |_i$\\
   \>$\fa i', i'' \in \prefixes{\g}, i' < i'': \al^{i'} \le \al^{i''}$
\elq
Now let $\al$ be the unique minimum sequence that satisfies
$\fa i \in \prefixes{\g}:  \al^i < \al$. $\al$ exists by (d).
Since every triple $(s,a,s')$ along $\al$ occurs in some $\al^i$, it
must be a step of $A$. Hence $\al$ is an execution of $A$.

We now show, by contradiction, that $\trace{\al} \seq \g$. 
Suppose not, and let $\b = \trace{\al}$. Then 
$\b \ne r(\g)$ by Definition~\ref{def:seq}. Since $\b$ and $r(\g)$ are
sequences, they must differ at some position. Let $i_0$ be the
smallest number such that $\b(i_0) \ne r(\g)(i_0)$. Hence 
$\b |_{i_0} \ne r(\g) |_{i_0}$. Now the trace of a prefix of $\al$
is a prefix of $\b$, by Definition~\ref{def:SIOA:execution}. 
Hence there can be no prefix of $\al$
whose trace is $r(\g) |_{i_0}$, \ie 
$\neg (\ex i \ge 0: \trace{\al|_i} =  r(\g) |_{i_0})$.
Let $i_1$ be such that $r(\g |_{i_1}) = r(\g) |_{i_0}$.
Hence 
$\neg (\ex i \ge 0: \trace{\al|_i} =  r(\g |_{i_1}))$.
And so 
$\neg (\ex i \ge 0: \trace{\al|_i} \seq \g |_{i_1})$.
But this contradicts (d), and so we are done.
\epr



We use Theorem~\ref{thm:SIOA:pretrace-pasting} and the definition of $\zip$
(Definition~\ref{def:zip}) to establish Corollary~\ref{cor:SIOA:trace-pasting},
which extends corollary~\ref{cor:SIOA:finite-trace-pasting} to infinite traces.
Corollary~\ref{cor:SIOA:trace-pasting} gives our main trace pasting result, and
is also used to establish trace substitutivity,
Theorem~\ref{thm:SIOA:trace-substitutivity}, below.

\bco[Trace pasting for SIOA]
\label{cor:SIOA:trace-pasting}
Let $A_1,\ldots,A_n$ be compatible SIOA, and let $A = A_1 \pl \cdots \pl A_n$.
Let $\b$ be a trace and assume that there exist
$\b_1,\ldots,\b_n$ such that
(1) $(\fa j \in \n: \b_j \in \traces{A_j})$, and 
(2) $\zip(\b,\b_1,\ldots,\b_n)$.
Then $\b \in \traces{A}$.
\eco
\bpr
By Definition~\ref{def:zip}, there exist 
pretraces $\g$, $\g_1,\ldots,\g_n$ such that $\g \seq \b$,
$\AND_{j \in \oneton} \g_j \seq \b_j$, and 
$\zips(\g,\g_1,\ldots,\g_n)$.
By Theorem~\ref{thm:SIOA:pretrace-pasting}, 
$\ex \al \in \execs{A} : \trace{\al} \seq \g$. Hence
$\trace{\al} \seq \b$. Since $\b$ is a trace, we obtain
$\trace{\al} = \b$.
Hence $\b \in \traces{A}$.
\epr

\subsection{Trace Substitutivity for SIOA}
\label{subsec:SIOA:trace-substitutivity}

To establish trace substitutivity, we first need some preliminary
technical results.  These establish that for an execution $\al$
of  $A = A_1 \pl \cdots \pl A_n$ and its projections 
$\al \pj A_1, \ldots, \al \pj A_n$,
that there exist corresponding (in the sense of being stuttering equivalent to
the trace of) pretraces $\g, \g_1 ,\ldots,\g_n$ respectively which ``zip up,''
i.e., $\zips(\g,\g_1,\ldots,\g_n)$ holds.
Our first proposition establishes this result for finite executions.

\bp
\label{prop:finite-zips-proj}
Let $A_1,\ldots,A_n$ be compatible SIOA, and let $A = A_1 \pl \cdots \pl A_n$.
Let $\al$ be any finite execution of $A$. 
Then, there exist finite pretraces $\g, \g_1 ,\ldots,\g_n$ such that
(1) $\g \seq \trace{\al}$, 
(2) $(\fa j \in \n: \g_j \seq \trace{\al \pj A_j})$, and
(3) $\zips(\g,\g_1,\ldots,\g_n)$.
\ep
\bpr
By induction on
$|\al|$. For the rest of the proof, fix $\al$ to be an arbitrary
finite execution of $A$.

\textit{Base case}: $|\al| = 0$.
Then $\al$ consists of a single state $s$. 
By Definition~\ref{def:SIOA:composition}, we have  
$\sext{A}{s} = \prod_{j \in \oneton} \sext{A_j}{s \pj A_j}$.
Let $\g$ consist of the single element $\sext{A}{s}$ and for all $j \in \oneton$,
let $\g_j$ consist of the single element $\sext{A_j}{s \pj A_j}$.
Hence $\g = \prod_{j \in \oneton} \g_j$. By Definition~\ref{def:zips}, 
$\zips(\g,\g_1,\ldots,\g_n)$ holds.

\textit{Induction step}: $|\al| > 0$.
There are two cases to consider, according to
whether the last transition of $\al$ is an external or internal action
of $A$.

\case{1}{$\al = \al' a t$ for some action $a$ and state $t$, where
         $a \in \sextacts{A}{\last{\al'}}$}\\
%
We apply the induction hypothesis to $\al'$ to obtain
\blq{(a)}
mmm\= \kill
there exist pretraces $\g', \g'_1 ,\ldots,\g'_n$ such that\\
   \>$\g' \seq \trace{\al'}$, $(\fa {j \in \n}: \g'_j \seq \trace{\al' \pj A_j})$, and
     $\zips(\g',\g'_1,\ldots,\g'_n)$.
\elq
Let $s = \last{\al'}$, and for all $j \in \n$, let $s_j = s \pj A_j$, and $t_j = t \pj A_j$.
Let $\varphi = \{j ~|~ a \in \sextacts{A_j}{s_j}\}$.
Let $k$ range over $\varphi$ and $\l$ range over $\oneton - \varphi$.
Hence, $\AND_\l a \not\in \ssigacts{A_\l}{s_\l}$.
Hence, by Definition~\ref{def:SIOA:composition}, 
$\AND_\l s_\l = t_\l$.

By Definition~\ref{def:SIOA:exec-projection}, for all $k$, we have
$\al \pj A_k = (\al' \pj A_k) a t_k$.
Hence $\trace{\al \pj A_k}$ = $\trace{\al' \pj A_k} \cat a \cat \sext{A_k}{t_k}$.
For all $k$, we have $\g'_k \seq \trace{\al' \pj A_k}$ by (a). Let 
$\g_k = \g'_k \cat a \cat \sext{A_k}{t_k}$. Hence $\g_k \seq \trace{\al \pj A_k}$.

By Definition~\ref{def:SIOA:exec-projection}, for all $\l$, we have
$\al \pj A_\l = \al' \pj A_\l$.  
Hence $\trace{\al \pj \l}$ = $\trace{\al' \pj \l}$.
Let $\g_\l = \g'_\l \cat \sext{A_\l}{s_\l} \cat \sext{A_\l}{s_\l}$.
By (a), we have $\g'_\l \seq \trace{\al' \pj A_\l}$ for all $\l$. From $s = \last{\al'}$, we get
$\last{\g'_\l}$ = $\sext{A_\l}{\last{\al' \pj \l}}$ = $\sext{A_\l}{s_\l}$.
Hence $\g_\l \seq \g'_\l$. 
Hence $\g_\l \seq \g'_\l  \seq \trace{\al' \pj A_\l} = \trace{\al \pj A_\l}$. Thus, 
$\g_\l \seq \trace{\al \pj A_\l}$.

Let $\g = \g' \cat a \cat \sext{A}{t}$. Now 
$\trace{\al}$ = $\trace{\al' a t}$ = $\trace{\al'} \cat a \cat \sext{A}{t}$.
From (a), $\g' \seq \trace{\al'}$. Hence 
$\g$ = $\g' \cat a \cat \sext{A}{t}$ $\seq$ $\trace{\al'} \cat a \cat \sext{A}{t}$ =
$\trace{\al}$. So, $\g \seq \trace{\al}$.

From the previous three paragraphs, we have
\bleqn{(b)}
	$\g \seq \trace{\al} \land \AND_{j \in \oneton} \g_j \seq \trace{\al \pj A_j}$.
\eleqn
We now establish $\zips(\g,\g_1,\ldots,\g_n)$. We show that all
clauses of Definition~\ref{def:zips} are satisfied for $\g,\g_1,\ldots,\g_n$.
By (a), $\zips(\g',\g'_1,\ldots,\g'_n)$.
We will use this repeatedly below.

By $\zips(\g',\g'_1,\ldots,\g'_n)$, we have $|\g'| = |\g'_1| = \cdots = |\g'_n|$.
By construction $|\g| = |\g'| + 2$, and for all $j \in \oneton$,
$|\g_j| = |\g'_j| + 2$. Hence $|\g| = |\g_1| = \cdots = |\g_n|$.
So clause~\ref{def:zips:length} is satisfied.

By definition of $\l$, we have $\AND_\l a \not\in \sext{A_\l}{s_\l}$. 
By construction, the last three elements of $\g_\l$ (for all $\l$) are
all $\sext{A_\l}{s_\l}$. By this and $\zips(\g',\g'_1,\ldots,\g'_n)$,
we conclude that clause~\ref{def:zips:external-action} is satisfied.

By Definition~\ref{def:SIOA:composition}, we have 
$\sext{A}{t} = \prod_{j \in \oneton} \sext{A_j}{t_j}$.
By construction, we have $\last{\g} = \sext{A}{t}$, 
$\AND_k \last{\g_k} = \sext{A_k}{t_k}$, and
$\AND_\l \last{\g_\l} = \sext{A_\l}{s_\l}$.
From $\AND_\l s_\l = t_\l$ (established above), we get 
$\AND_\l \last{\g_\l} = \sext{A_\l}{t_\l}$.
Hence $\last{\g} = \prod_{j \in \oneton} \last{\g_j}$.
By this and $\zips(\g',\g'_1,\ldots,\g'_n)$,
we conclude that clause~\ref{def:zips:signature} is satisfied.

By $\zips(\g',\g'_1,\ldots,\g'_n)$ and the construction of
$\g,\g_1,\ldots,\g_n$ (specifically, that $a$ is an external action),
we conclude that clause~\ref{def:zips:internal-action} is satisfied.

Hence, we have established $\zips(\g,\g_1,\ldots,\g_n)$.
Together with (b), this establishes the inductive step in this case.

\case{2}{$\al = \al' a t$ for some action $a$ and state $t$, where
         $a \in \sint{A}{\last{\al'}}$}\\
We can apply the induction hypothesis to $\al'$ to obtain
\blq{(a)}
mmm\= \kill
there exist pretraces $\g', \g'_1 ,\ldots,\g'_n$ such that\\
   \>$\g' \seq \trace{\al'}$, $(\fa j \in \n: \g'_j \seq \trace{\al' \pj A_j})$, and
     $\zips(\g',\g'_1,\ldots,\g'_n)$.
\elq
Let $s = \last{\al'}$, and for all $j \in \n$, let $s_j = s \pj A_j$, and $t_j = t \pj A_j$.
Since $a$ is an internal action of $A$, it is executed by exactly one
of the $A_1,\ldots,A_n$. Thus, there is some
$k \in \oneton$ such that $a \in \sint{A_k}{s_k}$,
and for all $\l \in \oneton - k$, $a \not\in \ssigacts{A_\l}{s_\l}$.
Let $\l$ range over $\oneton - k$ for the rest of this case.
Hence $\AND_\l s_\l = t_\l$, by Definition~\ref{def:SIOA:composition}.

By Definition~\ref{def:SIOA:exec-projection}, we have
$\al \pj A_k = (\al' \pj A_k) a t_k$.
Hence $\trace{\al \pj A_k}$ = $\trace{\al' \pj A_k} \cat \sext{A_k}{t_k}$.
We have $\g'_k \seq \trace{\al' \pj A_k}$ by (a). Let 
$\g_k = \g'_k \cat \sext{A_k}{t_k}$. Hence $\g_k \seq \trace{\al \pj A_k}$.

By Definition~\ref{def:SIOA:exec-projection}, for all $\l$, we have
$\al \pj A_\l = \al' \pj A_\l$.  
Hence $\trace{\al \pj \l}$ = $\trace{\al' \pj \l}$.
Let $\g_\l = \g'_\l \cat \sext{A_\l}{s_\l}$.
By (a), $\g'_\l \seq \trace{\al' \pj A_\l}$ for all $\l$. From $s = \last{\al'}$, we get
$\last{\g'_\l}$ = $\sext{A_\l}{\last{\al' \pj \l}}$ = $\sext{A_\l}{s_\l}$.
Hence $\g_\l \seq \g'_\l$. 
Hence $\g_\l \seq \g'_\l  \seq \trace{\al' \pj A_\l} = \trace{\al \pj A_\l}$. Thus, 
$\g_\l \seq \trace{\al \pj A_\l}$.

Let $\g = \g' \cat \sext{A}{t}$. Now 
$\trace{\al}$ = $\trace{\al' a t}$ = $\trace{\al'} \cat \sext{A}{t}$.
From (a), $\g' \seq \trace{\al'}$. Hence 
$\g$ = $\g' \cat \sext{A}{t}$ $\seq$ $\trace{\al'} \cat \sext{A}{t}$ =
$\trace{\al}$. So, $\g \seq \trace{\al}$.

From the previous three paragraphs, we have
\bleqn{(b)}
	$\g \seq \trace{\al} \land \AND_{j \in \oneton} \g_j \seq \trace{\al \pj A_j}$.
\eleqn
We now establish $\zips(\g,\g_1,\ldots,\g_n)$. We show that all
clauses of Definition~\ref{def:zips} are satisfied for $\g,\g_1,\ldots,\g_n$.
By (a), $\zips(\g',\g'_1,\ldots,\g'_n)$.
We will use this repeatedly below.

By $\zips(\g',\g'_1,\ldots,\g'_n)$, we have $|\g'| = |\g'_1| = \cdots = |\g'_n|$.
By construction $|\g| = |\g'| + 1$, and for all $j \in \oneton$,
$|\g_j| = |\g'_j| + 1$. Hence $|\g| = |\g_1| = \cdots = |\g_n|$.
So clause~\ref{def:zips:length} is satisfied.

By $\zips(\g',\g'_1,\ldots,\g'_n)$ and the construction of
$\g,\g_1,\ldots,\g_n$ (specifically, that $a$ is an internal action),
we conclude that clause~\ref{def:zips:external-action} is satisfied.

By Definition~\ref{def:SIOA:composition}, we have 
$\sext{A}{t} = \prod_{j \in \oneton} \sext{A_j}{t_j}$.
By construction, we have $\last{\g} = \sext{A}{t}$, 
$\last{\g_k} = \sext{A_k}{t_k}$, and
$\AND_\l \last{\g_\l} = \sext{A_\l}{s_\l}$.
From $\AND_\l s_\l = t_\l$ (established above), we get 
$\AND_\l \last{\g_\l} = \sext{A_\l}{t_\l}$.
Hence $\last{\g} = \prod_{j \in \oneton} \last{\g_j}$.
By this and\linebreak $\zips(\g',\g'_1,\ldots,\g'_n)$,
we conclude that clause~\ref{def:zips:signature} is satisfied.

By construction, the last two elements of $\g_\l$ (for all $\l$) are
both $\sext{A_\l}{s_\l}$. By this and $\zips(\g',\g'_1,\ldots,\g'_n)$,
we conclude that clause~\ref{def:zips:internal-action} is satisfied.

Hence, we have established $\zips(\g,\g_1,\ldots,\g_n)$.
Together with (b), this establishes the inductive step in this case.

Having established both possible cases, we conclude that the inductive
step holds.
\epr

\bp
\label{prop:finite-zip-proj}
Let $A_1,\ldots,A_n$ be compatible SIOA, and let $A = A_1 \pl \cdots \pl A_n$.
Let $\b$ be any finite trace of $A$.
Then, there exist $\b_1,\ldots,\b_n$ such that  
(1) $(\fa j \in \n: \b_j \in \ftraces{A_j})$, and
(2) $\zip(\b,\b_1,\ldots,\b_n)$.
\ep
\bpr
Since $\b \in \ftraces{A}$, 
there exists $\al \in \fexecs{A}$ such that $\trace{\al} = \b$.
Applying Proposition~\ref{prop:finite-zips-proj} to $\al$, we have that
there exist finite pretraces $\g,\g_1,\ldots,\g_n$ such that
$\g \seq \trace{\al}$, $(\fa j \in \n: \g_j \seq \trace{\al \pj A_j})$,
and $\zips(\g,\g_1,\ldots,\g_n)$.

For all $j \in \oneton$, let $\b_j = \trace{\al \pj A_j}$.
By Theorem~\ref{thm:SIOA:exec-projection}, $\al \pj A_j \in \execs{A_j}$.
Hence $\al \pj A_j \in \fexecs{A_j}$ since $\al$ is finite.
Hence $\b_j \in \ftraces{A_j}$. Thus, (1) is established.

From $\g_j \seq \trace{\al \pj A_j}$ and $\b_j = \trace{\al \pj A_j}$,
we have $\b_j \seq \g_j$, for all $j \in \oneton$.
From $\g \seq \trace{\al}$ and $\b = \trace{\al}$, we have $\g \seq \b$.
Hence, by Definition~\ref{def:zip} and $\zips(\g,\g_1,\ldots,\g_n)$,
we conclude $\zip(\b,\b_1,\ldots,\b_n)$. Hence (2) is established.
\epr

\bt[Finite-trace Substitutivity for SIOA]
\label{thm:SIOA:finite-trace-substitutivity}
Let $A_1,\ldots,A_n$ be compatible SIOA, and let $A = A_1 \pl \cdots \pl A_n$.
For some $k \in \n$, let $A_1,\ldots,A_{k-1},A'_k,A_{k+1},\ldots,A_n$ be compatible SIOA, and let 
$A' = A_1 \pl \cdots \pl A_{k-1} \pl A'_k \pl A_{k+1} \pl \cdots \pl A_n$.
Assume also that $\ftraces{A_k} \sub \ftraces{A'_k}$. 
Then $\ftraces{A} \sub \ftraces{A'}$.
\et
\bpr
Let $\b$ be an arbitrary finite trace of $A$. 
Then, by Proposition~\ref{prop:finite-zip-proj}, there exist $\b_1,\ldots,\b_n$ such that
$\zip(\b,\b_1,\ldots,\b_n)$, and $(\fa j \in \n: \b_j \in \ftraces{A_j})$.
By assumption, $\ftraces{A_k} \sub \ftraces{A'_k}$.  Hence $\b_k \in \ftraces{A'_k}$.
Thus, we have $\b_k \in \ftraces{A'_k}$, 
$(\fa \l \in \n - k: \b_\l \in \ftraces{A_\l})$, and 
$\zip(\b,\b_1,\ldots,\b_n)$.
Hence, by Corollary~\ref{cor:SIOA:finite-trace-pasting}, $\b \in \ftraces{A'}$.
Since $\b$ was chosen arbitrarily, we have $\ftraces{A} \sub \ftraces{A'}$.
\epr

To extend Theorem~\ref{thm:SIOA:finite-trace-substitutivity} to infinite traces, 
we start with Proposition~\ref{prop:finite-prefixes-zips-proj}, which extends the result of
Proposition~\ref{prop:finite-zips-proj} to the (infinite set of) finite prefixes 
of an infinite execution. That is, for 
every finite prefix $\al |_i$ of an infinite execution $\al$
of  $A = A_1 \pl \cdots \pl A_n$, and its projections 
$(\al |_i) \pj A_1, \ldots, (\al |_i) \pj A_n$,
there exist corresponding (in the sense of being stuttering equivalent to
the trace of) pretraces $\g^i$ and $\g^i_1 ,\ldots,\g^i_n$ respectively which
``zip up,'' i.e., $\zips(\g^i,\g^i_1,\ldots,\g^i_n)$ holds.
Furthermore, the pretraces  
$\g^{i-1}, \g^{i-1}_1 ,\ldots,\g^{i-1}_n$
corresponding to 
$\al |_{i-1}, (\al |_{i-1}) \pj A_1, \ldots, (\al |_{i-1}) \pj A_n$,
respectively are prefixes of the pretraces 
$\g^{i}, \g^{i}_1 ,\ldots,\g^{i}_n$, respectively.

\bp
\label{prop:finite-prefixes-zips-proj}
Let $A_1,\ldots,A_n$ be compatible SIOA, and let $A = A_1 \pl \cdots \pl A_n$.
Let $\al$ be any execution of $A$. 
Then, there exists a countably infinite set of tuples of finite pretraces \linebreak
$\{ \tpl{\g^i, \g^i_1 ,\ldots,\g^i_n} ~|~ 0 \leq i \leq |\al| \land i \ne \omega\}$ such
that:
   \bn

   \item \label{prop:finite-prefixes-zips-proj:pretraces}
         $\fa i, 0 \leq i \leq |\al| \land i \ne \omega: \g^i \seq \trace{\al |_i} \land
         (\AND_{j \in \oneton} \g^i_j \seq \trace{(\al |_i) \pj A_j})$,

   \item \label{prop:finite-prefixes-zips-proj:zips}
         $\fa i, 0 \leq i \leq |\al| \land i \ne \omega:  \zips(\g^i,\g^i_1,\ldots,\g^i_n)$, and

   \item \label{prop:finite-prefixes-zips-proj:prefix}
         $\fa i, 0 < i \leq |\al| \land i \ne \omega: \g^{i-1} < \g^i \land 
         (\AND_{j \in \oneton} \g^{i-1}_j < \g^i_j)$.

   \en
\ep
\bpr
By induction on $i$.

\noindent
\pcase{Base case}: $i = 0$. Then, $\al |_0$ consists of a single
state $s$. The proof then parallels the base case of the proof of 
Proposition~\ref{prop:finite-zips-proj}. We omit the repetitive details.

\vspace{1.0ex}

\noindent
\pcase{Induction step}: $i > 0$.
Assume the inductive hypothesis for $0 \leq i < m$, and establish it
for $i = m$. By the inductive hypothesis, we obtain\\
\bleqn{(a)}
\parbox[c]{6in}{
there exists a set of tuples of finite pretraces 
$\{ \tpl{\g^i, \g^i_1 ,\ldots,\g^i_n} ~|~ 0 \leq i < m \}$ such that:
   \bn

   \item \label{ih1}
         $\fa i, 0 \leq i < m:
	          \g^i \seq \trace{\al |_i} \land
                  (\AND_{j \in \oneton} \g^i_j \seq \trace{(\al |_i) \pj A_j})$,

   \item \label{ih2}
         $\fa i, 0 \leq i < m: \zips(\g^i,\g^i_1,\ldots,\g^i_n)$, and

   \item \label{ih3}
         $\fa i, 0 < i < m:
	       \g^{i-1} < \g^i \land (\AND_{j \in \oneton} \g^{i-1}_j < \g^i_j)$.

   \en
}
\eleqn
We now establish the inductive hypothesis for $i = m$, that is:\\
\bleqn{(*)}
\parbox[c]{6in}{
there exists a tuple of pretraces $\tpl{\g^m, \g^m_1 ,\ldots,\g^m_n}$ such that
   \bn

   \item \label{is1} $\g^m \seq \trace{\al |_m} \land
      (\AND_{j \in \oneton} \g^m_j \seq \trace{(\al |_m) \pj A_j})$,

   \item \label{is2} $\zips(\g^m,\g^m_1,\ldots,\g^m_n)$, and

   \item \label{is3} $\g^{m-1} < \g^m \land (\AND_{j \in \oneton} \g^{m-1}_j < \g^m_j)$.
   \en
}
\eleqn
There are two cases.

\case{1}{$\al |_m = (\al |_{m-1}) a t$ for some action $a$ and state $t$, where
         $a \in \sextacts{A}{\last{\al |_{m-1}}}$}

\case{2}{$\al |_m = (\al |_{m-1}) a t$ for some action $a$ and state $t$, where
         $a \in \sint{A}{\last{\al |_{m-1}}}$}\\

\noindent
To establish Clauses~\ref{is1} and \ref{is2} of (*), the proofs for these cases
proceed in exactly the same way as the proofs for cases 1 and 2 in the proof of
Proposition~\ref{prop:finite-zips-proj}, with 
$\al |_{m-1}$ playing the role of $\al'$, and $\al |_{m}$ playing the role of $\al$.

To establish Clause~\ref{is3} of (*), we note that, in both cases 1 and 2 in the 
proof of Proposition~\ref{prop:finite-zips-proj}, $\g,\g_1,\ldots,\g_n$ are
constructed as extensions of $\g',\g'_1,\ldots,\g'_n$, respectively.
Our proof here proceeds in exactly the same way, with 
$\g^{m-1},\g^{m-1}_1,\ldots,\g^{m-1}_n$ playing the role of $\g',\g'_1,\ldots,\g'_n$, 
respectively, and
$\g^m,\g^m_1,\ldots,\g^m_n$ playing the role of $\g,\g_1,\ldots,\g_n$, respectively.
We omit the details.
\epr

Note that we include $i \ne \omega$ in the range of $i$ to emphasize that, for infinite executions
$\al$, the range $0 \le i \le |\al|$ does not include $i = \omega$.

Proposition~\ref{prop:zips-proj} establishes the result of 
Proposition~\ref{prop:finite-zips-proj} for infinite executions. The proof uses
Proposition~\ref{prop:finite-prefixes-zips-proj}
and constructs the required pretraces 
$\g, \g_1 ,\ldots,\g_n$ by taking the limit under the prefix ordering of the 
$\g^{i}, \g^{i}_1 ,\ldots,\g^{i}_n$ given in 
Proposition~\ref{prop:finite-prefixes-zips-proj}, as $i$ tends to $\omega$.

\bp
\label{prop:zips-proj} 
Let $A_1,\ldots,A_n$ be compatible SIOA, and let $A = A_1 \pl \cdots \pl A_n$.
Let $\al$ be any execution of $A$. 
Then, there exist pretraces $\g, \g_1 ,\ldots,\g_n$ such that
(1) $\g \seq \trace{\al}$, (2) $(\fa j \in \n: \g_j \seq \trace{\al \pj A_j})$, and
(3) $\zips(\g,\g_1,\ldots,\g_n)$.
\ep
\bpr
If $\al$ is finite, then the result follows from
Proposition~\ref{prop:finite-zips-proj}.
Hence, assume that $\al$ is infinite in the rest of the proof.
By Proposition~\ref{prop:finite-prefixes-zips-proj}, we have\\
\bleqn{(a)}
\parbox[c]{6in}{
there exists a countably infinite set of tuples of finite pretraces 
$\{ \tpl{\g^i, \g^i_1 ,\ldots,\g^i_n} ~|~ 0 \leq i \}$ such
that:

   \bn

   \item \label{a1}
         $\fa i, 0 \leq i: \g^i \seq \trace{\al |_i} \land
                           (\AND_{j \in \oneton} \g^i_j \seq \trace{(\al |_i) \pj A_j})$, 

   \item \label{a2}
         $\fa i, 0 \leq i: \zips(\g^i,\g^i_1,\ldots,\g^i_n)$, and

   \item \label{a3}
         $\fa i, 0 < i: \g^{i-1} < \g^i \land 
         (\AND_{j \in \oneton} \g^{i-1}_j < \g^i_j)$.

   \en}
\eleqn
Since the set of tuples $\{ \tpl{\g^i, \g^i_1 ,\ldots,\g^i_n} ~|~ 0 \leq i \}$ is countably
infinite, and $\g^{i-1}$ is a proper prefix of  $\g^i$ for all $i > 0$, we 
can define
$\g$ to be the unique sequence such that $\fa i, 0 \leq i: \g^i < \g$.
Likewise, for all $j \in \n$, 
we can define $\g_j$ to be the unique sequence such that $\fa i, 0 \leq i: \g^i_j < \g_j$.
From clause~\ref{a2} of (a) and Definition~\ref{def:zips}, we conclude
$\zips(\g,\g_1,\ldots,\g_n)$.


We now show, by contradiction, that $\trace{\al} \seq \g$. 
Suppose not, and let $\b = \trace{\al}$. Then 
$\b \ne r(\g)$ by Definition~\ref{def:seq}. Since $\b$ and $r(\g)$ are
sequences, they must differ at some position. Let $i_0$ be the
smallest number such that $\b(i_0) \ne r(\g)(i_0)$. Hence 
$\b |_{i_0} \ne r(\g) |_{i_0}$. Now the trace of a prefix of $\al$
is a prefix of $\b$, by Definition~\ref{def:SIOA:execution}. 
Hence there can be no prefix of $\al$
whose trace is $r(\g) |_{i_0}$, \ie 
$\neg (\ex i \ge 0: \trace{\al|_i} =  r(\g) |_{i_0})$.
Let $i_1$ be such that $r(\g |_{i_1}) = r(\g) |_{i_0}$.
Hence 
$\neg (\ex i \ge 0: \trace{\al|_i} =  r(\g |_{i_1}))$.
And so 
$\neg (\ex i \ge 0: \trace{\al|_i} \seq \g |_{i_1})$.
But this contradicts (a), and so we are done.
In a similar manner, we show  $\g_j \seq \trace{\al \pj A_j})$ for all $j \in \n$.
Hence, the proposition is established.
\epr

Proposition~\ref{prop:zip-proj} ``lifts'' the result of
Proposition~\ref{prop:zips-proj} from executions to traces; it shows that if
$\b$ is a trace of $A = A_1 \pl \cdots \pl A_n$ then there exist traces
$\b_1,\ldots,\b_n$ of $A_1,\ldots,A_n$ respectively which zip up to $\b$, that
is $\zip(\b,\b_1,\ldots,\b_n)$ holds. The proof is a straightforward application
of Proposition~\ref{prop:zips-proj}.

\bp
\label{prop:zip-proj}
Let $A_1,\ldots,A_n$ be compatible SIOA, and let $A = A_1 \pl \cdots \pl A_n$.
Let $\b$ be an arbitrary element of $\traces{A}$. Then, there
exist $\b_1,\ldots,\b_n$ such that  
(1) for all $j \in \oneton: \b_j \in \traces{A_j}$, and
(2) $\zip(\b,\b_1,\ldots,\b_n)$.
\ep
\bpr
Since $\b \in \traces{A}$, 
there exists $\al \in \execs{A}$ such that $\trace{\al} = \b$.
Applying Proposition~\ref{prop:zips-proj} to $\al$, we have that
there exist pretraces $\g,\g_1,\ldots,\g_n$ such that
$\g \seq \trace{\al}$, $(\AND j \in \oneton: \g_j \seq \trace{\al \pj A_j})$,
and $\zips(\g,\g_1,\ldots,\g_n)$.

For all $j \in \oneton$, let $\b_j = \trace{\al \pj A_j}$.
By Theorem~\ref{thm:SIOA:exec-projection}, $\al \pj A_j \in \execs{A_j}$.
Hence $\b_j \in \traces{A_j}$. Thus, (1) is established.

From $\g_j \seq \trace{\al \pj A_j}$ and $\b_j = \trace{\al \pj A_j}$,
we have $\b_j \seq \g_j$, for all $j \in \oneton$.
From $\g \seq \trace{\al}$ and $\b = \trace{\al}$, we have $\g \seq \b$.
Hence, by Definition~\ref{def:zip} and $\zips(\g,\g_1,\ldots,\g_n)$,
we conclude $\zip(\b,\b_1,\ldots,\b_n)$. Hence (2) is established.
\epr

Theorem~\ref{thm:SIOA:trace-substitutivity}
gives one of our main results: trace substitutivity. This states that,
in a composition of $n$ SIOA, if one of the SIOA is replaced by
another whose traces are a subset of those of the SIOA that was
replaced, then this cannot increase the set of traces of the entire
composition.

\bt[Trace Substitutivity for SIOA]
\label{thm:SIOA:trace-substitutivity}
Let $A_1,\ldots,A_n$ be compatible SIOA, and let $A = A_1 \pl \cdots \pl A_n$.
For some $k \in \n$, let $A_1,\ldots,A_{k-1},A'_k,A_{k+1},\ldots,A_n$ be compatible SIOA, and let 
$A' = A_1 \pl \cdots \pl A_{k-1} \pl A'_k \pl A_{k+1} \pl \cdots \pl A_n$.
Assume also that $\traces{A_k} \sub \traces{A'_k}$. Then $\traces{A} \sub \traces{A'}$.
\et
\bpr
Let $\b$ be an arbitrary trace of $A$. 
Then, by Proposition~\ref{prop:zip-proj}, there exist $\b_1,\ldots,\b_n$ such that
$\zip(\b,\b_1,\ldots,\b_n)$, and $(\fa j \in \n: \b_j \in \traces{A_j})$.
By assumption, $\traces{A_k} \sub \traces{A'_k}$.  Hence $\b_k \in \traces{A'_k}$.
Thus, we have $\b_k \in \traces{A'_k}$, 
$(\fa \l \in \n - k: \b_\l \in \traces{A_\l})$, and 
$\zip(\b,\b_1,\ldots,\b_n)$.
Hence, by Corollary~\ref{cor:SIOA:trace-pasting}, $\b \in \traces{A'}$.
Since $\b$ was chosen arbitrarily, we have $\traces{A} \sub \traces{A'}$.
\epr

\section{Trace Substitutivity under Hiding and Renaming}
\label{sec:SIOA:hiding-and-renaming-monotonic}



We now proceed to show that action hiding and renaming are monotonic
with respect to trace inclusion.

\bt[Trace Substitutivity for SIOA w.r.t Action Hiding]
\label{thm:SIOA:hiding-monotonic-wrt-traces}
Let $A, A'$ be SIOA such that\lbr  $\traces{A} \sub \traces{A'}$. Let $\HActs$ a set of actions.
Then $\traces{A \hide \HActs} \sub \traces{A' \hide \HActs}$.
\et
\bpr
From  $\traces{A} \sub \traces{A'}$, we have 
\blq{}
$\fa \al \in \execs{A}: \ex \al' \in \execs{A'}: \traceA{\al}{A} = \traceA{\al'}{A}$.
\elq
By Definition~\ref{def:SIOA:hiding}, 
$\autstart{A \hide \HActs} = \autstart{A}$ and $\autsteps{A \hide \HActs} = \autsteps{A}$, and so 
$\execs{A} = \execs{A \hide \HActs}$. Likewise
$\execs{A'} = \execs{A' \hide \HActs}$.
Hence
\blq{}
$\fa \al \in \execs{A \hide \HActs}: \ex \al' \in \execs{A' \hide \HActs}: \traceA{\al}{A} = \traceA{\al'}{A'}$.
\elq
Choose arbitrarily $\al \in \execs{A \hide \HActs}$ and $\al' \in \execs{A' \hide \HActs}$ such that 
$\traceA{\al}{A} = \traceA{\al'}{A'}$.
Let $\b = \traceA{\al}{A} = \traceA{\al'}{A'}$.
Let $\b \hide \HActs$ be the trace obtained from $\b$ by removing all actions in $\HActs$, and then 
replacing each maximal block of identical external signatures by a single representative.
From Definition~\ref{def:SIOA:execution}, we see that 
$\b \hide \HActs = \traceA{\al}{A \hide \HActs} = \traceA{\al'}{A' \hide \HActs}$. 
Since $\al, \al'$ were chosen arbitrarily, we have
\blq{}
$\fa \al \in \execs{A \hide \HActs}: \ex \al' \in \execs{A' \hide \HActs}: \traceA{\al}{A \hide \HActs} = \traceA{\al'}{A' \hide \HActs}$.
\elq
This implies 
$\traces{A \hide \HActs} \sub \traces{A' \hide \HActs}$, and we are done.
\epr

\bt[Trace Substitutivity for SIOA w.r.t Action Renaming]
\label{thm:SIOA:renaming-monotonic-wrt-traces}
Let $A, A'$ be SIOA such that $\traces{A} \sub \traces{A'}$. 
Let $\rho$ be an injective mapping from actions to actions whose
domain includes $\acts{A} \un \acts{A'}$. 
Then $\traces{\ren{A}} \sub \traces{\ren{A'}}$.
\et
\bpr
For $\al \in \execs{A}$, define $\ren{\al}$ to result from $\al$ by replacing each action $a$ along
$\al$ by $\ren{a}$. Since $\rho$ is an injective mapping from actions to actions, its extension to
executions is also injective.
For $\b \in \traces{A}$, define $\ren{\b}$ to result from $\b$ by replacing each action $a$ along
$\b$ by $\ren{a}$, and each external signature $\G$ along $\b$ by $\ren{\G}$, where 
$\ren{\G}$ results from $\G$ by replacing each action $a$ by $\ren{a}$.
Since $\rho$ is an injective mapping from actions to actions, its extension to
executions and traces is also injective.
We also extend $\rho$ to the set of executions and traces of $A$ element-wise: 
$\ren{\execs{A}} = \set{\ren{\al} : \al \in \execs{A}}$, 
$\ren{\traces{A}} = \set{\ren{\b} : \b \in \traces{A}}$.

By Definition~\ref{def:SIOA:renaming},
$\autstart{\ren{A}} = \autstart{A}$, and
$\autsteps{\ren{A}} = \{(s, \ren{a}, t) ~|~ (s, a, t) \in \autsteps{A}\}$.
Hence 
\blq{}
$\execs{\ren{A}} = \ren{\execs{A}}$ and $\traces{\ren{A}} = \ren{\traces{A}}$.
\elq
From $\traces{A} \sub \traces{A'}$, we have $\ren{\traces{A}} \sub \ren{\traces{A'}}$, since $\rho$
is monotonic with respect to a set of traces. 
Hence $\traces{\ren{A}} \sub\traces{\ren{A'}}$, and we are done.
\epr

\subsection{Trace Equivalence as a Congruence}

SIOA $A$ and $A'$ are \emph{trace equivalent} iff $\traces{A} = \traces{A'}$.
A straightforward corollary of our monotonicity results is that trace equivalence is a congruence
relation with respect to parallel composition, action hiding, and action renaming.

\bt[Trace equivalence is a congruence]
\label{thm:SIOA:trace-equiv-is-congruence}
Let $A_1,\ldots,A_n$ be compatible SIOA, and let $A = A_1 \pl \cdots \pl A_n$.
For some $k \in \n$, let $A_1,\ldots,A_{k-1},A'_k,A_{k+1},\ldots,A_n$ be compatible SIOA, and let 
$A' = A_1 \pl \cdots \pl A_{k-1} \pl A'_k \pl A_{k+1} \pl \cdots \pl A_n$.

\bn
\item If $\traces{A_k} = \traces{A'_k}$, then $\traces{A} = \traces{A'}$.
\item If $\traces{A_k} = \traces{A'_k}$, then $\traces{A_k \hide \HActs} = \traces{A'_k \hide \HActs}$.
\item If $\traces{A_k} = \traces{A'_k}$, then $\traces{\ren{A_k}} = \traces{\ren{A'_k}}$.
\en
\et
\bpr
Clauses 1, 2, and 3 follow from Theorems~\ref{thm:SIOA:trace-substitutivity},
\ref{thm:SIOA:hiding-monotonic-wrt-traces}, and 
\ref{thm:SIOA:renaming-monotonic-wrt-traces}
respectively, by application with respect to both directions of trace inclusion.
\epr

\section{Configurations and Configuration Automata}
\label{sec:CA}


Suppose that $a$ is an action of SIOA $A$ whose execution has the
side-effect of creating another SIOA $B$.  To model this, we keep
track of the set of ``alive'' SIOA, i.e., those that have been created
but not destroyed (we consider the automata that are initially present
to be ``created at time zero''). Thus, we require a transition
relation over sets of SIOA.  We also keep track of the current
global state, i.e., the tuple of local states of every SIOA that is
alive.  Thus, we replace the notion of global state with the notion of
``configuration,'' i.e., the set $\A$ of alive SIOA, and a mapping
$\Sm$ with domain $\A$ such that $\S{A}$ is the current local state of
$A$, for each SIOA $A \in \A$.

A configuration contains within it a set of SIOA, each of which
embodies a transition relation. Thus, the possible transitions out of
a configuration cannot be given arbitrarily, as when defining a
transition relation over ``unstructured'' states. Rather, these
transitions should be ``intrinsically'' determined by the SIOA in the configuration.
Below we define the intrinsic transitions between  configurations,
and then define a ``configuration automaton''  as an SIOA
whose transition relation respects these intrinsic transitions.
Configuration automata are our principal semantic objects.

\bd[Configuration, Compatible configuration]
\label{def:configuration}
A \emph{configuration} is a pair $\tpl{\A, \Sm}$
where
\be
\item $\A$ is a finite set of signature I/O automaton identifiers,
and 
\item $\Sm$ maps each $A \in \A$ to an $s \in \autstates{A}$.
\ee
A configuration $\tpl{\A, \Sm}$ is \emph{compatible} iff,
for all $A \in \A$, $B \in \A$, $A \ne B$:
\bn
\item $\ssigacts{A}{\S{A}} \ints \sint{B}{\S{B}} = \emptyset$, and
\item $\sout{A}{\S{A}} \ints \sout{B}{\S{B}} = \emptyset$.
\en
\ed

The compatibility condition is
the usual I/O automaton compatibility condition \cite{LT89},
applied to a configuration.
If $C = \tpl{\A, \Sm}$ is a configuration, then we use $(A,s) \in C$ as  
shorthand for $A \in \A \land \S{A} = s$, and we also qualify $A$ and
$\Sm$ with the notation $C.A$, $C.\Sm$, where needed.

A configuration is a ``flat'' structure in that it consists of a set of
SIOA (identifier, local-state) pairs, with no grouping information. Such grouping
could arise, for example, by the composition of subsystems into larger subsystems.
This grouping will be reflected in the states of configuration
automata, rather than the configurations themselves, which are not
states, but are the semantic denotations of states.
We defined a configuration to be a \emph{set} of SIOA identifiers together with
a mapping from identifiers to SIOA states. 
Hence, every SIOA is uniquely distinguished by its identifier. Thus our
formalism does not \emph{a priori} admit the existence of {clones}, 
as discussed in the introduction.

\bd[Intrinsic attributes of a configuration]
\label{def:intrinsic-signature}
Let $C = \tpl{\A,\Sm}$ be a compatible configuration. Then
we define
\be
\item $\icaut{C}$ = $\A$.
\item $\icassig{C}$ = $\Sm$.
\item $\icout{C} =  \Union_{A \in \A} \sout{A}{\S{A}}$.
\item $\icin{C} =  (\Union_{A \in \A} \sin{A}{\S{A}}) - \icout{C}$.
\item $\icint{C} =  \Union_{A \in \A} \sint{A}{\S{A}}$.
\item $\icext{C} = \tpl{\icin{C},\icout{C}}$.
\item $\icsig{C} = \tpl{\icin{C},\icout{C},\icint{C}}$.
\ee
\ed

We call $\icsig{C}$ the \intr{intrinsic} signature of $C$, since it is
determined solely by $C$.
Define $\reduce{C} = \tpl{\A',\Sm \pj \A'}$, where
$\A' = \{ A ~|~ A \in \A  \mbox{ and } \ssigacts{A}{\S{A}} \neq \emptyset \}$.
$C$ is a \intr{reduced configuration} iff $C = \reduce{C}$.





A consequence of this definition is that an empty configuration cannot
execute any transitions. Also, we do not define transitions
from a non-compatible configuration.
Thus, the initial configuration of a transition is guaranteed to be
compatible. However, the final configuration of a transition may not be
compatible. This may arise, for example, when two SIOA are involved in executing
an action $a$, and their signatures in their final local states may contain
output actions in common. Another possibility is when a new SIOA is created, and
its signature in its initial state violates the compatibility condition
(Definition~\ref{def:configuration}) with respect to an already existing SIOA.

We now define the intrinsic transitions $\ctrans{a}{\varphi}$ that can be taken
from a given configuration $\tpl{\A,\Sm}$. Our definition is parametrized by a
set $\varphi$ of SIOA identifiers which represents SIOA which are to be
``created'' by the execution of the transition.  This set is not determined by
the transition itself, but rather by the configuration automaton which has
$\tpl{\A,\Sm}$ as the semantic denotation of one of its states. Thus, it has to
be supplied to the definition as a parameter.

\bd[Intrinsic transition, $\ctrans{a}{\varphi}$]
\label{def:config-trans}
Let $\tpl{\A,\Sm}$, $\tpl{\A',\Sm'}$ be arbitrary reduced compatible configurations,
and let $\varphi \sub \autids$.
Then $\tpl{\A,\Sm} \;\ctrans{a}{\varphi}\; \tpl{\A',\Sm'}$ iff
there exists a compatible configuration $\tpl{\A'',\Sm''}$ such that
all of the following hold:
   \bn

   \item $a \in \icsigacts{\tpl{\A, \Sm}}$.

   \item $\A'' = \A \un \varphi$.

   \item \label{def:config-trans:start}
         For all $A \in \A'' - \A: \Sm''(A) \in \autstart{A}$.

   \item For all $A \in \A$: 
         if $a \in \ssigacts{A}{\S{A}}$ then $\S{A} \llas{a}{A} \Sm''(A)$,
         otherwise $\S{A} = \Sm''(A)$.

   \item $\tpl{\A',\Sm'} = \reduce{\tpl{\A'',\Sm''}}$.
   \en
\ed


All the SIOA with identifiers in $\varphi - \A$ $(= \A'' - \A)$ are 
``created'' in some start state (Clause~\ref{def:config-trans:start}).
The SIOA identifiers in $\varphi \ints \A$ have no effect, since the
SIOA with these identifiers are already alive.
We apply the $\mathit{reduce}$ operator to the intermediate configuration 
$\tpl{\A'',\Sm''}$ to obtain the final configuration $\tpl{\A',\Sm'}$
resulting from the transition. This removes all SIOA which have an empty
signature, and is our mechanism for \emph{destroying} SIOA. An SIOA with an
empty signature cannot execute any transition, and so cannot change its state.
Thus it will remain forever in its current state, and will be unable to interact
with any other SIOA. Thus, an SIOA ``self-destructs'' by
moving to a state with an empty signature. This is the only mechanism for SIOA
destruction. In particular, we do not permit one SIOA to destroy another,
although an SIOA can certainly send a ``please destroy yourself'' request to
another SIOA.

\bd[Configuration Automaton]
\label{def:CA}
A configuration automaton $X$ consists of the following components
\bn

\item A signature I/O automaton $\sioa{X}$.\\
For brevity, we define
      $\states{X} = \states{\sioa{X}}$,
      $\start{X} = \start{\sioa{X}}$,
      $\sig{X} = \sig{\sioa{X}}$,
      $\steps{X} = \steps{\sioa{X}}$,
and likewise for all other (sub)components and attributes of $\sioa{X}$.

\item A configuration mapping $\ms{config}(X)$ with domain $\states{{X}}$
      and such that $\config{X}{x}$ is a reduced compatible configuration
      for all $x \in \states{{X}}$.

\item For each $x \in \states{X}$, 
      a mapping $\ms{created}(X)(x)$ with domain $\csigacts{X}{x}$
      and such that $\ccreated{X}{x}{a} \sub \autids$
      for all $a \in \csigacts{X}{x}$.



\en
and satisfies the following constraints
\begin{enumerate}

\item \label{def:CA:start}
If $x \in \autstart{X}$ and $(A,s) \in \config{X}{x}$, then
         $s \in \autstart{A}$.

\item \label{def:CA:steps-soundness}
If $(x,a,y) \in \autsteps{X}$ then $\config{X}{x} \ctrans{a}{\varphi} \config{X}{y}$,
where $\varphi = \ccreated{X}{x}{a}$.

\item  \label{def:CA:steps-completeness}
If $x \in \autstates{X}$ and $\config{X}{x} \ctrans{a}{\varphi} D$ for some action $a$,
$\varphi = \ccreated{X}{x}{a}$, and
reduced compatible configuration $D$, then 
     $\exists y \in \autstates{X} : \config{X}{y} = D$ and $(x,a,y) \in \autsteps{X}$.

\item \label{def:CA:sig}
For all $x \in \states{X}$
   \bn

   \item \label{def:CA:sig:out}
   $\cout{X}{x} \sub \icout{\config{X}{x}}$, 

   \item \label{def:CA:sig:in}
   $\cin{X}{x} = \icin{\config{X}{x}}$,

   \item \label{def:CA:sig:int}
   $\cint{X}{x} \sups  \icint{\config{X}{x}}$, and

   \item \label{def:CA:sig:local}
   $\cout{X}{x} \un \cint{X}{x} = \icout{\config{X}{x}} \un \icint{\config{X}{x}}$.


   \en




\en
\ed

The above constraints are needed to properly reflect the connection
between the behavior of a configuration automaton and the
configurations in each state. 
Constraint~\ref{def:CA:start} requires that configurations
corresponding to start states of $X$ must map their constituent SIOA
to start states.
Constraint~\ref{def:CA:steps-soundness} admits as transitions of $X$
only transitions that can be generated as intrinsic transitions of the
corresponding configurations. 
Constraint~\ref{def:CA:steps-completeness} requires that all the 
intrinsic transitions $\ctrans{a}{\varphi}$ 
that a configuration is capable of must be represented in $X$: all the
successor configurations generated by such transitions must be
represented in the states and transitions of $X$. 
Constraint~\ref{def:CA:sig} states that the signature of a state $x$ of
$X$ must be the same as the signature of its corresponding
configuration $\config{X}{x}$, except for the possible effects of
hiding operators, so that some outputs of $\config{X}{x}$ may be
internal actions of $X$ in state $x$.

These constraints represent  a
significant difference with the basic I/O automaton model: there,
states are either ``atomic'' entities, or tuples of tuples of \ldots
of atomic entities. Thus, states, in and of themselves, embody no
information about their possible successor states. That information is
given by the transition relation, and there are no constraints on the
transition relation itself: any set of triples $(state,action,state)$
which respects the input enabling requirement can be a transition
relation.

Since an SIOA that is created ``within'' a configuration automaton always remains 
within that automaton, we see that configuration automata serve as a natural
encapsulation boundary for component creation. Even if an SIOA migrates and
changes its location, it always remains a part of the same configuration
automaton. Migration and location are not primitive notions in our
model, in contrast with, for example, the Ambient Calculus \cite{CG00},
but are built on top of configuration automata and variable signatures, see
Section~\ref{subsec:location} below.

In the sequel, we write 
$\config{X}{x} \ctrans{a}{X,x} \config{X}{y}$ as an abbreviation for\\ 
``$\config{X}{x} \ctrans{a}{\varphi} \config{X}{y}$
where $\varphi = \ccreated{X}{x}{a}$.''

\bd
Let $X$ be a configuration automaton.
For each $x \in \autstates{X}$, define the abbreviations 
$\caut{X}{x} =  \icaut{\config{X}{x}}$ and
$\cmap{X}{x} =  \icmap{\config{X}{x}}$.
\ed

\bd[Execution, trace of configuration automaton]
\label{def:CA:execution}
A configuration automaton $X$ inherits the notions of execution
fragment and execution from $\sioa{X}$. 
Thus, $\al$ is an execution fragment (execution) of $X$ iff
it is an execution fragment (execution) of $\sioa{X}$.
$\execs{X}$ denotes the set of executions of configuration automaton $X$.
$X$ also inherits the notion of trace from $\sioa{X}$. Thus,
$\beta$ is a trace of $x$ iff it is a trace of $\sioa{X}$.
$\traces{X}$ denotes the set of traces of configuration automaton $X$.
\ed

\subsection{Parallel Composition of Configuration I/O Automata}
\label{sec:CA:composition}

We now deal with the composition of configuration automata.

\bd[Union of configurations]
Let $C_1 = \tpl{\A_1,\Sm_1}$ and $C_2 = \tpl{\A_2,\Sm_2}$ be
configurations such that $\A_1 \ints \A_2 = \emptyset$. Then,
the union of $C_1$ and $C_2$, denoted $C_1 \un C_2$, is the
configuration $C = \tpl{\A,\Sm}$ where $\A = \A_1 \un \A_2$, and
$\Sm$ agrees with $\Sm_1$ on $\A_1$, and with $\Sm_2$ on $\A_2$.
\ed

It is clear that configuration union is commutative and associative.
Hence, we will freely use the $n$-ary notation
 $C_1 \un \cdots \un C_n$ (for any $n \ge 1$) whenever
$\AND_{i,j \in \oneton, i \neq j} \icaut{C_i} \ints \icaut{C_j} = \emptyset$.

\begin{definition}[Compatible configuration automata]
\label{def:compatible-config-aut}
Let $X_1, \ldots, X_n$, be configuration automata.\lbr
$X_1, \ldots, X_n$ are \emph{compatible} iff,
for every 
$\tpl{x_1,\ldots,x_n} \in \autstates{X_1} \times \cdots \times \autstates{X_n}$, 
all of the following hold:
   \begin{enumerate}

   \item  $\fa i,j \in \oneton$, $i \neq j$:
          $\icaut{\config{X_i}{x_i}} \ints \icaut{\config{X_j}{x_j}} = \emptyset$.


   \item $\config{X_1}{x_1} \un \cdots \un \config{X_n}{x_n}$ is a reduced
         compatible configuration.

   \item $\{ \csig{X_1}{x_1},\ldots,\csig{X_n}{x_n} \}$ is a set of
   compatible signatures.

   \item $\fa i,j \in \n, i \ne j: 
            \fa a \in \ssigacts{X_i}{x_i} \ints \ssigacts{X_j}{x_j}:
                 \ccreated{X_i}{x_i}{a} \ints \ccreated{X_j}{x_j}{a} = \emptyset$.

   \end{enumerate}
\end{definition}

\begin{definition}[Composition of configuration automata]
\label{def:CA:composition}
Let $X_1, \ldots, X_n$, be compatible configuration automata.
Then $X = X_1 \pl \cdots \pl X_n$ is the state machine
consisting of the following components:
\bn

\item $\sioa{X} = \sioa{X_1} \pl \cdots \pl \sioa{X_n}$.

\item A configuration mapping $\ms{config}(X)$ given as follows.
For each $x = \tpl{x_1,\ldots,x_n} \in \autstates{X}$,
$\config{X}{x} = \config{X_1}{x_1} \un \cdots \un \config{X_n}{x_n}$.

\item For each $x = \tpl{x_1,\ldots,x_n} \in \states{X}$, a mapping $\ms{created}(X)(x)$ with domain
$\csigacts{X}{x}$ and given as follows.
For each $a \in \csigacts{X}{x}$,
$\ccreated{X}{x}{a}$ = $\UN_{a \in \csigacts{X_i}{x_i}, i \in \oneton} \ccreated{X_i}{x_i}{a}$.

\en
\ed
As in Definition~\ref{def:CA}, we define
      $\states{X} = \states{\sioa{X}}$,
      $\start{X} = \start{\sioa{X}}$,
      $\sig{X} = \sig{\sioa{X}}$,
      $\steps{X} = \steps{\sioa{X}}$,
and likewise for all other (sub)components and attributes of $\sioa{X}$.


\bp
Let $X_1, \ldots, X_n$, be compatible configuration automata.
Then $X = X_1 \pl \cdots \pl X_n$ is a configuration automaton.
\ep
\bpr
We must show that $X$ satisfies the constraints of Definition~\ref{def:CA}.
Since $X_1, \ldots, X_n$ are configuration automata, they already satisfy
the constraints. The argument for each constraint then uses this
together with Definition~\ref{def:CA:composition} to show that $X$
itself satisfies the constraint. 
The details are as follows, for each constraint in turn.

\cnst{Constraint~\ref{def:CA:start}}.
Let $x \in \autstart{X}$ and $(A,s) \in \config{X}{x}$. Then, 
$x = \tpl{x_1,\ldots,x_n}$ where $x_i \in \autstart{X_i}$ for $1 \le i \le n$.
By Definition~\ref{def:CA:composition}, 
$\config{X}{x} = \config{X_1}{x_1} \un \cdots \un \config{X_n}{x_n}$.
Hence $(A,s) \in \config{X_j}{x_j}$ for some $j \in \oneton$. 
Also, $x_j \in \autstart{X_j}$.
Since $X_j$ is a configuration automaton, we apply Constraint~\ref{def:CA:start}
to $X_j$ to conclude $s \in \autstart{A}$. Hence, Constraint~\ref{def:CA:start}
holds for $X$.

\cnst{Constraint~\ref{def:CA:steps-soundness}}.
Let $(x,a,y)$ be an arbitrary element of $\autsteps{X}$. We will establish\lb
$\config{X}{x} \ctrans{a}{X,x} \config{X}{y}$.

For brevity, let $A_i = \sioa{X_i}$ for $i \in \oneton$. 
Now $(x,a,y) \in \steps{X}$. So $(x,a,y) \in \steps{\sioa{X}}$ by
Definition~\ref{def:CA:composition}.
Also by Definition~\ref{def:CA:composition}, 
$\sioa{X}$ = $\sioa{X_1} \pl \cdots \pl \sioa{X_n}$ = $A_1  \pl \cdots \pl A_n$.
So, $(x,a,y) \in \steps{A_1  \pl \cdots \pl A_n}$.
Since $x,y \in \states{A_1  \pl \cdots \pl A_n}$, 
we can write $x, y$ as $\tpl{x_1,\ldots,x_n}$, $\tpl{y_1,\ldots,y_n}$
respectively, where $x_i, y_i \in \states{A_i}$ for $i \in \oneton$. 
From Definition~\ref{def:SIOA:composition}, there exists a nonempty
$\varphi \sub \oneton$ such that
\blq{(a)}
$(\AND_{i \in \varphi} a \in \ssigacts{A_i}{x_i} \land (x_i,a,y_i) \in \steps{A_i})$
$\land$
$(\AND_{i \in \oneton - \varphi} a \not\in \ssigacts{A_i}{x_i} \land x_i = y_i)$
\elq
Each $X_i$, $i \in \oneton$, is a configuration automaton. Hence, by (a) and
constraint~\ref{def:CA:steps-soundness} applied to each $X_i$, $i \in \varphi$,
\blq{(b)}
    $\AND_{i \in \varphi} \big( \config{X_i}{x_i} \ctrans{a}{X_i,x_i} \config{X_i}{y_i} \big)$.
\elq
Also by (a),
\blq{(c)}
    $\AND_{i \in \oneton - \varphi} \big( \config{X_i}{x_i} = \config{X_i}{y_i} \big)$.
\elq

Since $X_1, \ldots, X_n$ are compatible, we have, by
Definition~\ref{def:compatible-config-aut}, that
 $\config{X_1}{x_1} \un \cdots \un \config{X_n}{x_n}$ and 
 $\config{X_1}{y_1} \un \cdots \un \config{X_n}{y_n}$ 
are both reduced compatible configurations.


By Definition~\ref{def:CA:composition}, 
$\ccreated{X}{x}{a}$ = $\UN_{a \in \csigacts{X_i}{x_i}, i \in \oneton} \ccreated{X_i}{x_i}{a}$.
By this, (a,b,c), and Definition~\ref{def:config-trans}, we obtain
\blq{(d)}
	$\big( \UN_{i \in \oneton} \config{X_i}{x_i} \big) \ctrans{a}{X,x} 
         \big( \UN_{i \in \oneton} \config{X_i}{y_i} \big)$.
\elq
By Definition~\ref{def:CA:composition}, 
$\config{X}{x} = \UN_{i \in \oneton} \config{X_i}{x_i}$ and
$\config{X}{y} = \UN_{i \in \oneton} \config{X_i}{y_i}$.
Hence
\blq{}
	$\config{X}{x} \ctrans{a}{X,x} \config{X}{y}$,
\elq
and we are done.

\cnst{Constraint~\ref{def:CA:steps-completeness}}.
Let $x$ be an arbitrary state in $\states{X}$ and $D$ an arbitrary reduced
compatible configuration such that $\config{X}{x} \ctrans{a}{X,x} D$. 
We must show 
    $\exists y \in \states{X}: (x,a,y) \in \steps{X} \mbox{ and } \config{X}{y} = D$.

We can write $x$ as $\tpl{x_1,\ldots,x_n}$ where $x_i \in \states{X_i}$ for $i \in \oneton$.

Since $X_1, \ldots, X_n$ are compatible, we have, by
Definition~\ref{def:compatible-config-aut}, that
          $\icaut{\config{X_i}{x_i}}~ \ints$ $\icaut{\config{X_j}{x_j}} = \emptyset$
            forall $i,j \in \oneton$, $i \neq j$, (thus, all SIOA in these configurations
are unique) and that
$\config{X_1}{x_1} \un \cdots \un \config{X_n}{x_n}$ is a reduced compatible configuration.
Also, from Definition~\ref{def:CA:composition},
$\config{X}{x} = \UN_{i \in \oneton} \config{X_i}{x_i}$.
Hence from $\config{X}{x} \ctrans{a}{X,x} D$,
\blq{(a)}
	$\big( \UN_{i \in \oneton} \config{X_i}{x_i} \big) \ctrans{a}{X,x} D$.
\elq
Hence, from Definition~\ref{def:config-trans}, there exists a nonempty $\varphi \sub
\oneton$ such that
\blq{(b)}
	$\big( \AND_{i \in \varphi} a \in \ssigacts{X_i}{x_i} \big) \land
         \big( \AND_{i \in \oneton - \varphi} a \not\in \ssigacts{X_i}{x_i} \big)$.
\elq
We now define $D_i$, $1 \le i \le n$, as follows.\ms\\
For $i \in \oneton - \varphi$, $D_i = \config{X_i}{x_i}$.\ms\\
For $i \in \varphi$, $D_i = \tpl{DA_i, \icassig{D} \pj DA_i}$, where\\
\ind $DA_i = \{ A : A \in D \mbox{ and }
               [A \in \icaut{\config{X_i}{x_i}} \mbox{ or } A \in \ccreated{X_i}{x_i}{a}]
        \}$.

Hence, by definition of $D_i$, Definition~\ref{def:config-trans}, (a), and the
compatibility of $X_1, \ldots, X_n$, we have
\blq{(c)}
	$\AND_{i \in \varphi} (\config{X_i}{x_i}  \ctrans{a}{X_i,x_i} D_i)$.
\elq
Now each $X_i$, $i \in \oneton$, is a configuration automaton. Hence,
from (c) and constraint~\ref{def:CA:steps-completeness} applied to 
$X_i$, $i \in \varphi$, 
\blq{(d)}
	$\AND_{i \in \varphi} \exists y_i \in \states{X_i}:
		\config{X_i}{y_i} = D_i \mbox{ and } (x_i,a,y_i) \in \steps{X_i}$.
\elq

Let $y = \tpl{y_1,\ldots,y_n}$ where, for $i \in \varphi$, $y_i$ is given by (d),
and for $i \in \oneton - \varphi$, $y_i = x_i$.
Hence, for $i \in \oneton$, $y_i \in \states{X_i}$.
Since $X_1,\ldots,X_n$ are compatible configuration automata,
we get, by Definitions~\ref{def:CA} and \ref{def:compatible-config-aut}, 
\blq{(e)}
    $\icaut{\config{X_i}{y_i}} \ints \icaut{\config{X_j}{y_j}} = \emptyset$
            for all $i,j \in \oneton$, $i \neq j$, and\\
    $\config{X_1}{y_1} \un \cdots \un \config{X_n}{y_n}$ is a reduced compatible configuration.
\elq
Thus, in particular, all SIOA in the configurations 
$\config{X_1}{y_1}, \ldots, \config{X_n}{y_n}$ are unique.
From (d), for ${i \in \varphi}, \config{X_i}{y_i} = D_i$.
By definition of $D_i$,
	for $i \in \oneton - \varphi$, $\config{X_i}{x_i} = D_i$.
By definition of $y_i$, 
	 for $i \in \oneton - \varphi$, $y_i = x_i$.
Hence,
	for $i \in \oneton - \varphi$, $\config{X_i}{y_i} = D_i$.
Combining these, we get
\blq{(f)}	
	$\AND_{i \in \oneton} \config{X_i}{y_i} = D_i$.
\elq
From the definition of $D_i$ and Definition~\ref{def:config-trans}, we have that
$D = D_1 \un \cdots \un D_n$.	
Also, by Definition~\ref{def:CA:composition}, 
$\config{X}{y}$ = $\UN_{i \in \oneton} \config{X_i}{y_i}$.
By this, (f), and $D = D_1 \un \cdots \un D_n$,
\blq{(g)}
	$\config{X}{y}= D$.
\elq
By definition of $y_i$, 
	 for $i \in \oneton - \varphi$, $y_i = x_i$.
By (d),
	for ${i \in \varphi}$, $(x_i,a,y_i) \in \steps{X_i}$.
From these and (b), we get
\blq{}
	$\AND_{i \in \varphi} a \in \ssigacts{X_i}{x_i} \land (x_i,a,y_i) \in \steps{X_i}$\\
	$\AND_{i \in \oneton - \varphi} a \not\in \ssigacts{X_i}{x_i} \land y_i = x_i$.
\elq
From this, $x = \tpl{x_1,\ldots,x_n}$, $y = \tpl{y_1,\ldots,y_n}$, and
Definitions~\ref{def:SIOA:composition} and \ref{def:CA:composition},
we conclude $(x,a,y) \in \steps{X}$. From this and (g), we have
\blq{}
	$(x,a,y) \in \steps{X}$ and $\config{X}{y}= D$,
\elq
and we are done.

\vspace{2ex}

\cnst{Constraint~\ref{def:CA:sig}}.
We treat each subconstraint in turn.

\vspace{2ex}

\cnst{Constraint~\ref{def:CA:sig:out}}: $\cout{X}{x} \sub  \icout{\config{X}{x}}$.\\
By Definitions~\ref{def:SIOA:composition} and \ref{def:CA:composition},
\blq{(a)}
$\cout{X}{x} = \UN_{i \in \oneton} \cout{X_i}{x_i}$.
\elq
Since the $X_i$ are configuration automata, they all satisfy constraint~\ref{def:CA:sig:out}.
Hence
\blq{}
	$\AND_{i \in \oneton} \cout{X_i}{x_i} \sub \icout{\config{X_i}{x_i}}$.
\elq
Taking the unions of both sides, over all $i \in \oneton$, we obtain
\blq{(b)}
	$\big( \UN_{i \in \oneton} \cout{X_i}{x_i} \big) \sub
         \big( \UN_{i \in \oneton} \icout{\config{X_i}{x_i}} \big)$.
\elq
By Definition~\ref{def:CA:composition}, 
$\config{X}{x} = \UN_{i \in \oneton} \config{X_i}{x_i}$.
By assumption, $X_1, \ldots, X_n$, are compatible configuration automata.
Hence, by Definition~\ref{def:compatible-config-aut},
$\UN_{i \in \oneton} \config{X_i}{x_i}$ is a reduced compatible configuration.
So, from Definition~\ref{def:intrinsic-signature}, we obtain
\blq{(c)}
$\icout{\config{X}{x}} = \UN_{i \in \oneton} \icout{\config{X_i}{x_i}}$.
\elq
From (a,b,c), we obtain
$\cout{X}{x}$ =
$\UN_{i \in \oneton} \cout{X_i}{x_i}$ $\sub$
$(\UN_{i \in \oneton} \icout{\config{X_i}{x_i}})$ = \linebreak
$\icout{\config{X}{x}}$,
as desired.

\vspace{2ex}

\cnst{Constraint~\ref{def:CA:sig:in}}: $\cin{X}{x} = \icin{\config{X}{x}}$.
By Definitions~\ref{def:SIOA:composition} and \ref{def:CA:composition},
\blq{(a)}
$\cin{X}{x} = (\UN_{i \in \oneton} \cin{X_i}{x_i}) - (\UN_{i \in \oneton} \cout{X_i}{x_i})$.
\elq
Since the $X_i$ are configuration automata, they all satisfy 
constraints~\ref{def:CA:sig:out} and \ref{def:CA:sig:in}.
Hence
\blq{(b)}
     $\AND_{i \in \oneton} \cin{X_i}{x_i} = \icin{\config{X_i}{x_i}}$,\\
     $\AND_{i \in \oneton} \cout{X_i}{x_i} \sub \icout{\config{X_i}{x_i}}$.
\elq 
Since the $X_i$ are configuration automata, they all satisfy 
constraint~\ref{def:CA:sig:local}.
Hence
\blq{(c)}
     $\AND_{i \in \oneton} \cout{X_i}{x_i} \un \cint{X_i}{x_i} =
                           \icout{\config{X_i}{x_i}} \un \icint{\config{X_i}{x_i}}$.
\elq
And so,
\blq{(d)}
   $\AND_{i \in \oneton} \icout{\config{X_i}{x_i}} \sub \cout{X_i}{x_i} \un \cint{X_i}{x_i}$.
\elq
Since $\cout{X_i}{x_i} \ints \cint{X_i}{x_i} = \emptyset$ for all $i \in \oneton$,
by the partitioning of actions into input, output, and internal,
we have, by (b,d)
\blq{(e)}
     $\AND_{i \in \oneton} \cout{X_i}{x_i} = \icout{\config{X_i}{x_i}} - \cint{X_i}{x_i}$.
\elq
Taking the unions of both sides, over all $i \in \oneton$, in (b) and (e), we obtain
\blq{(f)}
      $\big( \UN_{i \in \oneton} \cin{X_i}{x_i} \big) = 
          \big( \UN_{i \in \oneton} \icin{\config{X_i}{x_i}} \big)$,\\
      $\big( \UN_{i \in \oneton} \cout{X_i}{x_i} \big) = 
          \big( \UN_{i \in \oneton} \icout{\config{X_i}{x_i}} - \cint{X_i}{x_i} \big)$.
\elq
From (a, f), we obtain
\blq{(g)}
$\cin{X}{x} = \big( \UN_{i \in \oneton} \icin{\config{X_i}{x_i}} \big) - 
              \big( \UN_{i \in \oneton} \icout{\config{X_i}{x_i}} - \cint{X_i}{x_i} \big)$.
\elq
From (c),
\blq{(h)}
     $\AND_{i \in \oneton} \cint{X_i}{x_i} \sub
                           \icout{\config{X_i}{x_i}} \un \icint{\config{X_i}{x_i}}$.
\elq
Now $(\icout{\config{X_i}{x_i}} \un \icint{\config{X_i}{x_i}}) \ints 
      \icin{\config{X_i}{x_i}} = \emptyset$, 
for all $i \in \oneton$,
by the partitioning of actions into input, output, and internal.
Hence, by (h), 
\blq{(i)}
     $\AND_{i \in \oneton} \cint{X_i}{x_i} \ints \icin{\config{X_i}{x_i}} = \emptyset$.
\elq
From (b,i), and the compatibility of $X_1, \ldots, X_n$, we get
\blq{(j)}
     $\big( \UN_{i \in \oneton} \cint{X_i}{x_i} \big) \ints 
      \big( \UN_{i \in \oneton} \icin{\config{X_i}{x_i}} \big) = \emptyset$.
\elq
From (g,j)
\blq{(k)}
$\cin{X}{x} = \big( \UN_{i \in \oneton} \icin{\config{X_i}{x_i}} \big) - 
              \big( \UN_{i \in \oneton} \icout{\config{X_i}{x_i}} \big)$.
\elq
By Definition~\ref{def:CA:composition}, 
$\config{X}{x} = \UN_{i \in \oneton} \config{X_i}{x_i}$.
By assumption, $X_1, \ldots, X_n$, are compatible configuration automata.
Hence, by Definition~\ref{def:compatible-config-aut},
$\UN_{i \in \oneton} \config{X_i}{x_i}$ is a reduced compatible configuration.
So, from Definition~\ref{def:intrinsic-signature}, we obtain
\blq{(l)}
$\icin{\config{X}{x}} = \big( \UN_{i \in \oneton} \icin{\config{X_i}{x_i}} \big) -
                        \big( \UN_{i \in \oneton} \icout{\config{X_i}{x_i}} \big)$.
\elq
Finally, from (k,l), we obtain
$\cin{X}{x}$ = 
$\big( \UN_{i \in \oneton} \icin{\config{X_i}{x_i}} \big) - 
 \big( \UN_{i \in \oneton} \icout{\config{X_i}{x_i}} \big)$ =
$\icin{\config{X}{x}}$,
as desired.

\vspace{2ex}

\cnst{Constraint~\ref{def:CA:sig:int}}: $\cint{X}{x} \sups  \icint{\config{X}{x}}$.\\
By Definitions~\ref{def:SIOA:composition} and \ref{def:CA:composition},
\blq{(a)}
      $\cint{X}{x} = \UN_{i \in \oneton} \cint{X_i}{x_i}$.
\elq
Since the $X_i$ are configuration automata, they all satisfy constraint~\ref{def:CA:sig:int}.
Hence
\blq{}
	$\AND_{i \in \oneton} \cint{X_i}{x_i} \sups \icint{\config{X_i}{x_i}}$.
\elq
Taking the unions of both sides, over all $i \in \oneton$, we obtain
\blq{(b)}
	$\big( \UN_{i \in \oneton} \cint{X_i}{x_i} \big) \sups
         \big( \UN_{i \in \oneton} \icint{\config{X_i}{x_i}} \big)$.
\elq
By Definition~\ref{def:CA:composition}, 
    $\config{X}{x} = \UN_{i \in \oneton} \config{X_i}{x_i}$.
By assumption, $X_1, \ldots, X_n$, are compatible configuration automata.
Hence, by Definition~\ref{def:compatible-config-aut},
$\UN_{i \in \oneton} \config{X_i}{x_i}$ is a reduced compatible configuration.
So, from Definition~\ref{def:intrinsic-signature}, we obtain
\blq{(c)}
    $\icint{\config{X}{x}} = \UN_{i \in \oneton} \icint{\config{X_i}{x_i}}$.
\elq
From (a,b,c), we obtain
$\cint{X}{x}$ =
$\UN_{i \in \oneton} \cint{X_i}{x_i}$ $\sups$
$(\UN_{i \in \oneton} \icint{\config{X_i}{x_i}})$ =\linebreak
$\icint{\config{X}{x}}$,
as desired.

\vspace{2ex}

\cnst{Constraint~\ref{def:CA:sig:local}}: 
$\cout{X}{x} \un \cint{X}{x} = \icout{\config{X}{x}} \un \icint{\config{X}{x}}$.\\
By Definitions~\ref{def:SIOA:composition} and \ref{def:CA:composition},
\blq{(a)}
      $\cout{X}{x} = \UN_{i \in \oneton} \cout{X_i}{x_i}$,\\
      $\cint{X}{x} = \UN_{i \in \oneton} \cint{X_i}{x_i}$.
\elq
Since the $X_i$ are configuration automata, they all satisfy constraint~\ref{def:CA:sig:local}.
Hence
\blq{}
     $\AND_{i \in \oneton} (\cout{X_i}{x_i} \un \cint{X_i}{x_i}) =
                           (\icout{\config{X_i}{x_i}} \un \icint{\config{X_i}{x_i}})$.
\elq
Taking the unions of both sides, over all $i \in \oneton$, we obtain
\blq{(b)}
	$(\UN_{i \in \oneton} \cout{X_i}{x_i} \un \cint{X_i}{x_i}) =
       (\UN_{i \in \oneton} \icout{\config{X_i}{x_i}} \un \icint{\config{X_i}{x_i}})$.
\elq
By Definition~\ref{def:CA:composition}, 
    $\config{X}{x} = \UN_{i \in \oneton} \config{X_i}{x_i}$.
By assumption, $X_1, \ldots, X_n$, are compatible configuration automata.
Hence, by Definition~\ref{def:compatible-config-aut},
$\UN_{i \in \oneton} \config{X_i}{x_i}$ is a reduced compatible configuration.
So, from Definition~\ref{def:intrinsic-signature}, we obtain
\blq{(c)}
    $\icout{\config{X}{x}} = \UN_{i \in \oneton} \icout{\config{X_i}{x_i}}$,\\
    $\icint{\config{X}{x}} = \UN_{i \in \oneton} \icint{\config{X_i}{x_i}}$.
\elq
From (a,b,c), we obtain
$(\cout{X}{x} \un \cint{X}{x})$ =
$(\UN_{i \in \oneton}  \cout{X_i}{x_i} \un \cint{X_i}{x_i})$ =\lb
$(\UN_{i \in \oneton} \icout{\config{X_i}{x_i}} \un \icint{\config{X_i}{x_i}})$ =
$\icout{\config{X}{x}} \un \icint{\config{X}{x}}$,
as desired.

Since we have established that $X$ satisfies all the constraints, the proof is done.
\epr

\subsection{Action Hiding for Configuration Automata}
\label{sec:CA:hiding}

\begin{definition}[Action hiding for configuration automata]
\label{def:CA:hiding}
Let $X$ be a configuration automaton and $\HActs$ a set of actions.
Then $X \hide \HActs$ is the state machine
consisting of the following components:
\bn

            






  
\item A signature I/O automaton $\sioa{X \hide \HActs} = \sioa{X} \hide \HActs$.

\item A configuration mapping $\autcmap{X \hide \HActs} = \autcmap{X}$.

\item For each $x \in \states{X \hide \HActs}$, a mapping
      $\ms{created}(X \hide \HActs)(x) = \ms{created}(X)(x)$.

\end{enumerate}
\end{definition}
As in Definition~\ref{def:CA}, we define
      $\states{X \hide \HActs} = \states{\sioa{X \hide \HActs}}$,
      $\start{X \hide \HActs} = \start{\sioa{X \hide \HActs}}$,
      $\sig{X \hide \HActs} = \sig{\sioa{X \hide \HActs}}$,
      $\steps{X \hide \HActs} = \steps{\sioa{X \hide \HActs}}$,
and likewise for all other components and attributes of $\sioa{X}$.

\bp
Let $X$ be a configuration automaton and $\HActs$ a set of actions.
Then $X \hide \HActs$ is a configuration automaton.
\ep
\bpr
We must show that $X \hide \HActs$ satisfies the constraints of Definition~\ref{def:CA}.
Since $X$ is a configuration automaton,
constraints~\ref{def:CA:start}, \ref{def:CA:steps-soundness}, and
\ref{def:CA:steps-completeness} hold for $X$.
From Definitions~\ref{def:SIOA:hiding} and \ref{def:CA:hiding},
 we see that the only 
components of $X$ and $X \hide \HActs$ that differ are the signature and its 
various subsets. Now 
constraints~\ref{def:CA:start}, \ref{def:CA:steps-soundness}, and
\ref{def:CA:steps-completeness} do not involve the signature. Hence, they also hold for 
$X \hide \HActs$.

We deal with each subconstraint of Constraint~\ref{def:CA:sig} in turn.

\vspace{2ex}

\cnst{Constraint~\ref{def:CA:sig:out}}:
$\cout{X \hide \HActs}{x} \sub  \icout{\config{X \hide \HActs}{x}}$.\\
By Definition~\ref{def:CA:hiding},
$\cout{X \hide \HActs}{x}$ = 
$\cout{\sioa{X \hide \HActs}}{x}$ =
$\sout{\sioa{X} \hide \HActs}{x}$.
By Definition~\ref{def:SIOA:hiding},\lbr
$\sout{\sioa{X} \hide \HActs}{x} = \sout{\sioa{X}}{x} - \HActs$.
By Definition~\ref{def:CA}, which is applicable since $X$ is a configuration automaton,
$\sout{\sioa{X}}{x} = \cout{X}{x}$.
Hence,
$\sout{\sioa{X}}{x} - \HActs = \cout{X}{x} - \HActs$.
Putting the above equalities together, we obtain
\blq{(a)}
$\cout{X \hide \HActs}{x} = \cout{X}{x} - \HActs$.
\elq
Since $X$ is a configuration automaton, it satisfies constraint~\ref{def:CA:sig:out}.
Hence
\blq{(b)}
     $\cout{X}{x} \sub \icout{\config{X}{x}}$.
\elq
By Definition~\ref{def:CA:hiding}, $\autcmap{X \hide \HActs} = \autcmap{X}$.
Hence,
\blq{(c)}
     $\icout{\config{X}{x}} = \icout{\config{X \hide \HActs}{x}}$.
\elq
From (a,b,c), we obtain
$\cout{X \hide \HActs}{x}$ $\sub$
$\cout{X}{x}$ $\sub$ 
$\icout{\config{X}{x}}$ =
$\icout{\config{X \hide \HActs}{x}}$,
as desired.

\vspace{2ex}

\cnst{Constraint~\ref{def:CA:sig:in}}: 
$\cin{X \hide \HActs}{x} =  \icin{\config{X \hide \HActs}{x}}$.\\
By Definition~\ref{def:CA:hiding},
$\cin{X \hide \HActs}{x}$ = 
$\cin{\sioa{X \hide \HActs}}{x}$ =
$\sin{\sioa{X} \hide \HActs}{x}$.
By Definition~\ref{def:SIOA:hiding},\lbr
$\sin{\sioa{X} \hide \HActs}{x} = \sin{\sioa{X}}{x}$.
By Definition~\ref{def:CA}, which is applicable since $X$ is a configuration automaton,
$\sin{\sioa{X}}{x} = \cin{X}{x}$.
Putting the above equalities together, we obtain
\blq{(a)}
$\cin{X \hide \HActs}{x} = \cin{X}{x}$.
\elq
Since $X$ is a configuration automaton, it satisfies constraint~\ref{def:CA:sig:in}.
Hence
\blq{(b)}
     $\cin{X}{x} = \icin{\config{X}{x}}$.
\elq
By Definition~\ref{def:CA:hiding}, $\autcmap{X \hide \HActs} = \autcmap{X}$.
Hence,
\blq{(c)}
     $\icin{\config{X}{x}} = \icin{\config{X \hide \HActs}{x}}$.
\elq
From (a,b,c), we obtain
$\cin{X \hide \HActs}{x}$ =
$\cin{X}{x}$ =
$\icin{\config{X}{x}}$ =
$\icin{\config{X \hide \HActs}{x}}$,
as desired.

\vspace{2ex}

\cnst{Constraint~\ref{def:CA:sig:int}}: 
$\cint{X \hide \HActs}{x} \sups  \icint{\config{X \hide \HActs}{x}}$.\\
By Definition~\ref{def:CA:hiding},
$\cint{X \hide \HActs}{x}$ = 
$\cint{\sioa{X \hide \HActs}}{x}$ =
$\sint{\sioa{X} \hide \HActs}{x}$.
By Definition~\ref{def:SIOA:hiding},\lbr
$\sint{\sioa{X} \hide \HActs}{x} = \sint{\sioa{X}}{x} \un (\sout{\sioa{X}}{x} \ints \HActs)$.
By Definition~\ref{def:CA}, which is applicable since $X$ is a configuration automaton,
$\sint{\sioa{X}}{x} = \cint{X}{x}$ and $\sout{\sioa{X}}{x} = \cout{X}{x}$.
Hence,
$\sint{\sioa{X} \hide \HActs}{x} = \cint{X}{x} \un (\cout{X}{x} \ints \HActs)$.
Putting the above equalities together, we obtain
\blq{(a)}
    $\cint{X \hide \HActs}{x} = \cint{X}{x} \un (\cout{X}{x} \ints \HActs)$.
\elq
Since $X$ is a configuration automaton, it satisfies constraint~\ref{def:CA:sig:int}.
Hence
\blq{(b)}
     $\cint{X}{x} \sups \icint{\config{X}{x}}$.
\elq
By Definition~\ref{def:CA:hiding}, $\autcmap{X \hide \HActs} = \autcmap{X}$.
Hence,
\blq{(c)}
     $\icint{\config{X}{x}} = \icint{\config{X \hide \HActs}{x}}$.
\elq
From (a,b,c), we obtain
$\cint{X \hide \HActs}{x}$ $\sups$
$\cint{X}{x}$ $\sups$
$\icint{\config{X}{x}}$ =
$\icint{\config{X \hide \HActs}{x}}$,
as desired.

\vspace{2ex}

\cnst{Constraint~\ref{def:CA:sig:local}}: 
$\cout{X \hide \HActs}{x} \un \cint{X \hide \HActs}{x} =
 \icout{\config{X \hide \HActs}{x}} \un \icint{\config{X \hide \HActs}{x}}$.\\
In the proofs for Constraints~\ref{def:CA:sig:out} and \ref{def:CA:sig:int} above, 
we established (the equations marked ``(a)'')
\blq{}
    $\cout{X \hide \HActs}{x} = \cout{X}{x} - \HActs$,\\
    $\cint{X \hide \HActs}{x} = \cint{X}{x} \un (\cout{X}{x} \ints \HActs)$.
\elq
Now $(\cout{X}{x} - \HActs) \un (\cout{X}{x} \ints \HActs) = \cout{X}{x}$, and so
\blq{(a)}
    $\cout{X \hide \HActs}{x} \un \cint{X \hide \HActs}{x} = \cout{X}{x} \un \cint{X}{x}$.
\elq
Since $X$ is a configuration automaton, it satisfies constraint~\ref{def:CA:sig:local}.
Hence
\blq{(b)}
     $\cout{X}{x} \un \cint{X}{x} =  \icout{\config{X}{x}} \un \icint{\config{X}{x}}$.
\elq
By Definition~\ref{def:CA:hiding}, $\autcmap{X \hide \HActs} = \autcmap{X}$.
Hence,
\blq{(c)}
    $\icout{\config{X}{x}} \un \icint{\config{X}{x}} =
     \icout{\config{X \hide \HActs}{x}} \un \icint{\config{X \hide \HActs}{x}}$.
\elq
From (a,b,c), we obtain
$\cout{X \hide \HActs}{x} \un \cint{X \hide \HActs}{x}$ =
$\cout{X}{x} \un \cint{X}{x}$ = 
$\icout{\config{X}{x}} \un \icint{\config{X}{x}}$ =
$\icout{\config{X \hide \HActs}{x}} \un \icint{\config{X \hide \HActs}{x}}$,
as desired.

Since we have established that $X$ satisfies all the constraints, the proof is done.
\epr

\subsection{Action Renaming for Configuration Automata}
\label{sec:CA:renaming}

\bd
\label{def:config:renaming}
Let $C = \tpl{\A,\Sm}$ be a compatible configuration and
let $\rho$ be an injective mapping from actions to actions whose
domain includes $\UN_{A \in \A} \acts{A}$.  
Then we define $\ren{C} = \tpl{\ren{\A},\ren{\Sm}}$ where 
$\ren{\A} = \{ \ren{A} ~|~ A \in \A\}$, and 
$\ren{\Sm}(\ren{A}) = \S{A}$ for all $A \in \A$.
\ed

\begin{definition}[Action renaming for configuration automata]
\label{def:CA:renaming}
Let $X$ be a  configuration automaton and
let $\rho$ be an injective mapping from actions to actions whose
domain includes \linebreak $\Union_{C \in \autstates{X}} \csigacts{X}{C}$. Then
$\ren{X}$ consists of the following components:
\begin{enumerate}

\item A signature I/O automaton $\sioa{\ren{X}} = \ren{\sioa{X}}$.

\item A configuration mapping $\ms{config}(\ren{X})$ with domain
  $\states{\ren{X}}$ ($= \states{X}$)
      and such that 
	$\config{\ren{X}}{x} = \ren{\config{X}{x}}$.

\item For each $x \in \states{\ren{X}}$, 
      a mapping $\ms{created}(\ren{X})(x)$ with domain $\csigacts{\ren{X}}{x}$
      and such that 
	$\ccreated{\ren{X}}{x}{\ren{a}} = 
		\{ \ren{A} ~|~ A \in \ccreated{X}{x}{a} \}$
      for all $a \in \csigacts{X}{x}$.

\end{enumerate}

\end{definition}

\bp
Let $X$ be a configuration automaton and let $\rho$ be an
injective mapping from actions to actions whose domain includes
$\Union_{C \in \autstates{X}} \csigacts{X}{C}$.  Then
$\ren{X}$ is a configuration automaton.
\ep
\bpr
We must show that $\ren{X}$ satisfies the constraints of Definition~\ref{def:CA}.
Since $X$ is a configuration automaton,
constraints~\ref{def:CA:start}, \ref{def:CA:steps-soundness}, and
\ref{def:CA:steps-completeness} hold for $X$.
From Definitions~\ref{def:SIOA:renaming} and \ref{def:CA:renaming}, we see that
the states of $\ren{X}$ and the configurations in $\config{\ren{X}}{x}$ 
are unchanged by applying $\rho$, with the exception of the signatures
of the configurations. Hence constraint~\ref{def:CA:start}
also holds for $\ren{X}$.

Constraints~\ref{def:CA:steps-soundness}, and
\ref{def:CA:steps-completeness} hold since $\rho$ is injective, so we
can simply replace $a$ by
$\ren{a}$ uniformly in the transition relation of both $\ren{X}$ and
the configurations in $\config{\ren{X}}{x}$. The constraints for
$\ren{X}$ then follow from the corresponding ones for $X$.

From Definitions~\ref{def:config:renaming} and \ref{def:CA:renaming}, 
we have $\icout{\config{\ren{X}}{x}} = \ren{\icout{\config{X}{x}}}$
and \linebreak $\cout{\ren{X}}{x} = \ren{\cout{X}{x}}$.
Since constraint~\ref{def:CA:sig:out} holds for $X$, we have
$\cout{X}{x}~ \sub$ \linebreak $\icout{\config{X}{x}}$. 
Hence $\ren{\cout{X}{x}} \sub  \ren{\icout{\config{X}{x}}}$.
We thus conclude $\cout{\ren{X}}{x}\ \sub$\lbr $\icout{\config{\ren{X}}{x}}$.
Hence constraint~\ref{def:CA:sig:out} holds for $\ren{X}$.

The other subconstraints of constraint~\ref{def:CA:sig} can be
established in a similar manner.
\epr

\subsection{Multi-level Configuration Automata}
\label{sec:CA:multi-level}

Since a configuration automaton is an SIOA, it is possible for a
configuration automaton to create another configuration automaton.
This leads to a notion of ``multi-level,'' or ``nested'' configuration
automata. The nesting structure is well-founded, that is, the
binary relation ``$X$ is created by $Y$' is well-founded in all
global states. 

This ability to nest entire configuration automata makes our model
flexible. For example, administrative domains can be modeled
in a natural and straightforward manner.
It may also be possible to emulate the motion of ambients in the
ambient calculus \cite{CG00}. If two configuration automata $X, Y$ are
such that neither is ``included'' in the other, then $X$ can ``move
into'' $Y$ by first destroying itself, and then having $Y$ re-create
$X$. This however would require some book-keeping to re-create $X$ in
the same state it was in before it destroyed itself. 
Development of these ideas, including the precise notion of ``is
included in,'' is a topic for a subsequent paper.

\subsection{Compositional Reasoning for Configuration Automata}
\label{sec:CA:compositional-reasoning}

We now establish compositionality results for configuration automata analogous to those
established previously for SIOA.
The notions of execution and trace of a configuration automaton $X$
depend solely on the SIOA component $\sioa{X}$.  Furthermore, the SIOA
component of a composition of configuration automata depends only on
the SIOA components of the individual configuration automata (see
Definition~\ref{def:CA:composition}).  It follows that the results of
Sections~\ref{sec:SIOA:compositional-reasoning} and \ref{sec:SIOA:hiding-and-renaming-monotonic}
carry over for
configuration automata with no modification. We restate them for
configuration automata solely for the sake of completeness.

\subsubsection{Execution Projection and Pasting for Configuration Automata}
\label{subsec:CA:exec-projection-and-pasting}

\begin{definition}[Execution projection for configuration automata]
\label{def:CA:exec-project}
Let $X = X_1 \pl \cdots \pl X_n$ be a  configuration automaton.
Let $\alpha$ be a sequence $x^0 a^1 x^1 a^2 x^2 \ldots x^{j-1} a^j x^j \ldots$ where
$\forall j \geq 0, x^j = \tpl{x^{j}_{1},\ldots,x^{j}_{n}} \in \autstates{X}$ and
$\forall j > 0, a^j \in \csigacts{X}{x^{j-1}}$.
%
For $i \in \n$, define $\alpha \proj X_i$ to be the sequence resulting from:
\begin{enumerate}

\item \label{clause:exec-project:config}
replacing each $x^j$ by its $i$'th component $x^{j}_{i}$, and then

\item \label{clause:exec-project:act}
removing all $a^j x^{j}_{i}$ such that $a^j \not\in \csigacts{X_i}{x^{j-1}_{i}}$.
\end{enumerate}
\end{definition}

Our execution projection result states that the projection of an execution (of
a composed configuration automaton $X = X_1 \pl \cdots \pl X_n$) onto a
component $X_i$, is an execution of $X_i$.

\begin{theorem}[Execution projection for configuration automata]
\label{thm:CA:exec-projection}
Let $X = X_1 \pl \cdots \pl X_n$ be a  configuration automaton.
If $\alpha \in \execs{X}$ then $\alpha \project X_i \in \execs{X_i}$ for all $i \in \n$.
\end{theorem}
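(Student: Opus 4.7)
The plan is to reduce this result directly to the analogous SIOA result, Theorem~\ref{thm:SIOA:exec-projection}, using the fact that executions, signatures, and projections of a configuration automaton are all inherited from its underlying SIOA component. The key observation, already made by the authors in the introduction to Section~\ref{sec:CA:compositional-reasoning}, is that $\execs{X}$ and the projection operator $\proj X_i$ are defined purely in terms of states, state signatures, and transitions, all of which coincide with the corresponding components of $\sioa{X}$ by Definition~\ref{def:CA:execution} and the naming conventions adopted in Definition~\ref{def:CA}.

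First I would unfold the relevant definitions: by Definition~\ref{def:CA:execution}, $\al \in \execs{X}$ iff $\al \in \execs{\sioa{X}}$, and by Definition~\ref{def:CA:composition}, $\sioa{X} = \sioa{X_1} \pl \cdots \pl \sioa{X_n}$. Hence $\al$ is an execution of the SIOA composition $\sioa{X_1} \pl \cdots \pl \sioa{X_n}$. Next, I would observe that the projection $\al \proj X_i$ defined in Definition~\ref{def:CA:exec-project} coincides with the SIOA-level projection $\al \proj \sioa{X_i}$ from Definition~\ref{def:SIOA:exec-projection}: Clause~\ref{clause:exec-project:config} replaces each composite state $x^j$ by its $i$th component $x^j_i$, which is exactly the projection of the $\sioa{X}$ state onto $\sioa{X_i}$; Clause~\ref{clause:exec-project:act} removes actions not in $\csigacts{X_i}{x^{j-1}_i}$, and since $\sig{X_i} = \sig{\sioa{X_i}}$ by the convention in Definition~\ref{def:CA}, this removal criterion matches the one in Definition~\ref{def:SIOA:exec-projection}.

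Then I would apply Theorem~\ref{thm:SIOA:exec-projection} to the SIOA composition $\sioa{X} = \sioa{X_1} \pl \cdots \pl \sioa{X_n}$ and the execution $\al$, obtaining $\al \proj \sioa{X_i} \in \execs{\sioa{X_i}}$ for each $i \in \n$. Since $\al \proj \sioa{X_i} = \al \proj X_i$ by the preceding paragraph, and since $\execs{\sioa{X_i}} = \execs{X_i}$ by Definition~\ref{def:CA:execution} applied to $X_i$, we conclude $\al \proj X_i \in \execs{X_i}$ for all $i \in \n$, as required.

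There is essentially no obstacle in this proof — the only mildly delicate point is verifying the coincidence of the two projection definitions, which amounts to checking that the selector functions applied to $X_i$ agree with those applied to $\sioa{X_i}$. This is immediate from the bracket convention $\states{X_i} = \states{\sioa{X_i}}$, $\sig{X_i} = \sig{\sioa{X_i}}$, $\steps{X_i} = \steps{\sioa{X_i}}$ set out just after Definition~\ref{def:CA:composition}, so the proof reduces to a one-line appeal to Theorem~\ref{thm:SIOA:exec-projection} once the definitional unfolding is made explicit.
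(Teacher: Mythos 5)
Your proposal is correct and matches the paper's approach: the paper in fact omits the proof entirely, remarking just before this theorem that executions and projections of configuration automata depend solely on their SIOA components, so the SIOA results carry over with no modification. Your write-up simply makes that definitional unfolding and the appeal to Theorem~\ref{thm:SIOA:exec-projection} explicit.
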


Our execution pasting result requires that a candidate execution $\alpha$ of a
composed automaton $X = X_1 \pl \cdots \pl X_n$ must project onto an actual
execution of every component $X_i$, and also that every action of $\alpha$ not
involving $X_i$ does not change the configuration of $X_i$.  In this case,
$\alpha$ will be an actual execution of $X$.

\begin{theorem}[Execution pasting for configuration automata]
\label{thm:CA:exec-pasting}
Let $X = X_1 \pl \cdots \pl X_n$ be a  configuration automaton.
Let $\alpha$ be a sequence $x^0 a^1 x^1 a^2 x^2 \ldots x^{j-1} a^j x^j \ldots$ where
$\forall j \geq 0, x^j = \tpl{x^{j}_{1},\ldots,x^{j}_{n}} \in \autstates{X}$ and
$\forall j > 0, a^j \in \csigacts{X}{x^{j-1}}$.
Furthermore, suppose that, for all $i \in \n$:
\begin{enumerate}

\item
\label{clause:CA:exec-pasting:project}
$\al \pj X_i \in \execs{X_i}$, and

\item 
\label{clause:CA:exec-pasting:notinsig}
$\forall j > 0:$
   if $a^j \not\in \csigacts{X_i}{x^{j-1}_{i}}$ then $x^{j-1}_{i} = x^{j}_{i}$.

\end{enumerate}
Then,
        $\alpha \in \execs{X}$.
\end{theorem}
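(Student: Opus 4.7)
The plan is to reduce this theorem directly to the already-established SIOA execution pasting result (Theorem~\ref{thm:SIOA:exec-pasting}). The key observations are that, by Definition~\ref{def:CA:execution}, the set of executions of a configuration automaton $X$ coincides with the set of executions of its underlying SIOA $\sioa{X}$, and that by Definition~\ref{def:CA:composition}, $\sioa{X_1 \pl \cdots \pl X_n} = \sioa{X_1} \pl \cdots \pl \sioa{X_n}$. Thus every claim about executions of $X$ can be rephrased as a claim about executions of the parallel composition of the corresponding SIOA.

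First I would unfold definitions to observe that $\execs{X_i} = \execs{\sioa{X_i}}$ for each $i$, and that $\execs{X} = \execs{\sioa{X_1} \pl \cdots \pl \sioa{X_n}}$. I would also note that $\csigacts{X_i}{x_i} = \ssigacts{\sioa{X_i}}{x_i}$ (since the signature of $X_i$ at $x_i$ is by construction the signature of $\sioa{X_i}$ at $x_i$), and similarly $\csigacts{X}{x} = \ssigacts{\sioa{X}}{x}$. Using these identifications, the hypotheses of Theorem~\ref{thm:CA:exec-pasting} translate verbatim into the hypotheses of Theorem~\ref{thm:SIOA:exec-pasting} applied to $\sioa{X_1}, \ldots, \sioa{X_n}$: the sequence $\al$ consists of states in the product state space with actions lying in the composite signature at the preceding state; the projection $\al \pj X_i$ is, by Definition~\ref{def:CA:exec-project} together with Definition~\ref{def:SIOA:exec-projection}, exactly the SIOA projection $\al \pj \sioa{X_i}$, which by hypothesis lies in $\execs{\sioa{X_i}}$; and the stuttering condition on components not participating in $a^j$ is identical in both formulations.

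Having matched hypotheses, I would invoke Theorem~\ref{thm:SIOA:exec-pasting} to conclude $\al \in \execs{\sioa{X_1} \pl \cdots \pl \sioa{X_n}}$, which by the identifications above is $\execs{\sioa{X})} = \execs{X}$, as required. Since everything reduces cleanly through the definitional equations, there is no substantive obstacle; the only place requiring a little care is verifying that the projection operator for configuration automata (Definition~\ref{def:CA:exec-project}) really does coincide with the SIOA projection of Definition~\ref{def:SIOA:exec-projection} when applied to the underlying SIOA---this follows because both definitions remove precisely the actions $a^j \notin \csigacts{X_i}{x^{j-1}_i} = \ssigacts{\sioa{X_i}}{x^{j-1}_i}$ and project states componentwise. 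The entire argument can be written in a few lines once the definitional correspondences are laid out.
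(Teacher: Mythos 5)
Your reduction is exactly the paper's argument: the paper does not even write out a separate proof, observing only that executions of a configuration automaton are by definition those of its underlying SIOA and that $\sioa{X_1 \pl \cdots \pl X_n} = \sioa{X_1} \pl \cdots \pl \sioa{X_n}$, so Theorem~\ref{thm:SIOA:exec-pasting} carries over verbatim. Your proposal is correct and simply makes these definitional identifications explicit.
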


\subsubsection{Trace Pasting for Configuration Automata}
\label{subsec:CA:trace-projection-and-pasting}

\bco[Trace pasting for configuration automata]
\label{cor:CA:trace-pasting}
Let $X_1,\ldots,X_n$ be compatible configuration automata, and let $X = X_1 \pl \cdots \pl X_n$.
Let $\b$ be a trace and  assume that there exist
$\b_1,\ldots,\b_n$ such that
(1) $(\fa j \in \n: \b_j \in \traces{X_j})$, and
(2) $\zip(\b,\b_1,\ldots,\b_n)$.
Then $\b \in \traces{X}$.
\eco

The definition of $\zip(\b,\b_1,\ldots,\b_n)$ remains unchanged for
configuration automata, since it does not refer to the internal
structure of automata, only to external actions and external signatures.

\subsubsection{Trace Substitutivity and Equivalence for Configuration Automata}
\label{subsec:CA:trace-substitutivity}

\bt[Trace substitutivity for configuration automata]
\label{thm:CA:trace-substitutivity}
Let $X_1,\ldots,X_n$ be compatible configuration automata, and let $X = X_1 \pl \cdots \pl X_n$.
For some $k \in \n$, \lb let $X_1,\ldots,X_{k-1},X'_k,X_{k+1},\ldots,X_n$ be compatible configuration automata, and let 
$X' = X_1 \pl \cdots \pl X_{k-1} \pl X'_k \pl X_{k+1} \pl \cdots \pl X_n$.
Assume also that $\traces{X_k} \sub \traces{X'_k}$. Then $\traces{X} \sub \traces{X'}$.
\et

\bt[Trace Substitutivity for Configuration Automata w.r.t Action Hiding]
\label{thm:CA:hiding-monotonic-wrt-traces}
Let $X, X'$ be configuration automata such that $\traces{X} \sub \traces{X'}$. Let $\HActs$ a set of actions.
Then $\traces{X \hide \HActs} \sub \traces{X' \hide \HActs}$.
\et

\bt[Trace Substitutivity for Configuration Automata w.r.t Action Renaming]
\label{thm:CA:renaming-monotonic-wrt-traces}
Let $X, X'$ be configuration automata such that $\traces{X} \sub \traces{X'}$. 
Let $\rho$ be an injective mapping from actions to actions whose
domain includes $\acts{X} \un \acts{X'}$.
Then $\traces{\ren{X}} \sub \traces{\ren{X'}}$.
\et

\bt[Trace equivalence is a congruence]
\label{thm:CA:trace-equiv-is-congruence}
Let $X_1,\ldots,X_n$ be compatible configuration automata, and let $X = X_1 \pl \cdots \pl X_n$.
For some $k \in \n$, let $X_1,\ldots,X_{k-1},X'_k,X_{k+1},\ldots,X_n$ be compatible configuration automata, and let 
$X' = X_1 \pl \cdots \pl X_{k-1} \pl X'_k \pl X_{k+1} \pl \cdots \pl X_n$.

\bn
\item If $\traces{X_k} = \traces{X'_k}$, then $\traces{X} = \traces{X'}$.
\item If $\traces{X_k} = \traces{X'_k}$, then $\traces{X_k \hide \HActs} = \traces{X'_k \hide \HActs}$.
\item If $\traces{X_k} = \traces{X'_k}$, then $\traces{\ren{X_k}} = \traces{\ren{X'_k}}$.
\en
\et

\section{Creation Substitutivity for Configuration Automata}
\label{sec:CA:creation-substitutivity}

\newcommand{\mab}{\ensuremath{m_{AB}}}
\newcommand{\pref}[1]{\mathit{prefixes}(#1)}


We now show that trace inclusion is monotonic with respect to process creation, under certain
conditions. Our intention is that, if a configuration automaton $Y$ creates an SIOA $B$ when
executing some particular actions in some particular states, then, if configuration automaton $X$
results from modifying $Y$ by making it create an SIOA $A$ instead, and if
$\traces{A} \sub \traces{B}$, then we can prove $\traces{X} \sub \traces{Y}$. In the rest of this
section, let $X$ be a configuration automaton that creates SIOA $A$ in some actions, but never
creates SIOA $B$. Also let $Y$ be a configuration automaton that creates SIOA $B$ in some actions,
but never creates SIOA $A$.

\bd[ \mbox{$[B/A], \simAB$} ]
\label{def:simAB}
Let $\varphi \sub \autids$, and $A, B$ be SIOA identifiers. Then we
define
$\varphi[B/A] = (\varphi - \{A\}) \un \{B\}$ if $A \in \varphi$, and
$\varphi[B/A] = \varphi$ if $A \not\in \varphi$.

Let $C, D$ be configurations. 
We define $C \simAB D$ iff (1) $\icaut{D} = \icaut{C}[B/A]$,
(2) for every $A' \in \icaut{C} - \{A\}$: $\icmap{D}(A') = \icmap{C}(A')$, and
(3) $\sext{A}{s} = \sext{B}{t}$ where $s = \icmap{C}(A)$, $t = \icmap{D}(B)$.
That is, in $\simAB$-corresponding configurations, the SIOA other than $A, B$ must be the same, and
must be in the same state. $A$ and $B$ must have the same external signature.
\ed
In the sequel, when we write $\psi = \vp[B/A]$, we always assume that $B \nin \vp$ and $A \nin \psi$.

\bp
\label{prop:simAB-implies-eqaulExt}
Let $C, D$ be configurations such that $C \simAB D$. Then $\icext{C} = \icext{D}$. 
\ep
\bpr
If $A \nin C$ then $C = D$ by Definition~\ref{def:simAB}, and we are done. 
Now suppose that $A \in C$, so that $C = \tpl{\A \un \set{A},\Sm}$ for
some set $\A$ of SIOA identifiers, and let $s = \S{A}$.
Then, by Definition~\ref{def:intrinsic-signature}, $\icout{C} =  (\Union_{A' \in \A} \sout{A'}{\S{A'}}) \un \sout{A}{s}$.

From $C \simAB D$ and Definition~\ref{def:simAB}, we have $D = \tpl{\A \un \set{B},\Sm'}$, 
where $\Sm'$ agrees with $\Sm$ on all $A' \in \A$, and $\Spr{B} = t$ such that $\sext{A}{s} = \sext{B}{t}$.
Hence $\sout{A}{s} = \sout{B}{t}$ and $\sin{A}{s} = \sin{B}{t}$.
By Definition~\ref{def:intrinsic-signature}, $\icout{D} = (\Union_{A' \in \A} \sout{A'}{\Spr{A'}}) \un \sout{B}{t}$.
Finally, $(\Union_{A' \in \A} \sout{A'}{\Spr{A'}}) \un \sout{B}{t} = (\Union_{A' \in \A} \sout{A'}{\S{A'}}) \un \sout{A}{s}$,
since  $\Sm'$ agrees with $\Sm$ on all $A' \in \A$, and $\sout{A}{s} = \sout{B}{t}$.

Putting the above equalities together, we obtain
$\icout{C} = 
 (\Union_{A' \in \A} \sout{A'}{\S{A'}}) \un \sout{A}{s}\ =$ \lbr
$ (\Union_{A' \in \A} \sout{A'}{\Spr{A'}}) \un \sout{B}{t} = 
 \icout{D}$.
We establish $\icin{C} = \icin{D}$ in the same manner, and omit the repetitive details. 
Hence $\icext{C} = \icext{D}$. 
\epr

To obtain monotonicity, the start configurations of $Y$ must include a
configuration corresponding to every configuration of $X$, i.e.,
$\fa x \in \start{X}, \ex y \in \start{Y}: \config{X}{x} \simAB \config{Y}{y}$.
Together with $\traces{A} \sub \traces{B}$, we might expect to be able
to establish $\traces{X} \sub \traces{Y}$.  However, suppose that $X$
has an execution $\al$ in which $A$ is created exactly once,
terminates some time after it is created, and after $A$'s termination,
$X$ executes an input action $a$.  Let $\b = \traceA{\al}{X}$ and let
$\b_A$ be the trace that $A$ generates during the execution of $\al$
by $X$.  Since $\traces{A} \sub \traces{B}$, we can construct (by
induction) using conditions~\ref{def:CA:start},
\ref{def:CA:steps-soundness}, and \ref{def:CA:steps-completeness} of
Definition~\ref{def:CA}, a corresponding execution $\al'$ of $Y$, up
to the point where $A$ terminates.  Since
$\traces{A} \sub \traces{B}$, we have $\b_A \in \traces{B}$.  Define
$B$ as follows.  $B$ emulates $A$ faithfully up to but not including
the point at which $A$ terminates (i.e., self-destructs). Then, $B$
sets it's external signature to empty but keeps some internal actions
enabled. This allows $B$ to export an empty signature at this point.
After executing an internal action, $B$ permanently enters a state in
which it's signature has action $a$ as an output, but $a$ is never
actually enabled.  Thus, no trace of $Y$ from this point onwards can
contain action $a$. Hence, $\b$ cannot be a trace of $Y$, and so
$\traces{X} \not\sub \traces{Y}$, since $\b \in \traces{X}.$ This
example is a consequence of the fact that an SIOA can prevent an
action $a$ from occurring, if $a$ is an output action of the SIOA
which is not currently enabled, and it shows that we also need to
relate the traces of $A$ that lead to termination with those of $B$
that lead to termination.


We therefore also require that the terminating traces of $A$ (see formal definition below) are a subset of the terminating traces of $B$.  This
however, is still insufficient, since we have so far only required that $X$ create $A$ ``whenever'' $Y$ creates $B$. We have not prevented $X$ from
creating $A$ in more situations than those in which $Y$ creates $B$. This can cause $\traces{X} \not\sub \traces{Y}$, as the following example shows.

\begin{example}
\label{ex:creation}
Let $A, B, C$ be the SIOA and $X, Y$ be the configuration automata given in Figure~\ref{fig:ABC}, as indicated by the automaton name followed by
``::''.  Each node represents a state and each directed edge represents a transition, and is labeled with the name of the action executed.  All the
automata have a single start state.  $A, B, C$, have start state $s^0, t^0, u^0$ respectively, and 
$\sout{A}{s^0} = \sout{B}{t^0} = \set{a, b}$. Note that $A$ has $b$ in the signature of $s^0$ but does not enable $b$ in $s^0$.
All the states of $X, Y$, except the terminating
states, are labeled with their corresponding configurations.  The start states of $X, Y$ are the states with configuration $\{(C,u^0)\}$.

By inspection, $\fa x \in \start{X}, \ex y \in \start{Y}: \config{X}{x} \simAB \config{Y}{y}$.
$\traces{A} \sub \traces{B}$, and $\ttraces{A} \sub \ttraces{B}$.  Also by inspection,
$\traces{X} = \{c, ca, cd, cad, cda\}$ and $\traces{Y} = \{c, ca, cb, cd\}$, and so
$\traces{X} \not\sub \traces{Y}$ (we omit the external signatures in
the traces).  This is because $X$ creates $A$ along the
transition which is generated by the $(u^0,c,u'')$ transition of $C$ (according to
constraint~\ref{def:CA:steps-completeness} of Definition~\ref{def:CA}), whereas $Y$ does not.
\end{example}

\begin{figure}
\begin{center}
\resizebox{5in}{!}{\input 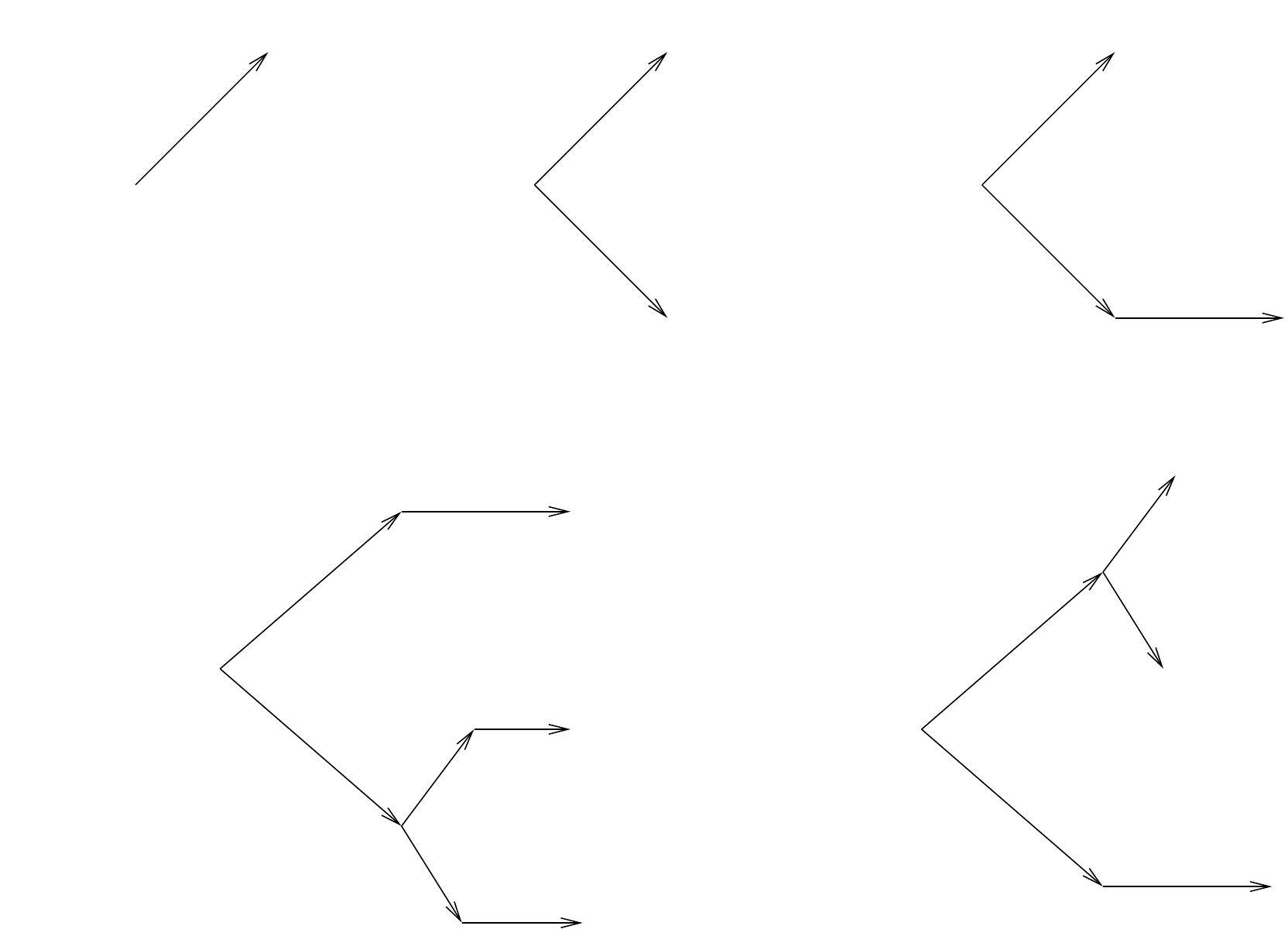_t} 
\end{center}
\caption{The Automata in Example~\ref{ex:creation}}
\label{fig:ABC}
\end{figure}

We now impose a restriction which precludes scenarios such as in Example~\ref{ex:creation}.

\bd[Creation corresponding configuration automata]
\label{def:creation-corresponding}
Let $X, Y$ be configuration automata and $A, B$ be SIOA. We say that 
\emph{$X, Y$ are creation-corresponding \wrt\ $A, B$} iff
\bn

\item $X$ never creates $B$ and $Y$ never creates $A$.

\item Let $\b \in \ftraces{X} \ints \ftraces{Y}$, and let 
$\al \in \fexecs{X}$, $\pi \in \fexecs{Y}$
 be such that $\traceA{\al}{A} = \traceA{\pi}{B} = \b$.  
Let $x = \last{\al}$, $y = \last{\pi}$, \ie $x, y$ are the last states along $\al, \pi$, respectively. 
Then 
$$\fa a \in \ssigacts{X}{x} \ints \ssigacts{Y}{y}: \ccreated{Y}{y}{a} = \ccreated{X}{x}{a}[B/A].$$

\en
\ed
Now, in addition to the requirements discussed above in Example~\ref{ex:creation}, we require that
the configuration automata $X, Y$ be creation-corresponding \wrt\ $A, B$, and that, from the last
states of executions with the same trace, $X$ and $Y$ create the same SIOA, except that $Y$ may
create $B$ where $X$ creates $A$. We will also restrict $A$, $B$ so that their internal actions do not
create SIOA, and do not lead to an empty signature, \ie to self-destruction. Also $B$ can have only
a single start state.  We give results for finite trace inclusion and trace inclusion.

Let $s^0 a^1 s^1 \ldots s^{n-1} a^n s^n$ be a finite execution of SIOA $A$ such that
$\ssigacts{A}{s^n} = \emptyset$.  Then, without loss of generality, we assume that, for all $t$ such
that $(s^{n-1}, a^n, t) \in \autsteps{A}$, $\ssigacts{A}{t} =
\emptyset$. That is, execution in state $s^{n-1}$ of
$a^{n}$ per se, and not the choice of target state, determines that $A$ is destroyed. We also
assume that hiding is not used, so that a state and its configuration have the same signature, \ie
for every configuration automaton $X$, 
 $\fa x \in \states{X}$: 
 $\cout{X}{x} =\icout{\config{X}{x}}$, 
 $\cin{X}{x} = \icin{\config{X}{x}}$, and
 $\cint{X}{x} = \icint{\config{X}{x}}$.

\bd[Terminating execution, terminating trace]
\label{def:texec-ttrace}
Let $s^0 a^1 s^1 \ldots s^{n-1} a^n s^n$ be a finite execution of SIOA $A$ such that $\ssigacts{A}{s^n} = \emptyset$, 
and let $\al = s^0 a^1 s^1 \ldots s^{n-1} a^n$, \ie remove the final state $s^n$.
Then we say that $\al$ is a \emph{terminating execution} of $A$.
Define $\trexecs{A} = \set{ \al \stt \mbox{$\al$ is a terminating execution of $A$} }$.
If $\b = \trace{\al}$, then we say that $\b$ is a \emph{terminating trace} of $A$.
Define $\ttraces{A} =$\lbr $\set{ \b \stt \mbox{$\b$ is a terminating trace of $A$} }$.
\ed
Note that we define a terminating execution to end in an action (which sets $A$'s signature to empty), and not in a state. This is due to
Definitions~\ref{def:intrinsic-signature} and \ref{def:CA}, which remove an SIOA $A$ when it has an empty signature, and hence the final state $s$, in
which $\ssigacts{A}{s} = \emptyset$, does not appear in any configuration of the containing configuration automaton $X$,
\ie there is no reachable state $x$ of $X$ and configuration $C$ such that 
$\config{X}{x} = C$ and $\icmap{C}(A) = s$.
Thus, to define a notion of projection of an execution of configuration automaton $X$ onto an SIOA $A$ that is ``inside'' $X$, we have to 
define the terminating executions of $A$ so that they omit the final state.
We also extend the concatenation operator $\cat$ so that it appends a
single action: for a finite execution fragment $\al = s^0 a^1 s^1 a^2 \ldots a^i s^i$
we define $\al \cat a$ to be $s^0 a^1 s^1 a^2 \ldots a^i s^i a$, \ie $\al$ followed by $a$.

\bd[Projection of configuration automaton onto a contained SIOA, $\ppj$]
\label{def:ppj}
Let $\al\ =$\lbr $x^0 a^1 x^1 \ldots x^i a^{i+1} x^{i+1} \ldots$ be an execution of a configuration automaton $X$. Then
$\al \ppj A$ is a sequence of executions of $A$, and results from the
following steps:
\bn
\item insert a ``delimiter'' \dlm\ after an action $a^i $ whose execution causes $A$ to set its 
  signature to empty,
\item remove each $x^i a^{i+1}$ such that $A \not\in \caut{X}{x^i}$,
\item remove each $x^i a^{i+1}$ such that $a^{i+1} \not\in \ssigacts{A}{\icmap{\config{X}{x^i}}(A)}$,
\item if $\al$ is finite, $x = \last{\al}$, and $A \nin\caut{X}{x}$,  then remove $x$, 
\item replace each  $x^i$ by $\icmap{\config{X}{x^i}}(A)$.
\en
\ed
$\al \ppj A$ is, in general, a sequence of several
(possibly an infinite number of) executions of $A$, all of which are terminating except the
last. That is, 
$\al \ppj A = \al^1 \dlm \cdots \dlm \al^k$ where 
$\qf{\fa}{j}{1 \le j < k}{\al^j \in \trexecs{A} } \land \al^k \in \execs{A}$.

\bd[Prefix relation among sequences of executions, $\spref, \sprefp$]
\label{def:spref}
\label{def:ll}
Let $\al^1 \dlm \cdots \dlm\al^k$ and\linebreak
$\dl^1 \dlm \cdots \dlm \dl^\l$ be sequences of executions of some
SIOA. Define $\al^1 \dlm \cdots \dlm \al^k \spref \dl^1 \dlm \cdots \dlm \dl^\l$ iff
$k \le \l \land \qf{\fa}{j}{1 \le j < k}{\al^j = \dl^j} \land \al^k \le \dl^k$.
If $\al^1 \dlm \cdots \dlm \al^k \spref \dl^1 \dlm \cdots \dlm \dl^\l$ and
$\al^1 \dlm \cdots \dlm \al^k \ne \dl^1 \dlm \ldots \dlm \dl^\l$ then we write 
$\al^1 \dlm \cdots \dlm \al^k \sprefp \dl^1 \dlm \cdots \dlm \dl^\l$.
\ed

\bd[Trace of a sequence of executions, $\straceA{\al^1 \dlm \cdots \dlm \al^k}{A}$]
\label{def:strace}
Let $\al^1 \dlm \cdots \dlm \al^k$ be a\lbr sequence of executions of some
SIOA $A$. Then $\straceA{\al^1 \dlm \cdots \dlm \al^k}{A}$ is 
$\traceA{\al^1}{A} \dlm \cdots \dlm \traceA{\al^k}{A}$, \ie a sequence of traces of $A$,
corresponding to the 
sequence of executions $\al^1 \dlm \cdots \dlm \al^k$.
%
%
\ed
Note that we overload the delimiter \dlm, and use it also in
sequences of traces.
It follows from Definition~\ref{def:ppj} that 
$\al' \le \al$ implies $\al' \ppj A \spref \al \ppj A$, where $\al', \al$ are executions of
some configuration automaton.
If $\al = x^0 a^1 x^1 \ldots x^i a^{i+1} x^{i+1} \ldots$ is an execution of some configuration automaton, then define $\trace{\al,j,k}$ to be
$\trace{x^j a^{j+1} \cdots a^k x^k}$ if $j \le k$, and to be $\lambda$ (the empty sequence) if $j > k$.

\bd[Execution correspondence relation, $R_{AB}$]
\label{def:RAB}
Let $\al, \pi$ be executions of configuration automata $X, Y$ respectively. 
Then $\al \RAB \pi$ iff there exists a nondecreasing mapping \linebreak
$m: \{0,\ldots,|\al|\} \to \{0,\ldots,|\pi|\}$ such that all of the following hold:
\bn

\item \label{RAB:indexmap:initial} $m(0) = 0$.

\item \label{RAB:indexmap:cofinal}

$\fa j, 0 \le j \le |\pi| \land j \ne \omega, \ex i, 0 \le i \le |\al| \land i \ne \omega: m(i) \ge j$.

\item \label{RAB:indexmap:traceX} 
	$\fa i, 0 < i \leq |\al| \land i \ne \omega: 
              \traceA{\execfrag{\pi}{m(i-1)}{m(i)}}{Y} = \traceA{\execfrag{\al}{i-1}{i}}{X}
        $. 

\item \label{RAB:indexmap:traceA} 
	$\fa i, 0 < i \leq |\al| \land i \ne \omega: 
              \traceA{(\execfrag{\pi}{m(i-1)}{m(i)}) \ppj B}{B} = \traceA{(\execfrag{\al}{i-1}{i}) \ppj A}{A} 
        $. 

\item \label{RAB:indexmap:config} 
        $\fa i, 0 \le i \leq |\al| \land i \ne \omega: \config{X}{x^i} \simAB \config{Y}{y^{m(i)}}$. 

\en
\ed

\bp
\label{prop:RAB-implies-traceEquality}
Let $\al, \pi$ be executions of configuration automata $X, Y$ respectively. 
If $\al \RAB \pi$, then $\traceA{\al}{X} = \traceA{\pi}{Y}$.
\ep
\bpr
For finite executions, by induction on the length of $\al$, using Clause~\ref{RAB:indexmap:traceX} 
of Definition~\ref{def:RAB} to establish the inductive step.
For infinite executions, apply the finite case for each prefix, and then take the limit with respect
to prefix ordering.
\epr



\bl[Execution correspondence]
\label{lem:finite-creation-subst}
Let $X, Y$ be configuration automata, and $A, B$ be SIOA. Assume that,
\bn

\item \label{lem:finite-creation-subst:start-AB}
$B$ has a single start state, and $A$, $B$ do not destroy themselves by executing an internal action,

\item \label{lem:finite-creation-subst:create-AB} internal actions of $A, B$ do not create any SIOA, \ie have empty create sets,

\item \label{lem:finite-creation-subst:start}
$\fa x \in \start{X}, \ex y \in \start{Y}: \config{X}{x} \simAB \config{Y}{y}$,

\item \label{lem:finite-creation-subst:ftraces}
$\ftraces{A} \sub \ftraces{B}$,

\item \label{lem:finite-creation-subst:ttraces}
$\ttraces{A} \sub \ttraces{B}$, and

\item \label{lem:finite-creation-subst:creatCorr}
$X, Y$ are creation-corresponding \wrt\ $A, B$.

\en
Then\\
\ind	$\fa \al \in \fexecs{X} , \ex \pi \in \fexecs{Y}: \al \RAB \pi$.
\el
\bpr
Fix $\al = x^0 a^1 x^1 a^2 x^2 \ldots x^{\l} a^{\l+1} x^{\l+1}$ to be an arbitrary finite execution of $X$.
Let $\al \ppj A = \seqA$ for some $k \ge 0$, and where \qf{\fa}{j}{1 \le j < k}{\al_A^j \in \trexecs{A}} and
$\al_A^k \in \fexecs{A}$.
By Assumptions~\ref{lem:finite-creation-subst:ftraces} and \ref{lem:finite-creation-subst:ttraces}, each such $\al_A^j$ has at
least one corresponding execution $\pi_B^j$ which has the same trace.
Thus there exist executions $\pi_B^1,\ldots,\pi_B^k$ of $B$ such that
\bleqn{(AB)} 
\parbox[c]{6in}{
   $\qf{\fa}{j}{1 \le j \le k}{\traceA{\al_A^j}{A} = \traceA{\pi_B^j}{B}}$,\\
   $\qf{\fa}{j}{1 \le j < k}{\pi_B^j \in \trexecs{B}}$, and \\
   $\pi_B^k \in \fexecs{B}$.
}
\eleqn
For the rest of the proof, fix these $\pi_B^1,\ldots,\pi_B^k$. Now define
$\pref{\seqA}\ =$\linebreak $\set{ \xi \stt \xi \spref \seqA}$ and
$\pref{\Bseq} = \set{ \chi \stt \chi \spref  \Bseq}$.
Then it follows, from (AB), that there exists a mapping 
$\mab: \pref{\seqA} \to \pref{\Bseq}$ such that, for $\xi \in \pref{\seqA}$, $\mab(\xi) = \chi$, where
\bn

\item  $\straceA{\xi}{A} = \straceA{\chi}{B}$ and 

\item for all $\chi' \in \pref{\Bseq}$ such that $\straceA{\xi}{A} = \straceA{\chi'}{B}$, we have
  $\chi \spref \chi'$.  That is, $\chi$ is the least (with respect to the prefix ordering given by
  $\spref$) $\chi'$ such that $\straceA{\xi}{A} = \straceA{\chi'}{B}$.


\en
We now establish (*):\\
\bleqn{(*)}
\parbox[c]{6in}{
For every prefix $\al'$ of $\al$, there exists a $\pi'$ such that 
\bn
\item $\pi'$ is a finite execution of $Y$,
\item $\al' \RAB \pi'$, and
\item $\pi' \ppj B \spref \Bseq$ and $\mab(\al' \ppj A) = \pi' \ppj B$
\en
}
\eleqn
The proof is by induction on the length of $\al'$. \\

\noindent
\pcase{Base case: $\al' = x^0$.} Then $\pi'$ = $y^0$ such that $y^0 \in \start{Y}$ and 
$\config{X}{x^0} \simAB \config{Y}{y^0}$.  $y^0$ exists by
Assumption~\ref{lem:finite-creation-subst:start}.  $\pi'$ is a finite (zero-length) execution of $Y$, since $y^0 \in \start{Y}$.  We now establish
$\al' \RAB \pi'$, \ie Definition~\ref{def:RAB}.  Let $m(0)=0$. Then clause~\ref{RAB:indexmap:initial} holds.  Also clause~\ref{RAB:indexmap:cofinal}
holds since $\al', \pi'$ both have length 0. 
Clauses~\ref{RAB:indexmap:traceX} and \ref{RAB:indexmap:traceA} hold vacuously, because the range
$0 < i \leq |\al'|$ is empty: since $\al' = x^0$, we have $|\al'| = 0$, as $\al'$ contains zero transitions.
Clause~\ref{RAB:indexmap:config} holds since $\config{X}{x^0} \simAB \config{Y}{y^0}$ and $m(0) = 0$.

Finally, $\pi' \ppj B$ is the (unique) start state of $B$, by Definition~\ref{def:ppj},
and Assumption~\ref{lem:finite-creation-subst:start-AB}.
Hence $\pi' \ppj B \spref \Bseq$. Also, $\mab(\al' \ppj A) = \pi' \ppj B$, by definition of $\mab$
and $\config{X}{x^0} \simAB \config{Y}{y^0}$.

\noindent
\pcase{Induction step: $\al' = \al'' \cat (x^{i}  a^{i+1} x^{i+1})$ 
where $\al'' = x^0 a^1 x^1 a^2 x^2 \ldots x^{i-1} a^{i} x^{i}$.}
The induction hypothesis is as follows:\\
\bleqn{(ind.\ hyp.)}
\parbox[c]{6in}{
There exists a $\pi''$ such that 
\bn
\item $\pi''$ is a finite execution of $Y$,
\item $\al'' \RAB \pi''$, and
\item $\pi'' \ppj B \spref \Bseq$ and $\mab(\al'' \ppj A) = \pi'' \ppj B$   
\en
}
\eleqn
We now extend $\pi''$ to a finite $\pi'$ such that $\al' \RAB \pi'$. The induction step splits into eight cases, treated below. First, we establish some 
terminology and assertions that apply to all the cases.

Let $C_{i} = \config{X}{x^{i}}$, $C_{i+1} = \config{X}{x^{i+1}}$. 
Also let $\pi'' = y^0 b^1 y^1 b^2 y^2 \ldots y^{j-1}  a^{j} y^{j}$, and $D_j = \config{Y}{y^{j}}$.
By Constraint~\ref{def:CA:steps-soundness} of Definition~\ref{def:CA}, 
\bleqn{(a)}
      $C_{i} \ctrans{a^{i+1}}{\vp} C_{i+1}$ where $\vp = \ccreated{X}{x^i}{a^{i+1}}$.
\eleqn
Hence 
\bleqn{(b)}
    $a^{i+1} \in \csigacts{X}{x^i}$ and $a^{i+1} \in \icsigacts{C_i}$,
\eleqn
since $a^{i+1}$ can be executed from $x^i$, and $C_i = \config{X}{x^i}$.
By $\al'' \RAB \pi''$ and Proposition~\ref{prop:RAB-implies-traceEquality}, 
\bleqn{(c)}
        $\traceA{\al''}{X} = \traceA{\pi''}{Y}$,
\eleqn
and hence also
\bleqn{(d)}
        $\cext{X}{x^i} = \cext{Y}{y^j}$,
\eleqn
since $x^i, y^j$ are the last states of $\al'', \pi''$, respectively.
In the rest of the proof, let $\b = \traceA{\al''}{X} = \traceA{\pi''}{Y}$.
By $\al'' \RAB \pi''$ and Definition~\ref{def:RAB}, we have
\bleqn{(e)}
        $j = m(i)$ and $C_i \simAB D_j$.
\eleqn
Suppose that $a^{i+1} \in \csigacts{Y}{y^j}$. 
Then, by (b, c), Assumption~\ref{lem:finite-creation-subst:creatCorr}, and Definition~\ref{def:creation-corresponding},
we have
\bleqn{(f)}
  $\ccreated{Y}{y^j}{a^{i+1}} = \ccreated{X}{x^i}{a^{i+1}}[B/A]$ if $a^{i+1} \in \csigacts{Y}{y^j}$.
\eleqn
%


\noindent
We now deal with each case of the induction step, in turn.

\case{1}{$A \not\in \icaut{C_i}$ and $A \not\in \icaut{C_{i+1}}$} 

By (e), $C_i \simAB D_j$. Since $A \not\in \icaut{C_i}$, we have, by Definition~\ref{def:RAB}, that $C_i = D_j$.  Since $A \not\in \icaut{C_{i+1}}$,
if follows that $A \not\in \ccreated{X}{x^i}{a^{i+1}}$ by Definitions~\ref{def:config-trans} and
\ref{def:CA}.
From (a), we have $C_{i} \ctrans{a^{i+1}}{\vp} C_{i+1}$, where $\vp =  \ccreated{X}{x^i}{a^{i+1}}$.
Let $D_{j+1} = C_{i+1}$. Then we have $D_j \ctrans{a^{i+1}}{\vp} D_{j+1}$.
Hence $a^{i+1} \in \icsigacts{D_j}$, since $a^{i+1}$ can be executed from $D_j$.
Hence $a^{i+1} \in \csigacts{Y}{y^j}$ by Definition~\ref{def:CA}. 
Hence $\ccreated{Y}{y^j}{a^{i+1}} = \ccreated{X}{x^i}{a^{i+1}}[B/A]$ by (f).
Since  $A \not\in \ccreated{X}{x^i}{a^{i+1}}$, we have 
 $\ccreated{Y}{y^j}{a^{i+1}} =$\lbr $\ccreated{X}{x^i}{a^{i+1}}$.
So letting $\psi = \ccreated{Y}{y^j}{a^{i+1}}$, 
we have $\psi = \vp$, and so 
$D_j \ctrans{a^{i+1}}{\psi} D_{j+1}$.

By $a^{i+1} \in \csigacts{Y}{y^j}$, $\psi = \ccreated{Y}{y^j}{a^{i+1}}$, 
$D_j \ctrans{a^{i+1}}{\psi} D_{j+1}$, and  Definition~\ref{def:CA}, we have 
\bleqn{}
   $\ex y^{j+1}: y^j \llas{a^{i+1}}{Y} y^{j+1}$ and $D_{j+1} = \config{Y}{y^{j+1}}$.
\eleqn
Now let $\pi' = \pi'' \cat (y^{j}  a^{i+1} y^{j+1})$. We now establish 
$\al' R_{AB} \pi'$, $\pi' \ppj B \spref \Bseq$, and $\mab(\al' \ppj A) = \pi' \ppj B$.

\phdr{Proof of $\al' R_{AB} \pi'$} extend the mapping $m$ by setting $m(i+1) = j+1$.
We deal with each clause of Definition~\ref{def:RAB} in turn.

Clause \ref{RAB:indexmap:initial}: holds since $m(0)=0$ remains true.

Clause \ref{RAB:indexmap:cofinal}: holds since $|\pi|=j+1$.

Clause \ref{RAB:indexmap:traceX}: 
from above, $\traceA{\execfrag{\al}{i}{i+1}}{X} =  \cext{X}{x^i} \cat a^{i+1} \cat \cext{X}{x^{i+1}}$ and\lbr
$\traceA{\execfrag{\pi}{m(i)}{m(i+1)}}{Y} = \cext{Y}{y^{m(i)}} \cat a^{i+1} \cat \cext{Y}{y^{m(i+1)}} = \cext{Y}{y^j} \cat a^{i+1} \cat \cext{Y}{y^{j+1}}$.
By (d), $\cext{X}{x^{i}} = \cext{Y}{y^{j}}$.
Also, $\cext{X}{x^{i+1}} = \icext{C_{i+1}} = \icext{D_{j+1}} = \cext{Y}{y^{j+1}}$, since $D_{j+1} = C_{i+1}$.
Hence $\traceA{\execfrag{\al}{i}{i+1}}{X} = \traceA{\execfrag{\pi}{m(i)}{m(i+1)}}{Y}$. This and the induction hypothesis establishes 
Clause~\ref{RAB:indexmap:traceX}.

Clause \ref{RAB:indexmap:traceA}:
since $A \not\in \icaut{C_i}$ and $A \not\in \icaut{C_{i+1}}$, $A$ is not a participant in 
$a^{i+1}$.
Likewise $B \not\in \icaut{D_j}$ and $B \not\in \icaut{D_{j+1}}$, and so $B$ is not a participant in 
$a^{i+1}$.
Hence by Definition~\ref{def:ppj}, $\traceA{(\execfrag{\al}{i}{i+1}) \ppj A}{A}$ is empty, and 
$\traceA{(\execfrag{\pi}{j}{j+1}) \ppj B}{B}$ is also empty. Since $m(i)=j, m(i+1)=j+1$,
we have $\traceA{(\execfrag{\pi}{m(i)}{m(i+1)}) \ppj B}{B}$ is empty.
Clause \ref{RAB:indexmap:traceA} follows from this and the induction hypothesis.

Clause \ref{RAB:indexmap:config}: we have, from above, 
$C_{i+1} = D_{j+1}$, 
$A \not\in \icaut{C_{i+1}}$,
$B \not\in \icaut{D_{j+1}}$. 
Hence $C_{i+1} \simAB D_{j+1}$, by Definition~\ref{def:simAB}.
Since $C_{i+1} = \config{X}{x^{i+1}}$, 
$D_{j+1} = \config{Y}{y^{j+1}}$, we have
$\config{X}{x^{i+1}} \simAB \config{Y}{y^{j+1}}$. 
Since $m(i+1)=j+1$, we have
$\config{X}{x^{i+1}} \simAB \config{Y}{y^{m(i+1)}}$. 
Clause \ref{RAB:indexmap:config} follows from this and the induction hypothesis.

\phdr{Proof of $\pi' \ppj B \spref \Bseq$} by the induction hypothesis, 
$\pi'' \ppj B \spref \Bseq$. 
We showed above (proof of Clause \ref{RAB:indexmap:traceA} of $\al' R_{AB} \pi'$) that 
$B$ is not a participant in $a^{i+1}$, and hence
$\pi' \ppj B = \pi'' \ppj B$. Hence $\pi' \ppj B \spref \Bseq$.

\phdr{Proof of $\mab(\al' \ppj A) = \pi' \ppj B$} we showed above (proof of Clause \ref{RAB:indexmap:traceA} of $\al' R_{AB} \pi'$) that $A$ is not a
participant in $a^{i+1}$ and $B$ is not a participant in $a^{i+1}$.  Hence $\al' \ppj A = \al'' \ppj A$, and $\pi' \ppj B = \pi'' \ppj B$. 
By the induction hypothesis, $\mab(\al'' \ppj A) = \pi'' \ppj B$. Hence $\mab(\al' \ppj A) = \pi' \ppj B$.


\case{2}{$A \not\in \icaut{C_i}$ and $A \in \icaut{C_{i+1}}$} 

By (e), $C_i \simAB D_j$. Since $A \not\in \icaut{C_i}$, we have, by Definition~\ref{def:RAB}, that $C_i = D_j$.
Since $A \not\in \icaut{C_i}$ and $A \in \icaut{C_{i+1}}$,  if follows that $A \in \ccreated{X}{x^i}{a^{i+1}}$ by Definitions~\ref{def:config-trans} and \ref{def:CA}.
By (b), $a^{i+1} \in \icsigacts{C_i}$. Hence $a^{i+1} \in \icsigacts{D_j}$ since $C_i = D_j$.
Hence $a^{i+1} \in \csigacts{Y}{y^j}$ by Definition~\ref{def:CA}.  
Hence $\ccreated{Y}{y^j}{a^{i+1}} = \ccreated{X}{x^i}{a^{i+1}}[B/A]$ by (f).
So letting $\psi = \ccreated{Y}{y^j}{a^{i+1}}$ and $\vp = \ccreated{X}{x^i}{a^{i+1}}$,
we have $\psi = \vp[B/A]$.

Let $s = \cmap{C_{i+1}}{A}$.  Hence $\al' \ppj A = \al'' \ppj A \dlm s$ by Definition~\ref{def:ppj}, 
and so $\al'' \ppj A \sprefp \al' \ppj A$. Also $\al' \le \al$, and so 
$\al'' \ppj A \sprefp \al' \ppj A \spref \al \ppj A = \seqA$.
Hence $\al'' \ppj A = \prSeqA$ for some $\l < k$, since $A \not\in \icaut{C_i}$, and so the last execution in $\al'' \ppj A$ must be a terminating execution in
\seqA, and not merely a prefix of an execution in \seqA. 
It follows, by Definition~\ref{def:ppj}, that $s = \first{\al_A^{\l+1}}$, since $\al_A^{\l+1}$ is the next execution of $A$ along \seqA.
Also, from $\pi'' \ppj B = \mab(\al'' \ppj A)$ and definition of \mab, it follows that $\pi'' \ppj B = \prBseq$.

Now define $D_{j+1}$ as follows.
$\icaut{D_{j+1}} = \icaut{C_{i+1}}[B/A]$, and 
for all $A' \in \icaut{C_{i+1}} - \set{A}: \cmap{D_{j+1}}{A'} = \cmap{C_{i+1}}{A'}$, and 
$\cmap{D_{j+1}}{B} = t$ where $t = \first{\pi_B^{\l+1}}$.
It follows from (AB) that $t \in \start{B}$ and $\sext{B}{t} = \sext{A}{s}$.
Hence by Definition~\ref{def:RAB}, $C_{i+1} \simAB D_{j+1}$.

From (a), we have $C_{i} \ctrans{a^{i+1}}{\vp} C_{i+1}$.  Then we have $D_j \ctrans{a^{i+1}}{\psi} D_{j+1}$, by Definition~\ref{def:config-trans},
$\psi = \vp[B/A]$, $A \in \vp$, and construction of $D_{j+1}$.
By $a^{i+1} \in \csigacts{Y}{y^j}$, $\psi = \ccreated{Y}{y^j}{a^{i+1}}$, 
$D_j \ctrans{a^{i+1}}{\psi} D_{j+1}$, and Definition~\ref{def:CA}, we have 
\bleqn{}
   $\ex y^{j+1}: y^j \llas{a^{i+1}}{Y} y^{j+1}$ and $D_{j+1} = \config{Y}{y^{j+1}}$.
\eleqn
Now let $\pi' = \pi'' \cat (y^{j}  a^{i+1} y^{j+1})$. We now establish 
$\al' R_{AB} \pi'$, $\pi' \spref \Bseq$, and $\mab(\al' \ppj A) = \pi' \ppj B$.

\phdr{Proof of $\al' R_{AB} \pi'$} extend the mapping $m$ by setting $m(i+1) = j+1$.
We deal with each clause of Definition~\ref{def:RAB} in turn.

Clause \ref{RAB:indexmap:initial}: holds since $m(0)=0$ remains true.

Clause \ref{RAB:indexmap:cofinal}: holds since $|\pi'|=j+1$.

Clause \ref{RAB:indexmap:traceX}:
from above, $\traceA{\execfrag{\al}{i}{i+1}}{X} =  \cext{X}{x^i} \cat a^{i+1} \cat \cext{X}{x^{i+1}}$ and\lbr
$\traceA{\execfrag{\pi}{m(i)}{m(i+1)}}{Y} = \cext{Y}{y^{m(i)}} \cat a^{i+1} \cat \cext{Y}{y^{m(i+1)}} = \cext{Y}{y^j} \cat a^{i+1} \cat \cext{Y}{y^{j+1}}$.
By (d), $\cext{X}{x^{i}} = \cext{Y}{y^{j}}$.
Also, $\cext{X}{x^{i+1}} = \icext{C_{i+1}} = \icext{D_{j+1}} = \cext{Y}{y^{j+1}}$, since $C_{i+1} \simAB D_{j+1}$.
Hence $\traceA{\execfrag{\al}{i}{i+1}}{X} = \traceA{\execfrag{\pi}{m(i)}{m(i+1)}}{Y}$. This and the induction hypothesis establishes 
Clause~\ref{RAB:indexmap:traceX}.

Clause \ref{RAB:indexmap:traceA}:
$\traceA{(\execfrag{\al}{i}{i+1}) \ppj A}{A} = \sext{A}{s}$, and 
$\traceA{(\execfrag{\pi}{j}{j+1}) \ppj B}{B} = \sext{B}{t}$, by Definition~\ref{def:ppj}.
By choice of $t$, $\sext{A}{s} =\sext{B}{t}$, and so 
$\traceA{(\execfrag{\al}{i}{i+1}) \ppj A}{A} = \traceA{(\execfrag{\pi}{j}{j+1}) \ppj B}{B}$.
Clause \ref{RAB:indexmap:traceA} follows from this and the induction hypothesis.

Clause \ref{RAB:indexmap:config}: we have, from above,
$C_{i+1} \simAB D_{j+1}$.
Since $C_{i+1} = \config{X}{x^{i+1}}$, 
$D_{j+1} = \config{Y}{y^{j+1}}$, we have
$\config{X}{x^{i+1}} \simAB \config{Y}{y^{j+1}}$. 
Since $m(i+1)=j+1$, we have
$\config{X}{x^{i+1}} \simAB \config{Y}{y^{m(i+1)}}$. 
Clause \ref{RAB:indexmap:config} follows from this and the induction hypothesis.

\phdr{Proof of $\pi' \ppj B \spref \Bseq$}
we showed above that $\pi'' \ppj B = \prBseq$, where $\l < k$.
By Definition of $\ppj$, $\pi' \ppj B = \pi'' \ppj B \dlm t$, where $t = \first{\pi_B^{\l+1}}$.
Hence $\pi' \ppj B \spref \Bseq$ by Definition~\ref{def:spref}.

\phdr{Proof of $\mab(\al' \ppj A) = \pi' \ppj B$}
by construction, 
$\al' \ppj A = \al'' \ppj A \dlm s$ and 
$\pi' \ppj B = \pi'' \ppj B \dlm t$. 
By the induction hypothesis, $\mab(\al'' \ppj A) = \pi'' \ppj B$. 
We showed above that $\sext{A}{s} =\sext{B}{t}$. 
It follows, from Definition of \mab, that  $\mab(\al' \ppj A) = \pi' \ppj B$.


\case{3}{$A \in \icaut{C_i}$, $A \in \icaut{C_{i+1}}$, and $a^{i+1} \not\in \ssigacts{A}{s}$, where $s = \cmap{C_i}{A}$}

By (e), $C_i \simAB D_j$.  Hence $B \in \icaut{D_j}$.
From (a), we have $C_{i} \ctrans{a^{i+1}}{\vp} C_{i+1}$, where $\vp =  \ccreated{X}{x^i}{a^{i+1}}$.
By (b), $a^{i+1} \in \icsigacts{C_{i}}$.
Let $t = \cmap{D_j}{B}$. Then $\sext{A}{s} = \sext{B}{t}$, since $C_i \simAB D_j$.
By the case assumption, $a^{i+1} \not\in \ssigacts{A}{s}$, and so $a^{i+1} \not\in \sextacts{A}{s}$. 
Hence $a^{i+1} \not\in \sextacts{B}{t}$, since $\sext{A}{s} = \sext{B}{t}$.

Now assume $a^{i+1} \in \sint{B}{t}$. By signature compatibility, $a^{i+1}$ is not an action of the
current signature of any SIOA $A'$ in $\icaut{D_j}$ other than $B$.  We have
$B \not\in \icaut{C_i}$, since we assume that $X$ never creates $B$.  So by $C_i \simAB D_j$ and
$a^{i+1} \nin \ssigacts{A}{s}$, we
conclude that $a^{i+1} \not\in \icsigacts{C_{i}}$, since $C_i$, $D_j$ contain the same SIOA in the same
states, apart from $A$, $B$. This contradicts $a^{i+1} \in \icsigacts{C_{i}}$ established above. Hence 
our assumption is false, \ie $a^{i+1} \not\in \sint{B}{t}$.
From this and $a^{i+1} \not\in \sextacts{B}{t}$, we infer 
$a^{i+1} \not\in \ssigacts{B}{t}$.

Now define $D_{j+1}$ as follows.  $\icaut{D_{j+1}} = \icaut{C_{i+1}} [B/A]$, for all
$A' \in \icaut{C_{i+1}} - \set{A}: \cmap{D_{j+1}}{A'} = \cmap{C_{i+1}}{A'}$, and
$\cmap{D_{j+1}}{B} = \cmap{D_{j}}{B} = t$. That is, $D_{j+1}$ consists of the same SIOA as
$C_{i+1}$, except that $A$ is replaced by $B$.  SIOA other than $A, B$ have the same state in
$D_{j+1}$ as in $C_{i+1}$. $B$ has the same state in $D_{j+1}$ as in $D_{j}$.  Hence
$C_{i+1} \simAB D_{j+1}$, by Definitions~\ref{def:config-trans} and \ref{def:simAB}.
%
%

By (b), $a^{i+1} \in \icsigacts{C_i}$. Since $a^{i+1} \not\in \ssigacts{A}{s}$, it follows that $a^{i+1}$ is in the signature of some SIOA $A'$ of
$C_i$. By $C_i \simAB D_j$, $A'$ is also an SIOA of $D_j$, and has the same state in $D_j$ as in $C_i$, \ie $\cmap{D_{j}}{A'} =
\cmap{C_{i}}{A'}$.
Hence $a^{i+1} \in \icsigacts{D_j}$ by Definition~\ref{def:intrinsic-signature}. Hence $a^{i+1} \in
\csigacts{Y}{y^j}$ by $D_j = \config{Y}{y^j}$ and Definition~\ref{def:CA}.  
So $\ccreated{Y}{y^j}{a^{i+1}} = \ccreated{X}{x^i}{a^{i+1}}[B/A]$ by (f).
So letting $\psi = \ccreated{Y}{y^j}{a^{i+1}}$ and $\vp = \ccreated{X}{x^i}{a^{i+1}}$,
we have $\psi = \vp[B/A]$.

Since $A \in \icaut{C_i}$ and $B \in \icaut{D_j}$, the presence of $A$ in $\vp$, $B$ in $\psi$,
makes no difference to the execution of transitions from $C_i$, $D_j$, respectively,
by Definition~\ref{def:config-trans}, since $A$, $B$ are already alive.
Now $C_i \simAB D_j$, $C_{i+1} \simAB D_{j+1}$, and $C_{i} \ctrans{a^{i+1}}{\vp} C_{i+1}$.
Hence $D_{j} \ctrans{a^{i+1}}{\psi} D_{j+1}$,  by these, $\psi = \vp[B/A]$,
 and Definition~\ref{def:config-trans}, since $A, B$ do not participate in the execution of $a^{i+1}$.

By $a^{i+1} \in \csigacts{Y}{y^j}$, $\psi = \ccreated{Y}{y^j}{a^{i+1}}$, 
$D_j \ctrans{a^{i+1}}{\psi} D_{j+1}$, and Definition~\ref{def:CA}, we have 
\bleqn{}
   $\ex y^{j+1}: y^j \llas{a^{i+1}}{Y} y^{j+1}$ and $D_{j+1} = \config{Y}{y^{j+1}}$.
\eleqn
Now let $\pi' = \pi'' \cat (y^{j}  a^{i+1} y^{j+1})$. We now establish 
$\al' R_{AB} \pi'$,  $\pi' \ppj B \spref \Bseq$, and $\mab(\al' \ppj A) = \pi' \ppj B$.

\phdr{Proof of $\al' R_{AB} \pi'$} extend the mapping $m$ by setting $m(i+1) = j+1$.
We deal with each clause of Definition~\ref{def:RAB} in turn.

Clause \ref{RAB:indexmap:initial}: holds since $m(0)=0$ remains true.

Clause \ref{RAB:indexmap:cofinal}: holds since $|\pi'|=j+1$.
 
Clause \ref{RAB:indexmap:traceX}: 
from above, 
$\traceA{\execfrag{\al}{i}{i+1}}{X} =  
\cext{X}{x^i} \cat a^{i+1} \cat \cext{X}{x^{i+1}}$
and\lbr
$\traceA{\execfrag{\pi}{m(i)}{m(i+1)}}{Y} = 
\cext{Y}{y^{m(i)}} \cat a^{i+1} \cat \cext{Y}{y^{m(i+1)}}  =
\cext{Y}{y^j} \cat a^{i+1} \cat \cext{Y}{y^{j+1}}$.
By (d),  $\cext{X}{x^i} = \cext{Y}{y^j}$.
Now $\config{X}{x^{i+1}} = C_{i+1}$, $\config{Y}{y^{j+1}} = D_{j+1}$.  
Also $C_{i+1} \simAB D_{j+1}$, and so $\icext{C_{i+1}} = \icext{D_{j+1}}$.  
Hence $\cext{X}{x^{i+1}} = \icext{C_{i+1}} = \icext{D_{j+1}} = \cext{Y}{y^{j+1}}$.
We finally obtain 
$\cext{X}{x^i} \cat a^{i+1} \cat \cext{X}{x^{i+1}} = \cext{Y}{y^j} \cat a^{i+1} \cat \cext{Y}{y^{j+1}}$. 
Hence $\traceA{\execfrag{\pi}{m(i)}{m(i+1)}}{Y} = \traceA{\execfrag{\al}{i}{i+1}}{X}$. 
Together with the induction hypothesis, this establishes Clause \ref{RAB:indexmap:traceX}.

Clause \ref{RAB:indexmap:traceA}:
from above, $\traceA{(\execfrag{\al}{i}{i+1}) \ppj A}{A} = \sext{A}{s}$, and 
$\traceA{(\execfrag{\pi}{j}{j+1}) \ppj B}{B} = \sext{B}{t}$.
By choice of $t$, $\sext{A}{s} =\sext{B}{t}$, and so 
$\traceA{(\execfrag{\al}{i}{i+1}) \ppj A}{A} = \traceA{(\execfrag{\pi}{j}{j+1}) \ppj B}{B}$.
Clause \ref{RAB:indexmap:traceA} follows from this and the induction hypothesis.

Clause \ref{RAB:indexmap:config}: 
from above, $C_{i+1} \simAB D_{j+1}$.
Since $C_{i+1} = \config{X}{x^{i+1}}$, 
$D_{j+1} = \config{Y}{y^{j+1}}$, we have
$\config{X}{x^{i+1}} \simAB \config{Y}{y^{j+1}}$. 
Since $m(i+1)=j+1$, we have
$\config{X}{x^{i+1}} \simAB \config{Y}{y^{m(i+1)}}$. 
Clause \ref{RAB:indexmap:config} follows from this and the induction hypothesis.

\phdr{Proof of $\pi' \ppj B \spref \Bseq$}
$a^{i+1} \not\in \ssigacts{B}{t}$ was shown above, and so we have $\pi' \ppj B = \pi'' \ppj B$ by Definition~\ref{def:ppj}.
Now $\pi'' \ppj B \spref \Bseq$ by the induction hypothesis, and so we are done.

\phdr{Proof of $\mab(\al' \ppj A) = \pi' \ppj B$}
$a^{i+1} \not\in \ssigacts{A}{s}$ by assumption, and so we have $\al' \ppj A = \al'' \ppj A$ by Definition~\ref{def:ppj}.
Since $a^{i+1} \not\in \ssigacts{B}{t}$, we have $\pi' \ppj B = \pi'' \ppj B$ by Definition~\ref{def:ppj}.
By the induction hypothesis, $\mab(\al'' \ppj A) = \pi'' \ppj B$,  and so we are done.


\case{4}{$A \in \icaut{C_i}$, $A \in \icaut{C_{i+1}}$, and $a^{i+1} \in \sextacts{A}{s}$, where $s = \cmap{C_i}{A}$}

By (e), $C_i \simAB D_j$.  Hence $B \in \icaut{D_j}$. 
Also, by Proposition~\ref{prop:simAB-implies-eqaulExt}, $\icext{C_i} = \icext{D_j}$. 
By $a^{i+1} \in \sextacts{A}{s}$, $A \in \icaut{C_i}$, and Definition~\ref{def:intrinsic-signature}, 
$a^{i+1} \in \icextacts{C_i}$. Hence $a^{i+1} \in \icextacts{D_j}$ since $\icext{C_i} = \icext{D_j}$. 
Hence $a^{i+1} \in \csigacts{Y}{y^j}$ by Definition~\ref{def:CA}, since $D_j = \config{Y}{y^j}$.
Hence $\ccreated{Y}{y^j}{a^{i+1}} = \ccreated{X}{x^i}{a^{i+1}}[B/A]$ by (f).
So letting $\psi = \ccreated{Y}{y^j}{a^{i+1}}$ and $\vp = \ccreated{X}{x^i}{a^{i+1}}$, we have $\psi = \vp[B/A]$.

Let $s' = \cmap{C_{i+1}}{A}$. Hence
$\al' \ppj A = \al'' \ppj A \cat (s,a^{i+1},s')$ by Definition~\ref{def:ppj}, and so
$\al'' \ppj A \sprefp \al' \ppj A$. Also $\al' \le \al$, and so
$\al'' \ppj A \sprefp \al' \ppj A \spref \al \ppj A = \seqA$.  Hence $\al'' \ppj A = \prSeqAptl$ for
some $\l < k$, where $\lastExecA < \al_A^{\l+1}$. Note that 
$\lastExecA \le \al_A^{\l+1}$ by construction, and that 
$\lastExecA \ne \al_A^{\l+1}$, since $\lastExecA$ cannot be a terminating execution of 
$A$, as  $A \in \icaut{C_i}$, and so $A$ is still alive at the end of $\al''$.

From $\pi'' \ppj B = \mab(\al'' \ppj A)$ and definition of \mab, it follows that
$\pi'' \ppj B = \prBseqPtl$, where $\traceA{\lastExecA}{A} = \traceA{\BLastExec}{B}$, and
$\BLastExec \le \pi_B^{\l+1}$.  Recall that, by (AB), we have
$\traceA{\al_A^{\l+1}}{A} = \traceA{\pi_B^{\l+1}}{B}$.  By definition of \mab, we have
$\BLastExec < \pi_B^{\l+1}$, since $\lastExecA < \al_A^{\l+1}$.

Let $t = \cmap{D_j}{B}$. Then $\sext{A}{s} = \sext{B}{t}$ since $C_i \simAB D_j$.
Now let $\delta_B$ be the unique execution fragment of $B$ such that 
$\BLastExec \cat \delta_B \le \pi_B^{\l+1}$ 
(\ie $\delta_B$ extends $\BLastExec$ along $\pi_B^{\l+1}$) and 
$\pi'' \ppj B \cat \delta_B = \mab(\al' \ppj A)$
(\ie $\delta_B$ is the unique extension that corresponds to the image of 
$\al' \ppj A$ under \mab---see definition of \mab).
It follows, from the definition of \mab, that 
$\first{\delta_B} = t$ and that $\delta_B = \delta_B^{int} \cat (a^{i+1}, t')$, where 
$\delta_B^{int}$ consists entirely of internal actions that do not change the external signature of
$B$, and so $\traceA{\delta_B^{int}}{B} = \sext{B}{t}$.
Also, $t'$ is such that $\sext{A}{s'} = \sext{B}{t'}$, by (AB).

Now extend $\pi''$ by executing the actions along $\delta_B^{int}$, starting from $\last{\pi''}$.
 Let $y'$ be the last state of the resulting execution. In $y'$,
$a^{i+1}$ can be executed by $Y$. This is because, at this point, $B$ can execute $a^{i+1}$, since
$\delta_B^{int} \cat (a^{i+1}, t')$ is an execution fragment of $B$. If $a^{i+1}$ has any other participant
SIOA, then these have the same state in $y'$ as they do in $C_i$, since $C_i \simAB D_j$. So
$a^{i+1}$ can be executed from $y'$.  Let the resulting execution, including $a^{i+1}$, be $\pi'$.
%
Let $\last{\pi'} = y^{j'}$, where $j' = j + |\delta_B^{int}| + 1$.  Let
$D_{j'} = \config{Y}{y^{j'}}$. Hence, by construction of $\pi'$, $\cmap{D_{j'}}{B} = t'$.
We now show that $C_{i+1} \simAB D_{j'}$.
Let $A' \in \icaut{C_i} - \set{A}$. 
Then $A' \in \icaut{D_j}$, and $\cmap{C_i}{A'} = \cmap{D_j}{A'}$, since $C_i \simAB D_j$.
Also, in transitioning from $C_i$ to $C_{i+1}$, each $A'$ either does nothing, and so remains in the
same state, or it participates in the execution of $a^{i+1}$, possibly destroying itself as a
result.
Likewise, in transitioning from $D_j$ to $D_{j'}$, each $A'$ either does nothing, and so remains in the
same state, or it participates in the execution of $a^{i+1}$, since $\delta_B^{int}$ consists entirely of
internal actions of $B$, and no $A' \in \icaut{C_i} - \set{A}$ can be $B$, by construction. 
Hence, the local transitions of the $A'$ (when executing $a^{i+1}$) can be chosen to be the same
in $Y$ as in $X$, and so the same $A'$ destroy themselves in $Y$ as in $X$, and 
the surviving $A'$ have the same final states in $Y$ as in $X$.
Also, $\delta_B^{int}$ creates no new SIOA, by Assumption~\ref{lem:finite-creation-subst:create-AB},
since its actions are all internal actions of $B$.
We have $\psi = \vp[B/A]$ from above. Hence the same SIOA are created by the transitions
$(x^i, a^{i+1}, x^{i+1})$ and $(y', a^{i+1}, y^{j'})$, since $A$, $B$ are present in the
configurations of $x^i$, $y'$, respectively, and executing the actions along $\delta_B^{int}$ does
not change the trace, so that $\psi$ is still the set of SIOA created by $a^{i+1}$,
according to Definition~\ref{def:creation-corresponding}.
Therefore we can choose
$(y', a^{i+1}, y^{j'})$ so that it creates these new SIOA in the same start states that
$(x^i, a^{i+1}, x^{i+1})$ does.
We conclude that (except for $A$, $B$) $C_{i+1}$ and $D_{j'}$ end up with the same SIOA in the same
states, \ie $\icaut{D_{j'}} = \icaut{C_{i+1}}[B/A]$ and 
for all $A' \in \icaut{C_{i+1}} - \set{A}: \cmap{C_{i+1}}{A'} = \cmap{D_{j'}}{A'}$.
Finally, $\cmap{C_{i+1}}{A} = s'$, $\cmap{D_{j'}}{B} = t'$, and $\sext{A}{s'} = \sext{B}{t'}$ from above. 
Hence the conditions of 
Definition~\ref{def:simAB} all hold, and so $C_{i+1} \simAB D_{j'}$.

We now establish 
$\al' R_{AB} \pi'$,  $\pi' \ppj B \spref \Bseq$, and $\mab(\al' \ppj A) = \pi' \ppj B$.

\phdr{Proof of $\al' R_{AB} \pi'$} extend the mapping $m$ by setting $m(i+1) = j'$. 
We deal with each clause of Definition~\ref{def:RAB} in turn.

Clause \ref{RAB:indexmap:initial}: holds since $m(0)=0$ remains true.

Clause \ref{RAB:indexmap:cofinal}: holds since $|\pi'|=j'$.

Clause \ref{RAB:indexmap:traceX}: 
from above, $\traceA{\execfrag{\pi}{m(i)}{m(i+1)}}{Y} = \cext{Y}{y^j} \cat a^{i+1} \cat \cext{Y}{y^{j'}}$, since $\delta_B^{int}$ is an execution fragment consisting
entirely of internal actions of $B$ which do not change the external signature of $B$. 
Also, $\traceA{\execfrag{\al}{i}{i+1}}{X} = \cext{X}{x^i} \cat a^{i+1} \cat \cext{X}{x^{i+1}}$.
%
%
By (d),  $\cext{X}{x^i} = \cext{Y}{y^j}$.
Now $\config{X}{x^{i+1}} = C_{i+1}$, $\config{Y}{y^{j'}} = D_{j'}$.  
Also, $C_{i+1} \simAB D_{j'}$, and so
$\icext{C_{i+1}} = \icext{D_{j'}}$.  Hence $\cext{X}{x^{i+1}} = \icext{C_{i+1}} = \icext{D_{j'}} = \cext{Y}{y^{j'}}$.
We finally obtain 
$\cext{X}{x^i} \cat a^{i+1} \cat \cext{X}{x^{i+1}} = \cext{Y}{y^j} \cat a^{i+1} \cat \cext{Y}{y^{j'}}$. 
Hence $\traceA{\execfrag{\pi}{m(i)}{m(i+1)}}{Y} = \traceA{\execfrag{\al}{i}{i+1}}{X}$. 
Together with the induction hypothesis, this establishes Clause \ref{RAB:indexmap:traceX}.

Clause \ref{RAB:indexmap:traceA}:
$(\execfrag{\al}{i}{i+1}) \ppj A = s, a^{i+1}, s'$, so 
$\traceA{(\execfrag{\al}{i}{i+1}) \ppj A}{A} = \sext{A}{s} \cat a^{i+1} \cat \sext{A}{s'}$.
$(\execfrag{\pi}{j}{j+1}) \ppj B = \delta_B = \delta_B^{int} \cat (a^{i+1}, t')$, so
$\traceA{(\execfrag{\pi}{j}{j+1}) \ppj B}{B} = 
  \traceA{\delta_B^{int}}{B} \cat a^{i+1} \cat \sext{B}{t'} =  \sext{B}{t} \cat a^{i+1} \cat \sext{B}{t'}$ 
since $\traceA{\delta_B^{int}}{B} = \sext{B}{t}$.
From above, $\sext{A}{s} =\sext{B}{t}$ and $\sext{A}{s'} =\sext{B}{t'}$.
Hence $\traceA{(\execfrag{\al}{i}{i+1}) \ppj A}{A} = \traceA{(\execfrag{\pi}{j}{j+1}) \ppj B}{B}$.
Clause \ref{RAB:indexmap:traceA} follows from this and the induction hypothesis.

Clause \ref{RAB:indexmap:config}: we have, from above,
$C_{i+1} \simAB D_{j'}$.
Since $C_{i+1} = \config{X}{x^{i+1}}$, 
$D_{j'} = \config{Y}{y^{j'}}$, we have
$\config{X}{x^{i+1}} \simAB \config{Y}{y^{j'}}$. 
Since $m(i+1)=j'$, we have
$\config{X}{x^{i+1}} \simAB \config{Y}{y^{m(i+1)}}$. 
Clause \ref{RAB:indexmap:config} follows from this and the induction hypothesis.

\phdr{Proof of $\pi' \ppj B \spref \Bseq$} from above, $\pi'$ results by extending $\pi''$ with the actions along $\delta_B^{int}$, followed by the
transition $(y', a^{i+1}, y^{j'})$. Hence $\pi' \ppj B = \pi'' \ppj B \cat \delta_B$, since $\delta_B = \delta_B^{int} \cat (a^{i+1}, t')$.
Also, $\pi'' \ppj B = \prBseqPtl$, so $\pi' \ppj B = \prBseqPtl \cat \delta_B$.  We also have $\BLastExec \cat \delta_B \le \pi_B^{\l+1}$ by our choice
of $\delta_B$.  Hence $\pi' \ppj B \spref \pi_B^1 \dlm \cdots \dlm \pi_B^{\l+1}$, and so $\pi' \ppj B \spref \Bseq$.

\phdr{Proof of $\mab(\al' \ppj A) = \pi' \ppj B$} from immediately above, $\pi' \ppj B = \pi'' \ppj B \cat \delta_B$.  Also from
above, $\pi'' \ppj B \cat \delta_B = \mab(\al' \ppj A)$, by our choice of $\delta_B$. Hence $\pi' \ppj B = \pi'' \ppj B \cat \delta_B = \mab(\al' \ppj A)$.


\case{5}{$A \in \icaut{C_i}$, $A \in \icaut{C_{i+1}}$, and $a^{i+1} \in \sint{A}{s}$, where $s = \cmap{C_i}{A}$}

Let $s' = \cmap{C_{i+1}}{A}$. Hence
$\al' \ppj A = \al'' \ppj A \cat (s,a^{i+1},s')$ by Definition~\ref{def:ppj}, and so
$\al'' \ppj A \sprefp \al' \ppj A$. Also $\al' \le \al$, and so
$\al'' \ppj A \sprefp \al' \ppj A \spref \al \ppj A = \seqA$.  Hence $\al'' \ppj A = \prSeqAptl$ for
some $\l < k$, where $\lastExecA \le \al_A^{\l+1}$.
Note that $\lastExecA \ne \al_A^{\l+1}$, since $\lastExecA$ cannot be a terminating execution of 
$A$, as  $A \in \icaut{C_i}$, and so $A$ is still alive at the end of $\al''$.
Hence $\lastExecA < \al_A^{\l+1}$.

From $\pi'' \ppj B = \mab(\al'' \ppj A)$ and definition of \mab, it follows that
$\pi'' \ppj B = \prBseqPtl$, where $\traceA{\lastExecA}{A} = \traceA{\BLastExec}{B}$, and
$\BLastExec \le \pi_B^{\l+1}$.  Recall that, by (AB), we have
$\traceA{\al_A^{\l+1}}{A} = \traceA{\pi_B^{\l+1}}{B}$.  By definition of \mab, we have
$\BLastExec < \pi_B^{\l+1}$, since $\lastExecA < \al_A^{\l+1}$.

By (e), $C_i \simAB D_j$.  Hence $B \in \icaut{D_j}$. 
Let $t = \cmap{D_j}{B}$. Then $\sext{A}{s} = \sext{B}{t}$ since $C_i \simAB D_j$.
Now let $\delta_B$ be the unique execution fragment of $B$ such that 
$\BLastExec \cat \delta_B \le \pi_B^{\l+1}$ 
(\ie $\delta_B$ extends $\BLastExec$ along $\pi_B^{\l+1}$) and 
$\pi'' \ppj B \cat \delta_B = \mab(\al' \ppj A)$
(\ie $\delta_B$ is the unique extension that corresponds to the image of 
$\al' \ppj A$ under \mab---see definition of \mab).
It follows, from the definition of \mab, that 
$\first{\delta_B} = t$ and that 
$\delta_B$ consists entirely of internal actions of $B$, and that 
$\traceA{\delta_B}{B} = \traceA{(s,a^{i+1},s')}{A}$.
Let $t' = \last{\delta_B}$. Then it also follows by (AB) that $\sext{A}{s'} = \sext{B}{t'}$. 

Now extend $\pi''$ by executing the actions along $\delta_B$, starting from $\last{\pi''}$. 
Let the resulting execution be $\pi'$.     
Let $\last{\pi'} = y^{j'}$ where $j' = j + |\delta_B|$.  Let
$D_{j'} = \config{Y}{y^{j'}}$. Hence, by construction of $\pi'$, $\cmap{D_{j'}}{B} = t'$.
We now show that $C_{i+1} \simAB D_{j'}$.
Let $A' \in \icaut{C_i} - \set{A}$. Then $A' \in \icaut{D_j}$, since $C_i \simAB D_j$.
Also, in transitioning from $C_i$ to $C_{i+1}$, each $A'$ does nothing, and so remains in the
same state, since $a^{i+1}$ is an internal action of $A$.
Likewise, in transitioning from $D_j$ to $D_{j'}$, each $A'$ does nothing, and so remains in the
same state, since $\delta_B$ consists entirely of internal actions of $B$.
Hence, the $A'$ have the same final states in $Y$ as in $X$,
By Assumption~\ref{lem:finite-creation-subst:create-AB}, no new SIOA are created by executing
$a^{i+1}$ in $X$, nor by executing $\delta_B$ in $Y$, since $a^{i+1}$ is an internal action of $A$, and 
$\delta_B$ consists entirely of internal actions of $B$.
We conclude that (except for $A$, $B$) $C_{i+1}$ and $D_{j'}$ end up with the same SIOA in the same
states, \ie $\icaut{D_{j'}} = \icaut{C_{i+1}}[B/A]$ and 
for all $A' \in \icaut{C_{i+1}} - \set{A}: \cmap{C_{i+1}}{A'} = \cmap{D_{j'}}{A'}$.
Finally, $\cmap{C_{i+1}}{A} = s'$, $\cmap{D_{j'}}{B} = t'$, and $\sext{A}{s'} = \sext{B}{t'}$ from above.
Hence the conditions of 
Definition~\ref{def:simAB} all hold, and so $C_{i+1} \simAB D_{j'}$.

We now establish 
$\al' R_{AB} \pi'$,  $\pi' \ppj B \spref \Bseq$, and $\mab(\al' \ppj A) = \pi' \ppj B$.

\phdr{Proof of $\al' R_{AB} \pi'$} extend the mapping $m$ by setting $m(i+1) = j'$.
We deal with each clause of Definition~\ref{def:RAB} in turn.

Clause \ref{RAB:indexmap:initial}: holds since $m(0)=0$ remains true.

Clause \ref{RAB:indexmap:cofinal}: holds since $|\pi'|=j'$.

Clause \ref{RAB:indexmap:traceX}: 
$\traceA{\execfrag{\pi}{m(i)}{m(i+1)}}{Y} = r(\cext{Y}{y^j} \cat \cext{Y}{y^{j'}})$, where $r$ is given by 
Definition~\ref{def:reduction-pretrace-to-trace}.
This is because $\delta_B$ is an execution fragment consisting
entirely of internal actions of $B$, and which is trace equal to $(s, a^{i+1}, s')$.
Hence $\delta_B$ can be partitioned into two parts, each of which has the same external signature along all its states.
Also $\traceA{\execfrag{\al}{i}{i+1}}{X} = r(\cext{X}{x^i} \cat \cext{X}{x^{i+1}})$.
By (d),  $\cext{X}{x^i} = \cext{Y}{y^j}$.
Now $\config{X}{x^{i+1}} = C_{i+1}$, $\config{Y}{y^{j'}} = D_{j'}$.  
Also, $C_{i+1} \simAB D_{j'}$, and so
$\icext{C_{i+1}} = \icext{D_{j'}}$.  Hence $\cext{X}{x^{i+1}} = \icext{C_{i+1}} = \icext{D_{j'}} = \cext{Y}{y^{j'}}$.
%
%
We finally obtain 
$\cext{X}{x^i} \cat \cext{X}{x^{i+1}} = \cext{Y}{y^j} \cat \cext{Y}{y^{j'}}$. 
Hence $\traceA{\execfrag{\pi}{m(i)}{m(i+1)}}{Y} = \traceA{\execfrag{\al}{i}{i+1}}{X}$. 
Together with the induction hypothesis, this establishes Clause \ref{RAB:indexmap:traceX}.

Clause \ref{RAB:indexmap:traceA}:
from above, $(\execfrag{\al}{i}{i+1}) \ppj A = s, a^{i+1}, s'$ and   $(\execfrag{\pi}{j}{j+1}) \ppj B = \delta_B$.
Also from above, $\traceA{\delta_B}{B} = \traceA{(s,a^{i+1},s')}{A}$.
%
%
Hence $\traceA{(\execfrag{\al}{i}{i+1}) \ppj A}{A} = \traceA{(\execfrag{\pi}{j}{j+1}) \ppj B}{B}$.
Clause \ref{RAB:indexmap:traceA} follows from this and the induction hypothesis.

Clause \ref{RAB:indexmap:config}: we have, from above,
$C_{i+1} \simAB D_{j'}$.
Since $C_{i+1} = \config{X}{x^{i+1}}$, 
$D_{j'} = \config{Y}{y^{j'}}$, we have
$\config{X}{x^{i+1}} \simAB \config{Y}{y^{j'}}$. 
Since $m(i+1)=j'$, we have
$\config{X}{x^{i+1}} \simAB \config{Y}{y^{m(i+1)}}$. 
Clause \ref{RAB:indexmap:config} follows from this and the induction hypothesis.

\phdr{Proof of $\pi' \ppj B \spref \Bseq$}
from above, $\pi'$ results by extending $\pi''$ with the actions along $\delta_B$.
Hence $\pi' \ppj B = \pi'' \ppj B \cat \delta_B$, since $\delta_B$ consists entirely of internal actions of $B$.
Also, $\pi'' \ppj B = \prBseqPtl$.
Hence $\pi' \ppj B = \prBseqPtl \cat \delta_B$.
We also have $\BLastExec \cat \delta_B \le \pi_B^{\l+1}$ by our choice of $\delta_B$.
Hence $\pi' \ppj B \spref \pi_B^1 \dlm \cdots \dlm \pi_B^{\l+1}$, and so 
$\pi' \ppj B \spref \Bseq$.

\phdr{Proof of $\mab(\al' \ppj A) = \pi' \ppj B$} from immediately above,
$\pi' \ppj B = \pi'' \ppj B \cat \delta_B$.  Also from above,
$\pi'' \ppj B \cat \delta_B = \mab(\al' \ppj A)$, by our choice of $\delta_B$.  Hence
$\pi' \ppj B = \pi'' \ppj B \cat \delta_B = \mab(\al' \ppj A)$.



\case{6}{$A \in \icaut{C_i}$, $A \not\in \icaut{C_{i+1}}$, and 
             $a^{i+1} \not\in \ssigacts{A}{\cmap{C_i}{A}}$}

Since $A \in \icaut{C_i}$ and $A \not\in \icaut{C_{i+1}}$, 
then in the execution of $a^{i+1}$, $A$ must set its signature to empty.
Hence $A$ must be a participant of $a^{i+1}$, so that $a^{i+1} \in \ssigacts{A}{\cmap{C_i}{A}}$.
Hence this case is not possible.


\case{7}{$A \in \icaut{C_i}$, $A \not\in \icaut{C_{i+1}}$, and 
             $a^{i+1} \in \sextacts{A}{s}$, where $s = \cmap{C_i}{A}$}

By (e), $C_i \simAB D_j$.  Hence $B \in \icaut{D_j}$. 
Also, by Proposition~\ref{prop:simAB-implies-eqaulExt}, $\icext{C_i} = \icext{D_j}$. 
By $a^{i+1} \in \sextacts{A}{s}$, $A \in \icaut{C_i}$, and Definition~\ref{def:intrinsic-signature}, 
$a^{i+1} \in \icextacts{C_i}$. Hence $a^{i+1} \in \icextacts{D_j}$ since $\icext{C_i} = \icext{D_j}$. 
Hence $a^{i+1} \in \csigacts{Y}{y^j}$ by Definition~\ref{def:CA}, since $D_j = \config{Y}{y^j}$. 
Hence $\ccreated{Y}{y^j}{a^{i+1}} = \ccreated{X}{x^i}{a^{i+1}}[B/A]$ by (f).
So letting $\psi = \ccreated{Y}{y^j}{a^{i+1}}$ and $\vp = \ccreated{X}{x^i}{a^{i+1}}$, we have $\psi = \vp[B/A]$.

Now $\al' \ppj A = \al'' \ppj A \cat (s, a^{i+1})$ by Definition~\ref{def:ppj}.
Also $\al' \le \al$, and so
$\al'' \ppj A \spref \al' \ppj A \spref \al \ppj A = \seqA$.  
Hence $\al'' \ppj A = \al_A^1 \dlm \cdots \dlm \lastExecA$ where 
$\lastExecA  \cat (s, a^{i+1}) = \al_A^{\l+1}$ for
some $\l < k$, since $A$ is destroyed by the execution of $a^{i+1}$, and so the last execution 
in $\al' \ppj A$ must be a terminating execution.

From $\pi'' \ppj B = \mab(\al'' \ppj A)$ and definition of \mab, it follows that
$\pi'' \ppj B = \pi_B^1 \dlm \cdots \dlm \pi_B^\l \dlm \BLastExec$, where $\traceA{\lastExecA}{A} = \traceA{\BLastExec}{B}$, and
$\BLastExec \le \pi_B^{\l+1}$.  Recall that, by (AB), we have
$\traceA{\al_A^{\l+1}}{A} = \traceA{\pi_B^{\l+1}}{B}$.  

Let $t = \cmap{D_j}{B}$. Then $\sext{A}{s} = \sext{B}{t}$ since $C_i \simAB D_j$.
Now let $\delta_B$ be the unique execution fragment of $B$ such that 
$\BLastExec \cat \delta_B \le \pi_B^{\l+1}$ 
(\ie $\delta_B$ extends $\BLastExec$ along $\pi_B^{\l+1}$) and 
$\pi'' \ppj B \cat \delta_B = \mab(\al' \ppj A)$
(\ie $\delta_B$ is the unique extension that corresponds to the image of 
$\al' \ppj A$ under \mab---see definition of \mab).
It follows, from the definition of \mab, that 
$\delta_B = \delta_B^{int} \cat a^{i+1}$, where $\delta^{int}_B$  consists entirely of internal actions that do not change the external signature of $B$.
This is because $B$ must, by assumption, destroy itself using an external action. Thus, by (AB), the destroying action must be $a^{i+1}$. 
Hence also $\BLastExec \cat \delta_B = \pi_B^{\l+1}$, since $B$ is destroyed at the end of $\delta_B$.
Also by construction of $\delta_B$ and (AB), $\first{\delta_B} = t$ and $\traceA{\delta_B^{int}}{B} = \sext{B}{t}$. 

Now extend $\pi''$ by applying the actions along $\delta_B$, starting in $\last{\pi''}$. 
Let the resulting execution be $\pi'$.
%
Hence $\last{\pi'} = y^{j'}$ where $j' = j + |\delta_B^{int}| + 1$.  Let
$D_{j'} = \config{Y}{y^{j'}}$.  
We now show that $C_{i+1} \simAB D_{j'}$.
Let $A' \in \icaut{C_i} - \set{A}$. Then $A' \in \icaut{D_j}$, since $C_i \simAB D_j$.
Also, in transitioning from $C_i$ to $C_{i+1}$, each $A'$ either does nothing, and so remains in the
same state, or it participates in the execution of $a^{i+1}$, possibly destroying itself as a
result.
Likewise, in transitioning from $D_j$ to $D_{j'}$, each $A'$ either does nothing, and so remains in the
same state, or it participates in the execution of $a^{i+1}$, since $\delta_B^{int}$ consists entirely of
internal actions of $B$, and no $A' \in \icaut{C_i} - \set{A}$ can be $B$, by construction.
Hence, the local transitions of the $A'$ (when executing $a^{i+1}$) can be chosen to be the same
in $Y$ as in $X$, and so the same $A'$ destroy themselves in $Y$ as in $X$, and 
the surviving $A'$ have the same final states in $Y$ as in $X$.
Also,  $\delta_B^{int}$ creates no new SIOA,  by
Assumption~\ref{lem:finite-creation-subst:create-AB},
since its actions are all internal actions of $B$.
We have $\psi = \vp[B/A]$ from above. Hence the same SIOA are created by the transitions $(x^i, a^{i+1}, x^{i+1})$ and
$(y', a^{i+1}, y^{j'})$, since $A, B$ are present in the configurations of $x^i$, $y'$, respectively, and 
executing the actions along $\delta_B^{int}$ does not change the trace, so that $\psi$ is still the
set of SIOA created by $a^{i+1}$, according to Definition~\ref{def:creation-corresponding}.
Therefore we can choose $(y', a^{i+1}, y^{j'})$ so that it creates these new SIOA in the same start states that $(x^i, a^{i+1}, x^{i+1})$ does.
We conclude that (except for $A$, $B$) $C_{i+1}$ and $D_{j'}$ end up with the same SIOA in the same
states, \ie $\icaut{D_{j'}} = \icaut{C_{i+1}}[B/A]$ and 
for all $A' \in \icaut{C_{i+1}} - \set{A}: \cmap{C_{i+1}}{A'} = \cmap{D_{j'}}{A'}$.
Finally, $A \nin \icaut{C_{i+1}}$ and $B \nin \icaut{D_{j'}}$.
Hence the conditions of Definition~\ref{def:simAB} all hold, and so $C_{i+1} \simAB D_{j'}$.

We now establish 
$\al' R_{AB} \pi'$,  $\pi' \ppj B \spref \Bseq$, and $\mab(\al' \ppj A) = \pi' \ppj B$.

\phdr{Proof of $\al' R_{AB} \pi'$} extend the mapping $m$ by setting $m(i+1) = j'$. 
We deal with each clause of Definition~\ref{def:RAB} in turn.

Clause \ref{RAB:indexmap:initial}: holds since $m(0)=0$ remains true.

Clause \ref{RAB:indexmap:cofinal}: holds since $|\pi'|=j'$.

Clause \ref{RAB:indexmap:traceX}: 
$\traceA{\execfrag{\pi}{m(i)}{m(i+1)}}{Y} = \cext{Y}{y^j} \cat a^{i+1} \cat \cext{Y}{y^{j'}}$.
This is because $\delta_B^{int}$ is an execution fragment consisting
entirely of internal actions of $B$ which do not change the external signature.
Also $\traceA{\execfrag{\al}{i}{i+1}}{X} = \cext{X}{x^i} \cat a^{i+1} \cat \cext{X}{x^{i+1}}$.
%
%
By (d),  $\cext{X}{x^i} = \cext{Y}{y^j}$.
Now $\config{X}{x^{i+1}} = C_{i+1}$, $\config{Y}{y^{j'}} = D_{j'}$.  
Also, $C_{i+1} \simAB D_{j'}$, and so
$\icext{C_{i+1}} = \icext{D_{j'}}$.  Hence $\cext{X}{x^{i+1}} = \icext{C_{i+1}} = \icext{D_{j'}} = \cext{Y}{y^{j'}}$.
We finally obtain 
$\cext{X}{x^i} \cat a^{i+1} \cat \cext{X}{x^{i+1}} = \cext{Y}{y^j} \cat a^{i+1} \cat \cext{Y}{y^{j'}}$. 
Hence $\traceA{\execfrag{\pi}{m(i)}{m(i+1)}}{Y} = \traceA{\execfrag{\al}{i}{i+1}}{X}$. 
Together with the induction hypothesis, this establishes Clause \ref{RAB:indexmap:traceX}.

Clause \ref{RAB:indexmap:traceA}:
$(\execfrag{\al}{i}{i+1}) \ppj A = s, a^{i+1}$, so 
$\traceA{(\execfrag{\al}{i}{i+1}) \ppj A}{A} = \sext{A}{s} \cat a^{i+1}$ since $A \not\in \icaut{C_{i+1}}$.\lbr
$(\execfrag{\pi}{j}{j+1}) \ppj B = \delta_B$, so
$\traceA{(\execfrag{\pi}{j}{j+1}) \ppj B}{B} = \traceA{\delta_B}{B} = \traceA{\delta_B^{int} \cat a^{i+1}}{B} = 
  \sext{B}{t} \cat a^{i+1}$, since $B \not\in \icaut{D_{j'}}$.
From above, $\sext{A}{s} =\sext{B}{t}$.  
Hence $\traceA{(\execfrag{\al}{i}{i+1}) \ppj A}{A} = \traceA{(\execfrag{\pi}{j}{j+1}) \ppj B}{B}$.
Clause \ref{RAB:indexmap:traceA} follows from this and the induction hypothesis.

Clause \ref{RAB:indexmap:config}: we have, from above, $C_{i+1} \simAB D_{j'}$.
Since $C_{i+1} = \config{X}{x^{i+1}}$, $D_{j'} = \config{Y}{y^{j'}}$, we have
$\config{X}{x^{i+1}} \simAB \config{Y}{y^{j'}}$. 
Since $m(i+1)=j'$, we have
$\config{X}{x^{i+1}} \simAB \config{Y}{y^{m(i+1)}}$. 
Clause \ref{RAB:indexmap:config} follows from this and the induction hypothesis.

\phdr{Proof of $\pi' \ppj B \spref \Bseq$}
from above, $\pi'$ is $\pi''$ extended by the actions along $\delta_B$, and so
$\pi' \ppj B = \pi'' \ppj B \cat \delta_B$ by construction of $\delta_B$.
Also, $\pi'' \ppj B = \prBseqPtl$.
Hence $\pi' \ppj B = \prBseqPtl \cat \delta_B$ 
We also have $\BLastExec \cat \delta_B \le \pi_B^{\l+1}$ by our choice of $\delta_B$.
Hence $\pi' \ppj B \spref \pi_B^1 \dlm \cdots \dlm \pi_B^{\l+1}$, and so 
$\pi' \ppj B \spref \Bseq$.

\phdr{Proof of $\mab(\al' \ppj A) = \pi' \ppj B$}
from immediately above, $\pi' \ppj B = \pi'' \ppj B \cat \delta_B$.
Also from above, $\pi'' \ppj B \cat \delta_B = \mab(\al' \ppj A)$, by our choice of $\delta_B$.
Hence $\pi' \ppj B = \pi'' \ppj B \cat \delta_B  = \mab(\al' \ppj A)$.


\case{8}{$A \in \icaut{C_i}$, $A \not\in \icaut{C_{i+1}}$, and 
             $a^{i+1} \in \sint{A}{\cmap{C_i}{A}}$, \ie $a^{i+1}$ is an internal action of $A$}

By Assumption~\ref{lem:finite-creation-subst:start-AB}, $A$ does not destroy itself by executing an internal action. Hence this case is not possible.


\vspace{2.0ex}

Having established the induction step in all cases, 
we conclude that (*) holds. Since $\al'$ is any prefix of $\al$, we can instantiate $\al'$ to
$\al$, which gives us that there exists $\pi$ such that $\al \RAB \pi$, and we are done.
\epr

\bt[Monotonicity of finite-trace inclusion w.r.t.\ SIOA creation]
\label{thm:finite-creation-subst}
Let $X, Y$ be configuration automata, and $A, B$ be SIOA. Assume that,
\bn

\item $B$ has a single start state, and $A$, $B$ do not destroy themselves by executing an internal action,

\item internal actions of $A, B$ do not create any SIOA, \ie have empty create sets,

\item $\fa x \in \start{X}, \ex y \in \start{Y}: \config{X}{x} \simAB \config{Y}{y}$,

\item $\ftraces{A} \sub \ftraces{B}$,

\item $\ttraces{A} \sub \ttraces{B}$, and

\item $X, Y$ are creation-corresponding \wrt\ $A, B$.

\en
Then\\
\ind	$\ftraces{X} \sub \ftraces{Y}$.
\et
\bpr
Immediate from Lemma~\ref{lem:finite-creation-subst} and Proposition~\ref{prop:RAB-implies-traceEquality}.
\epr

\bt[Monotonicity of trace inclusion w.r.t.\ SIOA creation]
\label{thm:creation-subst}
Let $X, Y$ be configuration automata, and $A, B$ be SIOA. Assume that,  

\bn

\item \label{thm:creation-subst:start-AB}
$B$ has a single start state, and $A$, $B$ do not destroy themselves by executing an internal action,

\item \label{thm:creation-subst:create-AB}
internal actions of $A, B$ do not create any SIOA, \ie have empty create sets,

\item \label{thm:creation-subst:start}
$\fa x \in \start{X}, \ex y \in \start{Y}: \config{X}{x} \simAB \config{Y}{y}$,

\item \label{thm:creation-subst:ftraces} 
\label{thm:creation-subst:traces} 
$\ftraces{A} \sub \ftraces{B}$,

\item \label{thm:creation-subst:ttraces}
$\ttraces{A} \sub \ttraces{B}$, and

\item \label{thm:creation-subst:creatCorr}
$X, Y$ are creation-corresponding \wrt\ $A, B$.

\en
Then\\
\ind	$\traces{X} \sub \traces{Y}$.
\et
\bpr
Let $\al = x^0 a^1 x^1 a^2 x^2 \ldots$ be an arbitrary execution of $X$.
We show that there exists a ``corresponding'' execution $\pi$ of $Y$ such that $\al \RAB \pi$. 
Proposition~\ref{prop:RAB-implies-traceEquality}
then implies $\trace{\al} = \trace{\al'}$, which yields the desired $\traces{X} \sub \traces{Y}$.

If $\al$ is finite, then the result follows from 
Lemma~\ref{lem:finite-creation-subst}.
So, we assume that $\al$ is infinite.
Let $\al_1$ be an arbitrary prefix of $\al$. Then, by 
Lemma~\ref{lem:finite-creation-subst}
there exists a finite execution
$\pi_1$ of $Y$ such that $\al_1 \RAB \pi_1$. Likewise, if   
$\al_1 < \al_2$ and $\al_2 < \al$
then there exists a finite execution $\pi_2$
of $Y$ such that $\al_2 \RAB \pi_2$.
Furthermore, we can show that $\pi_1 < \pi_2$
since $\pi_2$ can be chosen to be an extension of $\pi_1$, as the 
proof of Lemma~\ref{lem:finite-creation-subst} constructs $\pi_1$ and
then $\pi_2$ by induction on their length.

Since $\al$ is infinite, 
there exists an infinite set 
$\{ \al_i ~|~ i \ge 0 \}$ of finite executions of $X$ such that 
$\fa i \ge 0: \al_{i} < \al_{i+1} \land \al_i < \al$.
Repeating the above argument for arbitrary $i \ge 0$, we obtain that 
there exists an infinite set 
$\{ \pi_i ~|~ i \ge 0 \}$ of finite executions of $Y$ such that 
$\fa i \ge 0: \pi_{i} < \pi_{i+1} \land \al_i \RAB \pi_i$.
Now let $\pi$ be the unique infinite execution of $Y$ that satisfies
$\fa i \ge 0:  \pi_i < \pi$.
Then, by Definition~\ref{def:RAB}, $\al \RAB \pi$, and so $\pi$ is the required execution of $Y$.
\epr

\bco[Trace equivalence w.r.t.\ SIOA creation]
\label{cor:creation-subst}
Let $X, Y$ be configuration automata, and $A, B$ be SIOA. Assume that, 
\bn

\item $A$, $B$ have a single start state, and $A$, $B$ do not destroy themselves by executing an internal action,

\item internal actions of $A, B$ do not create any SIOA, \ie have empty create sets,

\item $\fa x \in \start{X}, \ex y \in \start{Y}: \config{X}{x} \simAB \config{Y}{y}$ and\\
$\fa y \in \start{Y}, \ex x \in \start{X}: \config{Y}{y} \simBA \config{X}{x}$,

\item $\ftraces{A} = \ftraces{B}$,

\item $\ttraces{A} = \ttraces{B}$, and

\item $X, Y$ are creation-corresponding \wrt\ $A, B$.
	
\en
Then\\
\ind	$\traces{X} = \traces{Y}$.
\eco
\bpr
Immediate by applying Theorem~\ref{thm:creation-subst} in both
directions of trace containment. Note that we use $\simBA$ to mean $\simAB$ with the roles of $A$,
$B$ interchanged, and that 
$\ccreated{Y}{y}{a} = \ccreated{X}{x}{a}[B/A]$ iff 
$\ccreated{Y}{y}{a}[A/B] = \ccreated{X}{x}{a}$.
\epr

In Section~\ref{sec:travel-agent-example} below, we present an example of a
flight ticket purchase system.  A client submits requests to buy an
airline ticket to a client agent.  The client agent creates a request
agent for each request. The request agent searches through a set of
appropriate databases where the request might be satisfied.  Upon
booking a suitable flight, the request agent returns confirmation to
the client agent and self-destructs.  A typical safety property is
that if a flight booking is returned to a client, then the price of
the flight is not greater than the maximum price specified by the
client.
The request agent in this example searches through databases in
any order. Suppose we replace it by a more refined agent that searches through
databases according to some rules or heuristics, so that it looks first at the
databases more likely to have a suitable flight.
Then, Theorem~\ref{thm:finite-creation-subst} tells us that this refined
system has all of the safety properties which the original system has.

\section{Modeling Dynamic Connection and Locations}
\label{subsec:location}


We stated in the introduction that we model both the dynamic
creation/moving of connections, and the mobility of agents, by using 
dynamically changing external interfaces. The guiding principle here,
adapted from \cite{Mil99},
is that an agent should only interact directly with either
(1) another co-located agent, or (2) a channel one of whose ends is
co-located with the agent. Thus,
we restrict interaction according to the current locations of the agents.

We adopt a logical notion of location: a location is simply a value
drawn from the domain of ``all locations.'' To codify our guiding
principle, we partition the set of SIOA into two subsets,
namely the set of agent SIOA, and the set of channel SIOA. Agent 
SIOA have a single location, and represent
agents, and channel SIOA have two locations, namely their
current endpoints. We
assume that all configurations are compatible, and
codify the guiding principle as follows:
for any configuration, the following conditions all hold,
(1) two agent SIOA have a common
    external action only if they have the same location,
(2) an agent SIOA and a channel SIOA have a
    common external action only if one of the channel endpoints has the
    same location as the agent SIOA,
and
(3) two channel SIOA have no common external actions.

\section{Extended Example: A Travel Agent System}
\label{sec:travel-agent-example}


Our example is a simple flight ticket purchase system.  A client requests to buy an airline
ticket. The client gives some ``flight information,'' $\fltinf$, \eg
acceptable departure and arrival times, departure city and destination city.
The client also
specifies a maximum price $\fltinf.\maxprice$ they can pay.
$\fltinf$ contains all the client information, including $\maxprice$, as well as an identifier that
is unique across all client requests.  The request goes to a static (always existing) ``client
agent,'' who then creates a special ``request agent'' dedicated to the particular request.  That
request agent then visits a (fixed) set of databases where the request might be satisfied.  If the
request agent finds a satisfactory flight in one of the databases,
i.e., a flight that conforms to
$\fltinf$ and has price $\leq \maxprice$, then it purchases some such flight, and returns a flight
descriptor $\fltdesc$ giving the flight and the price paid ($\fltdesc.p$) to the client agent, who
returns it to the client. The request agent then terminates. 
To abstract away from the details of conforming to a clients flight
information, we assume a predicate $\conforms(\fltdesc, \fltinf)$
that holds when the flight given by $\fltdesc$ satisfies the
arrival/deprture times and cities of the client request $\fltinf$.
We assume
a set $\flightdescs$ of flight descriptors, and a static set $\UnivDBagts$ of
database agents. We also assume that both the client flight
information $\fltinf$, and the returned flight descriptor $\fltdesc$,
are elements of $\flightdescs$.

The agents in the system are:
\bn
\item $\ClientAgent$, who receives all requests from the client, 
\item $\RequestAgent(\fltinf)$, responsible for handling request $\fltinf$, and
\item $\DatabaseAgent_d, d \in \UnivDBagts$, the agent (i.e., front-end)
for database $d$, where $\UnivDBagts$ is the set of all databases in
the system.
\en

We augment the pseudocode used in the mobile phone example by identifying SIOA using a ``type name''
followed by some parameters. This is only a notational convenience, and is not part of our model.

Figure~\ref{fig:spec} presents a specification automaton, $\Spec$, which is a
single SIOA that, together with the databases, specifies the set of
correct traces. That is, can take the specification to be 
$\Spec \pl (\pl_{d \in \UnivDBagts} \DatabaseAgent_d)$. However, as we
see below, it is simpler, and just as effective, to take the
specification to be $\Spec$, \ie to exclude the databases from the specification.

Figures~\ref{fig:client-agent}, \ref{fig:request-agent}, and \ref{fig:db-agent}
give the client agent, request agents, and database agent of an
implementation, respectively.
When writing sets of actions, we make the convention that
all free variables are universally quantified over their domains, so,
e.g., $\{\DBinform_{d}(\fltinf,\flights), \DBconf_{d}(\fltdesc,\ok)\}$
within action $\DBselect_{d}(\fltinf)$ below really denotes
$\{\DBinform_{d}(\fltinf,\flights), \DBconf_{d}(\fltdesc,\ok) ~|~
      \fltdesc \in \flightdescs, \flights \sub \flightdescs, \ok \in \Bool\}$.

In the implementation, we enforce locality constraints by modifying
the signature of $\RequestAgent(\fltinf)$ so that it can only query a
database $d$ if it is currently at location $d$ (we use the database
names for their locations). We allow
$\RequestAgent(\fltinf)$ to communicate with $\ClientAgent$ regardless
of its location. A further refinement would insert a suitable channel
between $\RequestAgent(\fltinf)$ and $\ClientAgent$ for this
communication (one end of which would move along with
$\RequestAgent(\fltinf)$), or would move $\RequestAgent(\fltinf)$ back
to the location of $\ClientAgent$.

\bfg
%

\automatontitle{Specification: $\Spec$}

\begin{signature}
\ioi{
$\clientrequest(\fltinf)$, 
        where $\fltinf \in \flightdescs$\\
$\DBinform_{d}(\fltinf,\flights)$,
        where $d \in \UnivDBagts$, $\fltinf \in \flightdescs$, and
        $\flights \sub \flightdescs$\\
$\DBconf_{d}(\fltinf, \fltdesc, \ok)$,
        where $d \in \UnivDBagts$, $\fltinf, \fltdesc \in \flightdescs$, and
        $\ok \in \Bool$\\
$\DBselect_{d}(\fltinf)$,
        where $d \in \UnivDBagts$ and $\fltinf \in \flightdescs$\\
$\adjustsig(\fltinf)$,
        where $\fltinf \in \flightdescs$\\
initially: $\{ \clientrequest(\fltinf) : \fltinf \in \flightdescs \}$ $\un$
$\{ \DBselect_{d}(\fltinf) : 
                  d \in \UnivDBagts,
                  \fltinf \in \flightdescs \}$
}{
$\DBquery_{d}(\fltinf)$,
        where $d \in \UnivDBagts$ and $\fltinf \in \flightdescs$\\
$\DBbuy_{d}(\fltinf,\flights)$,
        where $d \in \UnivDBagts$, $\fltinf \in \flightdescs$, and
              $\flights \sub \flightdescs$\\
$\response(\fltinf, \fltdesc, \ok)$,
        where $\fltinf, \fltdesc \in \flightdescs$ and
              $\ok \in \Bool$\\
initially: $\{ \response(\fltinf, \fltdesc, \ok):
                \fltinf, \fltdesc \in \flightdescs, 
                \ok \in \Bool \}$
}{
$\emptyset$\\
constant
}

\end{signature}

\begin{statevarlist}

\item $\status_{\fltinf} \in \{\notsubmitted, \submitted, \computed, \replied\}$,
        status of request $\fltinf$, initially $\notsubmitted$
        
\item $\trans_{\fltinf,d} \in \Bool$, true iff the system is
        currently interacting with database $d$ on behalf of request $\fltinf$,
        initially false

\item $\okflts_{\fltinf,d} \sub \flightdescs$, set of acceptable flights that has
         been found so far,
         initially empty

\item $\respset \subseteq \flightdescs \times \flightdescs \times \Bool$,
        responses that have been calculated but not yet sent to client, 
        initially empty

\item $x_{\fltinf,d} \in \mathcal{N}$, bound on the number of times
        database $d$ is queried on behalf of request $\fltinf$ before a
        negative reply is returned to the client, initially any
        natural number greater than zero

\end{statevarlist}

\begin{actionlist}

\iocode{ 

\inputaction{$\clientrequest(\fltinf)$}
{
  $\status_{\fltinf} \gets \submitted$
}

\inputaction{$\DBselect_{d}(\fltinf)$}
{
  $\insig \gets$\\
     $(\insig \un
      \{\DBinform_{d}(\fltinf,\flights), \DBconf_{d}(\fltdesc,\ok)\})$ $-$\\
   ~~$\{\DBinform_{d'}(\fltinf,\flights), \DBconf_{d'}(\fltdesc,\ok) :
                                                                d' \neq d\}$;\\
  $\outsig \gets$\\
   $(\outsig \un
      \{\DBquery_{d}(\fltinf), \DBbuy_{d}(\fltinf,\fltdesc)\})$ $-$\\
   ~~$\{\DBquery_{d'}(\fltinf), \DBbuy_{d'}(\fltinf,\fltdesc) : d' \neq d\}$
}

\outputaction{$\DBquery_{d}(\fltinf)$}
{
  $\status_{\fltinf} = \submitted \land x_{\fltinf,d} > 0$
}{
  $x_{\fltinf,d} \gets x_{\fltinf,d} - 1$;\\
  $\trans_{\fltinf,d} \gets \true$
}

\inputaction{$\DBinform_{d}(\fltinf,\flights)$}
{
  $\okflts_{\fltinf,d} \gets \okflts_{\fltinf,d} ~\un$\\
\hspace{4.5em}             ${\{\fltdesc : \fltdesc \in \flights \land
                                 \fltdesc.\price \leq \fltinf.\maxprice\}}$
}

\outputaction{$\DBbuy_{d}(\fltinf,\flights)$}
{
  $\status_{\fltinf} = \submitted ~\land$\\
  $\flights = \okflts_{\fltinf,d} \neq \emptyset \land
   \trans_{\fltinf,d}$
}{
  $\skipst$
}

}{ 

\inputaction{$\DBconf_{d}(\fltinf, \fltdesc, \ok)$}
{
  $\trans_{\fltinf,d} \gets \false$;\\
  if $\ok$ then\\
\ifind $\setinsert{\respset}{\mktuple{\fltinf,\fltdesc,\true}}$;\\
\ifind $\status_{\fltinf} \gets \computed$\\
  else\\
\ifind if $\,\fa d: x_{\fltinf,d} = 0$ then\\
\ifind \ifind $\setinsert{\respset}{\mktuple{\fltinf,\bot,\false}}$;\\
\ifind \ifind $\status_{\fltinf} \gets \computed$\\
\ifind else\\
\ifind \ifind $\skipst$
}

\outputaction{$\response(\fltinf, \fltdesc, \ok)$}
{
  $\mktuple{\fltinf,\fltdesc,\ok} \in \respset \land
   \status_{\fltinf} = \computed$}
{
    $\status_{\fltinf} \gets \replied$
}

\inputaction{$\adjustsig(\fltinf)$}
{
  $\insig \gets \insig -$\\
     \ind $\{\DBinform_{d}(\fltinf,\flights), \DBconf_{d}(\fltinf,\fltdesc,\ok)\}$;\\
  $\outsig \gets \outsig -$\\
     \ind $\{\DBquery_{d}(\fltinf), \DBbuy_{d}(\fltinf,\fltdesc)\}$
}

}

\end{actionlist}
\caption{The specification automaton}
\label{fig:spec}
\efg

We now give the client agent and request agents of the
implementation.  The initial configuration consists solely of the
client agent $\ClientAgent$.
We also give the database agents, which we can view as being ``external''
to the system, since we do not consider their details in arguing trace inclusion.
We provide the databases for sake of completeness, and to demonstrate
that we can reason even in the absence of major components, \ie we can
reason about ``open'' systems.

\bfg
\automatontitle{Client Agent: $\ClientAgent$}

\begin{signature}
\ioi{
$\clientrequest(\fltinf)$, 
        where $\fltinf \in \flightdescs$ \\
$\reqagentresponse(\fltinf, \fltdesc, \ok)$,
        where $\fltinf, \fltdesc \in \flightdescs$, and $\ok \in \Bool$\\
constant
}{
$\response(\fltinf, \fltdesc, \ok)$,
        where $\fltinf, \fltdesc \in \flightdescs$ and $\ok \in \Bool$\\
$\createact(\ClientAgent,\RequestAgent(\fltinf))$,
       where $\fltinf \in \flightdescs$\\
constant
}{
$\emptyset$\\
constant
}

\end{signature}

\begin{statevarlist}

\item $\reqset \subseteq \flightdescs$, 
        outstanding requests, initially empty

\item $\rcreated \subseteq \flightdescs$, 
        outstanding requests for whom a request agent has been
        created, but the response has not yet been returned to the client,
        initially empty

\item $\respset \subseteq \flightdescs \times \flightdescs \times \Bool$,
        responses not yet returned to client, 
        initially empty

\end{statevarlist}

\begin{actionlist}

\iocode{ 

\inputaction{$\clientrequest(\fltinf)$}
{
  $\setinsert{\reqset}{\mktuple{\fltinf}}$
}

\outputaction{$\createact(\ClientAgent, \RequestAgent(\fltinf))$}
{
  $\fltinf \in \reqset \land \fltinf \not\in \rcreated$
}{
  $\setinsert{\rcreated}{\fltinf}$;\\
  creates SIOA $\RequestAgent(\fltinf)$
}

}{ 

\inputaction{$\reqagentresponse(\fltinf, \fltdesc, \ok)$}
{
  $\setinsert{\respset}{\mktuple{\fltinf,\fltdesc,\ok}}$;\\
  $\setinsert{\done}{\fltinf}$
}

\outputaction{$\response(\fltinf, \fltdesc, \ok)$}
{
  $\mktuple{\fltinf,\fltdesc,\ok} \in \respset$
}{
  $\setdelete{\respset}{\mktuple{\fltinf,\fltdesc,\ok}}$
}

}

\end{actionlist}
\caption{The client agent} 
\label{fig:client-agent}
\efg

$\ClientAgent$ receives requests from a client (not portrayed), via
the $\clientrequest$ input action.  $\ClientAgent$ accumulates these
requests in $\reqset$, and creates a request agent
$\RequestAgent(\fltinf)$ for each one, via the output action 
$\createact$. This is indicated by the pseudocode 
``creates SIOA $\RequestAgent(\fltinf)$''.
Upon receiving a response from the request agent, via input action 
$\reqagentresponse$, the client agent adds the response to the set
$\respset$, and subsequently communicates the response to the client
via the $\response$ output action. It also removes all record of the
request at this point.

\bfg
\automatontitle{Request Agent:
                  $\RequestAgent(\fltinf)$
where $\fltinf \in \flightdescs$}

\begin{signature}
\ioi{
$\DBinform_{d}(\fltinf,\flights)$,
        where $d \in \UnivDBagts$ and
              $\flights \subseteq \flightdescs$ \\
$\DBconf_{d}(\fltinf, \fltdesc, \ok)$,
        where $d \in \UnivDBagts$,
              $\fltdesc \in \flightdescs$, and
              $\ok \in \Bool$ \\
$\terminate(\RequestAgent(\fltinf))$\\
initially: $\{\move_{\fltinf}(c,d), \mbox{ where } d \in \UnivDBagts\}$
}{
$\DBquery_{d}(\fltinf)$,
        where $d \in \UnivDBagts$\\
$\DBbuy_{d}(\fltinf,\flights)$,
        where $d \in \UnivDBagts$ and $\flights \sub \flightdescs$\\
$\reqagentresponse(\fltinf, \fltdesc, \ok)$,
        where $\fltdesc \in \flightdescs$ and
              $\ok \in \Bool$\\
initially: $\emptyset$

}{
$\move_{\fltinf}(c,d)$, where $d \in \UnivDBagts$\\
$\move_{\fltinf}(d,d')$, where $d, d' \in \UnivDBagts$ and $d \neq d'$\\
constant
}
\end{signature}

\begin{statevarlist}

\item $\loc \in c \un \UnivDBagts$, location of the request agent,
        initially $c$, the location of $\ClientAgent$

\item $\status \in \{\purchased, \failed, \unknown\}$,
        status of request $\fltinf$, initially $\notsubmitted$

\item $\trans_{d} \in \Bool$, true iff $\RequestAgent(\fltinf)$ is
        currently interacting with database $d$ (on behalf of request $\fltinf$),
        initially false

\item $\DBagts \subseteq \UnivDBagts$, databases that have not
        yet been queried, initially the list of all databases $\UnivDBagts$


\item $\tkt \in \flightdescs$, the flight ticket that 
       $\RequestAgent(\fltinf)$ purchases on behalf of the client,
       initially $\bottom$

\item $\okflts_{d} \sub \flightdescs$, set of acceptable flights that
         $\RequestAgent(\fltinf)$ has found so far,
         initially empty

\item $\queried_d$, boolean flag, $\true$ when
         database $d$ has been queried, initially $\false$.

\item $\ordered_d$, boolean flag, $\true$ when
         a purchase order for a ticket has been submitted to database $d$, initially $\false$.


\end{statevarlist}

\begin{actionlist}

\iocode{ 

\internalaction{$\move_{\fltinf}(c,d)$}
{
  $\loc = c$
}{
$\loc \gets d$;\\
$\trans_d \gets \true$;\\
$\setdelete{\DBagts}{d}$;\\
$\insig \gets \{\DBinform_{d}(\fltinf,\flights), 
                \DBconf_{d}(\fltinf, \fltdesc, \ok)\}$;\\
$\outsig \gets \{\DBquery_{d}(\fltinf), \DBbuy_{d}(\fltinf, \fltdesc)$,\\
\hspace{7.5ex}      $\reqagentresponse(\fltinf, \fltdesc, \ok)\}$;
}

\outputaction{$\DBquery_{d}(\fltinf)$}
{
  $\loc = d \land d \in \DBrem \land \neg \queried_d$                 
}{
  $\queried_d \gets \true$;
}

\inputaction{$\DBinform_{d}(\fltinf,\flights)$}
{
$\okflts_d \gets \okflts_d ~\un$\\
     \hspace{4em}   $\{\fltdesc : \fltdesc \in \flights \land
                     \fltdesc.\price \leq \fltinf.\maxprice\}$;\\
if $\okflts_d = \emptyset$ then\\
\ifind $\trans_d \gets \false$;
}

\outputaction{$\DBbuy_{d}(\fltinf, \flights)$}
{
  $\loc = d \land \flights = \okflts_d \neq \emptyset ~\land$\\
  $\tkt = \bot \land \trans_d \land \neg \ordered_d$
}{
  $\ordered_d \gets \true$
}

}{ 

\inputaction{$\DBconf_{d}(\fltinf, \fltdesc, \ok)$}
{
  $\trans_d \gets \false$;\\
  if $\ok$ then\\
\ifind  $\tkt \gets \fltdesc$;\\
\ifind  $\status \gets \purchased$\\
  else\\
\ifind  if $\DBagts = \emptyset$ then\\
\ifind \ifind  $\status \gets \failed$

}

\internalaction{$\move_{\fltinf}(d,d')$}
{
   $\loc = d \land d' \in \DBrem \land \status = \unknown$
}{
$\loc \gets d'$;\\
$\insig \gets \{\DBinform_{d'}(\fltinf,\flights),
                \DBconf_{d'}(\fltinf, \fltdesc, \ok)\}$;\\
$\outsig \gets \{\DBquery_{d'}(\fltinf), \DBbuy_{d'}(\fltinf, \fltdesc)$,\\
\hspace{7.5ex}      $\reqagentresponse(\fltinf, \fltdesc, \ok)\}$;\\
}

\outputaction{$\reqagentresponse(\fltinf, \fltdesc, \ok)$}
{
  $(\status = \purchased \land \fltdesc = \tkt \neq \bot \land \ok) ~\lor$\\
  $(\status = \failed \land \fltdesc = \bot \land \neg \ok)$
}{
  $\insig \gets \emptyset$;\\
  $\outsig \gets \emptyset$;\\
  $\intsig \gets \emptyset$
}

}

\end{actionlist}

\caption{The request agent}
\label{fig:request-agent}
\efg

\bfg
\automatontitle{Database:
                     $\DatabaseAgent_d$ where $d \in \UnivDBagts$}

\begin{signature}
\ioi{
$\DBquery_{d}(\fltinf)$,
        where $\fltinf \in \flightdescs$ and
                  $d \in \UnivDBagts$\\
$\DBbuy_{d}(\fltinf,\flights)$,
        where $d \in \UnivDBagts$,
                  $\fltinf \in \flightdescs$, and
                  $\flights \sub \flightdescs$\\
constant
}{
$\DBinform_{d}(\fltinf,\flights)$,
        where $d \in \UnivDBagts$,
                  $\fltinf \in \flightdescs$, and
                  $\flights \subseteq \flightdescs$ \\
$\DBconf_{d}(\fltinf, \fltdesc, \ok)$,
        where $d \in \UnivDBagts$,
                  $\fltinf \in \flightdescs$,
                  $\fltdesc \in \flightdescs$, and
                  $\ok \in \Bool$ \\
constant
}{
$\emptyset$\\
constant
}
\end{signature}

\begin{statevarlist}

\item $\rcvd_d \sub \flightdescs$, set of received and pending queries,
        initially $\emptyset$

\item $\avail_d \sub \flightdescs$, set of available flights

\item $\orders_d \sub \flightdescs \times 2^{\flightdescs}$, set of pending orders,
        initially $\emptyset$

\end{statevarlist}

\begin{actionlist}

\iocode{ 

\inputaction{$\DBquery_{d}(\fltinf)$}
{
   $\setinsert{\rcvd_d}{\fltinf}$
}

\outputaction{$\DBinform_{d}(\fltinf,\flights)$}
{
   $\fltinf \in \rcvd \land
    \flights = \set{\fltdesc ~|~ \conforms(\fltdesc, \fltinf)}$
}{
    $\skipst$
}

}{ 

\inputaction{$\DBbuy_{d}(\fltinf, \flights)$}
{
   $\setinsert{\orders_d}{\tpl{\fltinf, \flights}}$
}

\outputaction{$\DBconf_{d}(\fltinf, \fltdesc, \ok)$}
{
  $\tpl{\fltinf, \flights} \in \orders_d\ \land$\\
  $[\ (\fltdesc \in \flights \ints \avail_d \land \ok)\ \lor$\\
  $\ \ (\fltdesc = \bottom \land\, \flights \ints \avail_d = \emptyset \land \neg \ok)\ ]$
}{
   $\setdelete{\avail_d}{\fltdesc}$\\
   $\setdelete{\orders_d}{\tpl{\fltinf, \flights}}$
}

}

\end{actionlist}

\caption{The databse agent}
\label{fig:db-agent}
\efg

$\RequestAgent(\fltinf)$ handles the single request $\fltinf$, and then
terminates itself. 
$\RequestAgent(\fltinf)$ has initial location
$c$ (the location of $\ClientAgent$)
traverses the databases in the system, querying each database $d$ 
using $\DBquery_{d}(\fltinf)$.
Database $d$ returns a set of flights that match the schedule
information in $\fltinf$.
Upon receiving this ($\DBinform_{d}(\fltinf,\flights)$),
$\RequestAgent(\fltinf)$
searches for a suitably cheap flight
(the $\exists \fltdesc \in \flights : \fltdesc.\price \leq \fltinf.\maxprice$
condition in $\DBinform_{d}(\fltinf,\flights)$). If such a flight
exists, then $\RequestAgent(\fltinf)$ attempts to buy it
($\DBbuy_{d}(\fltinf, \flights)$ and $\DBconf_{d}(\fltinf, \fltdesc, \ok)$).
If successful, then $\RequestAgent(\fltinf)$ returns a
positive response to $\ClientAgent$ and terminates.
$\RequestAgent(\fltinf)$ queries each database at most once, and
attempts to buy a ticket from each database at most once.
$\RequestAgent(\fltinf)$ can return a negative response if it
has queried each database once and failed to buy a ticket.

Formally, let $\Impl$ be the configuration automaton that is ``generated'' by 
$\ClientAgent$ and all the $\RequestAgent(\fltinf)$, \ie the configuration automaton whose initial
states correspond to the initial states of $\ClientAgent$, and whose transitions are those generated
by the intrinsic transitions of the configurations consisting of $\ClientAgent$ and all created
$\RequestAgent(\fltinf)$. That is, $\Impl$ is our implementation.
The implementation $\Impl$ refines the specification $\Spec$ (provided that
all actions except $\clientrequest(\fltinf)$ and $\response(\fltinf, \fltdesc,
\ok)$ are hidden) since the implementation queries each database exactly
once before returning a negative response, whereas the specification queries each database some
finite number of times before doing so. Thus, the traces of the implementation are a subset of the
traces of the specification:  $\traces{\Impl} \sub \traces{\Spec}$.

We now apply Theorem~\ref{thm:SIOA:trace-substitutivity} to infer 
$\traces{\Impl \pl  (\pl_{d \in \UnivDBagts} \DatabaseAgent_d)}
 \sub
 \traces{\Spec \pl  (\pl_{d \in \UnivDBagts} \DatabaseAgent_d)}$.
That is, including the databases in the specification and in the
implementation does not invalidate the trace inclusion. 
This simplifies our reasoning, and also demonstrates our ability to
handle ``open'' systems, in which a major component (\ie the database)
is left unspecified.

%
Our results also enable the incremental verification of trace
inclusion between specifications and their implementations.
For example, within the context of a larger system, we replace $\Spec$ by $\Impl$,
and then we apply Theorem~\ref{thm:SIOA:trace-substitutivity} to infer that the traces 
of the resulting system are a subset of the traces of the initial system.
For example, let $\SpcTwo$ be a specification for another subsystem that provides 
hotel booking, and let $\ImpTwo$ be an implementation for $\SpcTwo$ such that
$\traces{\ImpTwo} \sub \traces{\SpcTwo}$.
We apply Theorem~\ref{thm:SIOA:trace-substitutivity}  with antecedent
$\traces{\Impl} \sub \traces{\Spec}$ to infer
$\traces{\Impl \pl \SpcTwo} \sub \traces{\Spec \pl \SpcTwo}$.
We again apply Theorem~\ref{thm:SIOA:trace-substitutivity}  with antecedent
$\traces{\ImpTwo} \sub \traces{\SpcTwo}$ to infer 
$\traces{\Impl \pl \ImpTwo} \sub \traces{\Impl \pl \SpcTwo}$.
Transitivity of $\sub$ then yields
$\traces{\Impl \pl \ImpTwo} \sub \traces{\Spec \pl \SpcTwo}$,
\ie the overall implementation is trace-contained in the overall specification.
We can repeat this as often as we like, \eg if there is a third system $\SpcThree$ and its
implementation $\ImpThree$, say for booking rental cars. Then 
$\traces{\ImpThree} \sub \traces{\SpcThree}$, together with the above and
Theorem~\ref{thm:SIOA:trace-substitutivity}, gives us 
$\traces{\Impl \pl \ImpTwo \pl \ImpThree} \sub \traces{\Spec \pl \SpcTwo \pl \SpcThree}$.
Thus, we can in turn replace each specification by its implementation, and have trace-containment guaranteed.


Now suppose that we replace $\RequestAgent(\fltinf)$ by another agent $\RequestAgent'(\fltinf)$
whose behavior is more constrained in that $\RequestAgent'(\fltinf)$ does not move arbitrarily from
one database $d$ to another $d'$, but selects the destination $d'$ according to a heuristic function
$\nextt()$ that attempts to maximize the probability of finding a suitable flight.
In other words, the
precondition of $\move_{\fltinf}(d,d')$ action is changed from 
$\loc = d \land d' \in \DBrem \land \status = \unknown$
to
$\loc = d \land d' \in \DBrem \land \status = \unknown \land d' = \nextt()$.
This change implies that
$\traces{\RequestAgent'(\fltinf)} \sub \traces{\RequestAgent(\fltinf)}$ and 
$\ttraces{\RequestAgent'(\fltinf)} \sub \ttraces{\RequestAgent(\fltinf)}$, 
since the behaviors of $\RequestAgent'(\fltinf)$ are more constrained than $\RequestAgent(\fltinf)$.

Let $\ImplP$ be the same as $\Impl$, except that $\RequestAgent'(\fltinf)$ is created
instead of $\RequestAgent(\fltinf)$.
We show that all assumptions of Theorem~\ref{thm:creation-subst} are satisfied.
From the ``initially'' statements in the I/O automaton pseudocode
in Figure~\ref{fig:request-agent}, we see that $\RequestAgent(\fltinf)$
has a single initial state. Also, $\RequestAgent(\fltinf)$ and 
$\RequestAgent'(\fltinf)$ destroy themselves using the output action
$\reqagentresponse$.
Hence Assumption~\ref{thm:creation-subst:start-AB} is satisfied.
The only action that creates SIOA is an action of $\ClientAgent$, and so 
Assumption~\ref{thm:creation-subst:create-AB} is satisfied.
Since the initial states of $\Impl$ and $\ImplP$ correspond,
Assumption~\ref{thm:creation-subst:start} is satisfied.
Since $\traces{\RequestAgent'(\fltinf)} \sub \traces{\RequestAgent(\fltinf)}$ and 
$\ttraces{\RequestAgent'(\fltinf)} \sub \ttraces{\RequestAgent(\fltinf)}$, we have that 
Assumptions~\ref{thm:creation-subst:traces} and \ref{thm:creation-subst:ttraces} are satisfied. 
Since the SIOA created by $\createact(\ClientAgent, \RequestAgent(\fltinf))$ depend only on the
inputs $\clientrequest(\fltinf)$, we see that $\Impl$ and $\ImplP$ are
creation-corresponding \wrt\ request agents, and hence
Assumption~\ref{thm:creation-subst:creatCorr} is satisfied.
Hence we apply Theorem~\ref{thm:creation-subst} to conclude $\traces{\ImplP} \sub \traces{\Impl}$.
The above results together with
Theorem~\ref{thm:SIOA:trace-substitutivity} now yield, for example, 
$\traces{\ImplP \pl \ImpTwo \pl \ImpThree} \sub \traces{\Spec \pl \SpcTwo \pl \SpcThree}$.

This example illustrates one way of satisfying the
creation-correspondence requirement: the SIOA created depend on the
sequence of inputs and outputs executed so far (in the case of this
example, it depends on only the inputs, \ie the client requests).


\section{Related Work}
\label{sec:related}


Formalisms for the modeling of dynamic systems can generally be classified as
being based on process algebras or on automata/state transition systems.

The $\pi$-calculus \cite{Mil99} is a process algebra that includes the ability to modify the channels between
processes: channels are referred to by names, and a name $y$ can be sent along a known channel to a
recipient, which then acquires the ability to use the channel named by $y$.
The $\pi$-calculus adopts the viewpoint that mobility of processes is modelled by changing the links
that a process can use to communicate, to quote from \cite[page 78]{Mil99}: ``the location of a
process in a virtual space of processes is determined by the links which it has to other processes;
in other words, your neighbors are those you can talk to.''
Process creation is given in the $\pi$-calculus by the $!$ operator:
the process $!P$ can create an unlimited number of copies of $P$. We
can emulate this feature by having a configuration automaton which can
create an unlimited number of copies of an SIOA.

The asynchronous $\pi$-calculus \cite{HT91} is an asynchronous version of the $\pi$-calculus where
receipt of a name along a channel occurs after it is sent, rather than synchronously, as in the
original $\pi$-calculus. The higher-order $\pi$-calculus allows sending processes themselves as
messages along channels \cite{Milner91thepolyadic}. In terms of how mobility is modeled, DIOA is
therefore similar to the $\pi$-calculus in that we also model mobility in terms of signature change.

The distributed join-calculus \cite{FGLMR96} extends the $\pi$-calculus with notions of explicit
location, failure, and failure detection. Locations are hierarchical, and are modelled as
trees. Locations reside at a physical site and can move atomically to another physical site, taking
their entire subtree of locations with them. A failed location is tagged by a marker. All
sublocations of a failed location are also failed.

The Distributed $\pi$-calculus D$\pi$ \cite{RH98} is another extension of the $\pi$-calculus that deals with
distribution issues. D$\pi$ provides tree-structured locations, and each basic process (thread) is
located at some location. Channels are also located, and a process can send a value on a channel
only if it is at the same location as the channel.
Channel and locations also have permissions associated with them, and which constrain their
use. These constraints are enforced by a type system.

The ambient calculus \cite{CG00} takes as primitive notions agents, which execute actions, and
\emph{ambients}. An ambient is a ``space'' which agents can enter, leave, and open. Ambients may be
nested, and are mobile. A process in the ambient calculus is either an agent or an ambient. 
The ambient calculus is intended to model, \eg administrative domains in the world-wide web.

The above process algebras have a formal syntax for process expressions, and a fixed set of
\emph{reaction rules}, which give the possible reductions between expressions.  Reasoning about
behaviour is carried out using notions of equivalence and congruence: observational equivalence,
weak and strong bisimulation, barbed bisimulation, etc.

DIOA makes a different choice of primitive notion, it chooses actions and automata as primitive, and
does not include channels and their transmission as primitive.  Our approach is also different in
that it is primarily a (set-theoretic) mathematical model, rather than a formal language and
calculus.  We expect that notions such as channel and location will be built upon the basic model
using additional layers (as we do for modeling mobility in terms of signature change).  Also, we
ignore issues (e.g., syntax) that are important when designing a programming language.  Note that
the ``precondition effect'' notation used in the travel agent example is informal, and used only for
exposition. Reasoning about behaviour is carried out using trace substitutivity: the monotonicity of
parallel composition, action hiding, action renaming, and SIOA creation (subject to technical
conditions) with respect to trace inclusion.  A consequence of our results is that trace equivalence
is a congruence with respect to parallel composition, action hiding, and action renaming.

In a joint study \cite{AAKKLLM00} with researchers from  Nippon Telephone and
Telegraph, we compare DIOA with two languages defined and used at Nippon Telephone and
Telegraph: Erd\"{o}s is a knowledge based environment for agent programming, and Nepi extends the
$\pi$-calculus with data types. We construct a simplified version of the travel agent example above,
in all three formalisms. The version in DIOA appears cleaner and easier to read, as it is devoid of
language and implementation-specific detail. The versions in Nepi and Erd\"{o}s have the advantage
of executability, and in addition Erd\"{o}s supports CTL model checking \cite{CES86} in the
finite-state case. Hence DIOA can be used for the initial specification and implementation of a
dynamic system, and our trace inclusion results used for verification of conformance of the
implementation to the specification.  Subsequently, the DIOA implementation can be translated into
Nepi or Erd\"{o}s, or indeed into any other concrete executable programming notation for dynamic systems.
Alternatively, the DIOA can be compiled directly, as in the IOA project \cite{GLMT09}.
This approach provides the advantages of a compositional approach to specification, design,
and implementation of dynamic systems.

One key difference between DIOA and process algebras is that most 
behavioral equivalence notions for process algebras are based on simulation/bisimulation relations, and so entail
examining the state transition structure of the two systems being compared. DIOA on the other hand
uses trace substitutivity and trace equivalence, which are based only on the externally visible
behavior. In practice one would use simulation relations to establish trace inclusion, so this
difference may not matter so much, but it does provide room for methods of establishing trace
inclusion apart from simulation relations.

Bigraphs \cite{Mil09} were introduced by Milner as a model for ubiquitous computing systems containing
large numbers of mobile agents, and are founded on two main notions: placing and linking
\cite[prologue]{Mil09}. A bigraph over a given set of nodes $V$
consists of two independent (and
independently modifiable) components: a place graph, which is a forest over $V$, and a link graph,
which is a hypergraph over $V$. The place graph models location: nodes in a place graph are similar
to ambients, and can move inside other nodes, and out of nodes that are ancestors in the place
graph.  The link graph models connectivity: hyperedges in the link graph represent 
connectivity. Unlike the process algebras discussed above, bigraphs do not come with a
fixed set of reaction rules, and their behavioral theory is given with
respect to a set of unspecified
reaction rules \cite{DBLP:conf/popl/JensenM03}.

A rough analogy can be drawn between the structure of Bigraphs and DIOA: the place graph is analogous
to the nesting of a configuration automata inside the configuration automaton which created it, and
the hyperedges of the link graph are analogous to actions, which can have several SIOA as
participants. The input enabling condition destroys this analogy to some
extent, but we note that we did not use input enabling to derive any of our results, and it can
possibly be dispensed with. Detailed investigation of the relation between Bigraphs and DIOA is a
topic for future work.

Among state-based formalisms for dynamic models, we mention Dynamic BIP and Dynamic
Reactive Modules.
Dynamic Reactive Modules~\cite{FHNSPV11} are a dynamic extension of reactive modules 
\cite{DBLP:journals/fmsd/AlurH99b}. New modules can be created as instances of module class
definitions, using a \textbf{new} command,  as in object-oriented
languages. The \textbf{new} command
returns a reference to the newly created instance, which can be stored in a reference variable, and
passed to other module instances as a parameter, upon their creation.
A module instance that has a reference to another module instance can
then read the other modules externally visible 
variables. The semantics of dynamic reactive modules are given by dynamic discrete systems
\cite{FHNSPV11}, which extend fair discrete systems \cite{DBLP:journals/iandc/KestenP00} to model the creation of module instances.

BIP~\cite{DBLP:journals/software/BasuBBCJNS11}
is a framework for constructing systems by
superposing three layers of modeling: behavior, interaction, and priority (hence BIP).
An atomic component is a labeled transition system extended with ports, which label its transitions. A (multiparty) interaction is
a synchronous event which involves a fixed set of participating atomic components.
Syntactically, an interaction is specified as a 
set of ports, with at most one port from each atomic component. Execution of a multiparty
interaction involves the synchronous execution of a transition labeled by the relevant port in each
participating component. BIP provides both syntax and semantics, and has been implemented in the 
BIP execution Engine \cite{DBLP:journals/dc/BonakdarpourBJQS12}.
Dynamic BIP, or Dy-BIP, \cite{BJMS12} extends BIP by allowing the set of interactions to change
dynamically with the current global state. The possible interactions in a state are computed as
maximal solutions of constraints.  Dy-BIP does not include the dynamic creation and destruction of
component instances. This is for simplicity, and is not a fundamental limitation. Dy-BIP is thus
similar to our SIOA, whose signatures are functions of their state. However Dy-BIP provides a syntax
for writing interaction constraints, and these have been implemented in the BIP execution Engine.

In summary, our model is based on the I/O automaton model \cite{LT89}, which has been
successfully applied to the design of many difficult distributed algorithms, 
including ones for 
resource allocation \cite{Lyn96,LynchW92},
distributed data services \cite{FGLLS-jour},
group communication services \cite{FLS01},
distributed shared memory \cite{Luchangco,LynchShvartsman-disc02},
and
reliable multicast \cite{LL02}. 
In our model, all processes have unique identifiers, and 
the notion of a subsystem is well defined. Subsystems can be built up
hierarchically. Together with our results regarding the monotonicity of trace
inclusion, this provides a semantic foundation for 
compositional reasoning. In contrast, process calculi tend to use a more
syntactic approach, by showing that some notion of simulation or bisimulation
is preserved by the operators that are used to define the syntax of processes
(e.g., parallel composition, choice, action prefixing).

\section{Conclusions and Further Research}
\label{sec:research}

We presented  a model, DIOA, 
 of dynamic computation based on I/O automata. 
The features of dynamic computation that DIOA expresses directly are
(1) modification of communication and synchronization capabilities,
 \ie SIOA signature change, and
(2) creation of new components, \ie configuration automata and
 configuration mappings.
Other aspects of dynamic computation, such as location and migration,
are modeled indirectly using the above-mentioned features.

For SIOA, we established standard results of (1) monotonicity of trace
inclusion (trace substitutivity), and (2) trace equivalence as a
congruence, both with respect to the operations of concurrent
composition, action hiding, and action renaming.
For configuration automata and the operation of SIOA creation, we gave
an example showing that trace inclusion is not always monotonic with
respect to SIOA creation. This is in contrast to most process
algebras, where the simulation relation used is shown to be a
congruence with respect to process creation. This somewhat surprising
result stems from our use of trace inclusion and trace equivalence for
relating different systems. Trace inclusion and trace equivalence
abstract away from the internal branching structure of the transition
system, and this accounts for the violation of trace inclusion monotonicity.
We then presented some technical assumptions under which trace
inclusion is monotonic with respect to SIOA creation.  In addition to
trace inclusion, we need to also assume inclusion of terminating
traces (traces of terminating executions), along with restrictions on
when the substituted SIOA can be created.

Our model provides a very general framework for modeling process
creation: creation of an SIOA $A$ is a function of the state of the
``containing'' configuration automaton, i.e., the global state of the
``encapsulated system'' which creates $A$. This generality was useful
in enabling us to define a connection between SIOA creation and
external behavior that yielded
Theorems~\ref{thm:finite-creation-subst} and \ref{thm:creation-subst}.

For future work, the most pressing concern is to devise a notion of
forward simulation for DIOA, and to show that it implies trace
inclusion. Clearly, the state correspondence must match not only the
outgoing transitions, but also the external signatures in the corresponding states.

We intend to investigate the relationship between DIOA and $\pi$-calculus,
and to look into embedding the $\pi$-calculus into DIOA.  This
should provide insight into the implications of the choice of
primitive notion; automata and actions for DIOA versus names and
channels for $\pi$-calculus. The work of \cite{NS95}, which provides a
process-algebraic view of I/O automata, could be a starting point for
this investigation.  We note that the use of unique SIOA identifiers
is crucial to our model: it enables the definition of the execution
projection operator, and the establishment of execution
projection/pasting and trace pasting results. This then yields our
trace substitutivity result. The $\pi$-calculus does not have such
identifiers, and so the only compositionality results in the
$\pi$-calculus are with respect to simulation, rather than trace
inclusion. Since simulation is incomplete with respect to trace
inclusion, our compositionality result has somewhat wider scope than that of
the $\pi$-calculus. When the traces of $A$ are included in those of
$B$, but there is no simulation from $A$ to $B$, our approach will
allow $B$ to be replaced by $A$, and we can automatically conclude
that correctness is preserved, i.e., no new behaviors are introduced
in the overall system.

We will explore the use of DIOA as a semantic model for object-oriented
programming. Since we can express dynamic aspects of OOP, such as the creation of
objects, directly, we feel this is a promising direction. Embedding a model of
objects into DIOA would provide a foundation for the 
verification and refinement of OO programs. 

Agent systems should be able to operate in a dynamic environment, with processor
failures, unreliable channels, and timing uncertainties.  Thus, we need to
extend our model to deal with fault-tolerance and timing.

Pure liveness properties are given by a set of \emph{live traces}.  A
live trace is the trace of a live execution, and a live execution is
one which meets a specified liveness condition \cite{Att11,GSSL98}.
Refinement with respect to liveness properties is
dealt with by inclusion relations amongst the sets of live traces
only. In \cite{Att11}, a method is given for establishing live
trace inclusion, by using a notion of forward simulation that is sensitive to
liveness properties. Extending this method to SIOA will enable the refinement
and verification of liveness properties of dynamic systems.

\bibliographystyle{plain}
\bibliography{related,refs,ABBREV}

\begin{thebibliography}{10}

\bibitem{DBLP:journals/fmsd/AlurH99b}
Rajeev Alur and Thomas~A. Henzinger.
\newblock Reactive modules.
\newblock {\em Formal Methods in System Design}, 15(1):7--48, 1999.

\bibitem{AAKKLLM00}
Tadashi Araragi, Paul Attie, Idit Keidar, Kiyoshi Kogure, Victor Luchangco,
  Nancy Lynch, and Ken Mano.
\newblock On formal modeling of agent computations.
\newblock In J.L. Rash, C.A. Rouff, W.~Truszkowski, D.~Gordon, and M.G.
  Hinchey, editors, {\em Formal Approaches to Agent-Based Systems (First
  International Workshop, FAABS 2000, Greenbelt, MD, USA, April 2000)}, volume
  1871 of {\em Lecture Notes in Artificial Intelligence}, pages 48--62.
  Springer-Verlag, 2000.
\newblock Also, in {\em NASA Workshop on Formal Approaches to Agent-Based
  System}, April 2000.

\bibitem{Att99b}
P.~C. Attie.
\newblock Liveness-preserving simulation relations.
\newblock In {\em 18th Annual ACM Symposium on the Principles of Distributed
  Computing}, pages 63 -- 72, May 1999.

\bibitem{Att11}
Paul~C. Attie.
\newblock On the refinement of liveness properties of distributed systems.
\newblock {\em Formal Methods in System Design}, 39(1):1--46, 2011.
\newblock Preliminary version appears as \cite{Att99b}.

\bibitem{DBLP:journals/software/BasuBBCJNS11}
Ananda Basu, Saddek Bensalem, Marius Bozga, Jacques Combaz, Mohamad Jaber,
  Thanh-Hung Nguyen, and Joseph Sifakis.
\newblock Rigorous component-based system design using the bip framework.
\newblock {\em IEEE Software}, 28(3):41--48, 2011.

\bibitem{DBLP:journals/dc/BonakdarpourBJQS12}
Borzoo Bonakdarpour, Marius Bozga, Mohamad Jaber, Jean Quilbeuf, and Joseph
  Sifakis.
\newblock A framework for automated distributed implementation of
  component-based models.
\newblock {\em Distributed Computing}, 25(5):383--409, 2012.

\bibitem{BJMS12}
Marius Bozga, Mohamad Jaber, Nikolaos Maris, and Joseph Sifakis.
\newblock Modeling dynamic architectures using dy-bip.
\newblock In Thomas Gschwind, Flavio~De Paoli, Volker Gruhn, and Matthias Book,
  editors, {\em Software Composition}, volume 7306 of {\em Lecture Notes in
  Computer Science}, pages 1--16. Springer, 2012.

\bibitem{CG00}
Luca Cardelli and Andrew~D. Gordon.
\newblock Mobile ambients.
\newblock {\em Theoretical Computer Science}, 240(1):177--213, 2000.

\bibitem{CES86}
Edmund~M. Clarke, E.~Allen Emerson, and A.~Prasad Sistla.
\newblock Automatic verification of finite-state concurrent systems using
  temporal logic specifications.
\newblock {\em {ACM} Trans. Program. Lang. Syst.}, 8(2):244--263, 1986.

\bibitem{FGLLS-jour}
A.~Fekete, D.~Gupta, V.~Luchangco, N.~A. Lynch, and A.~Shvartsman.
\newblock Eventually-serializable data service.
\newblock {\em Theoretical Computer Science}, 220(1):113--156, jun 1999.
\newblock Special Issue on Distributed Algorithms.

\bibitem{FLS01}
A.~Fekete, N.~A. Lynch, and A.~Shvartsman.
\newblock Specifying and using a partitionable group communication service.
\newblock {\em ACM Transactions on Computer Systems}, 19(2):171--216, May 2001.

\bibitem{FHNSPV11}
J.~Fisher, T.A. Henzinger, D.~Nickovic, A.V. Singh, N.~Piterman, and M.Y.
  Vardi.
\newblock Dynamic reactive modules.
\newblock In {\em 22nd International Conference on Concurrency Theory}, Lecture
  Notes in Computer Science, pages 404--418. Springer-Verlag, 2011.

\bibitem{FGLMR96}
Cedric Fournet, Georges Gonthier, Jean-Jacques Levy, Luc Maranget, and Didier
  Remy.
\newblock A calculus of mobile agents.
\newblock In {\em Proceedings of the 7th International Conference on
  Concurrency Theory (CONCUR'96), Springer-Verlag, LNCS 1119}, pages 406--421,
  August 1996.

\bibitem{GSSL98}
R.~Gawlick, R.~Segala, J.F. Sogaard-Andersen, and N.A. Lynch.
\newblock Liveness in timed and untimed systems.
\newblock {\em iandc}, 141(2):119--171, March 1998.

\bibitem{GLMT09}
Chryssis Georgiou, Nancy~A. Lynch, Panayiotis Mavrommatis, and Joshua~A.
  Tauber.
\newblock Automated implementation of complex distributed algorithms specified
  in the {IOA} language.
\newblock {\em {STTT}}, 11(2):153--171, 2009.

\bibitem{HM90}
Joseph~Y. Halpern and Yoram Moses.
\newblock Knowledge and common knowledge in a distributed environment.
\newblock {\em Journal of the ACM}, 37(3):549--587, 1990.

\bibitem{HT91}
Kohei Honda and Mario Tokoro.
\newblock An object calculus for asynchronous communication.
\newblock In {\em Proceedings of the European Conference on Object-Oriented
  Programming (ECOOP}, pages 133--147. Springer-Verlag, 1991.

\bibitem{DBLP:conf/popl/JensenM03}
Ole~H{\o}gh Jensen and Robin Milner.
\newblock Bigraphs and transitions.
\newblock In Alex Aiken and Greg Morrisett, editors, {\em POPL}, pages 38--49.
  ACM, 2003.

\bibitem{DBLP:journals/iandc/KestenP00}
Yonit Kesten and Amir Pnueli.
\newblock Verification by augmented finitary abstraction.
\newblock {\em Inf. Comput.}, 163(1):203--243, 2000.

\bibitem{LL02}
C.~Livadas and N.~A. Lynch.
\newblock A formal venture into reliable multicast territory.
\newblock In Moshe Y.~Vardi Doron~Peled, editor, {\em Formal Techniques for
  Networked and Distributed Systems - FORTE 2002 (Proceedings of the 22nd IFIP
  WG 6.1 International Conference)}, volume 2529 of {\em Lecture Notes in
  Computer Science}, pages 146--161, Houston, Texas, USA, November 2002.
  Springer.
\newblock Also, full version in Technical Memo MIT-LCS-TR-868, MIT Laboratory
  for Computer Science, Cambridge, MA, November 2002.

\bibitem{Luchangco}
Victor Luchangco.
\newblock {\em Memory Consistency Models for High Performance Distributed
  Computing}.
\newblock PhD thesis, Department of Electrical Engineering and Computer
  Science, Massachusetts Institute of Technology, Cambridge, MA 02139,
  September 2001.

\bibitem{Lyn96}
N.~A. Lynch.
\newblock {\em Distributed Algorithms}.
\newblock Morgan-Kaufmann, San Francisco, California, USA, 1996.

\bibitem{LT89}
N.A. Lynch and M.R. Tuttle.
\newblock An introduction to input/output automata.
\newblock Technical Report CWI-Quarterly, 2(3):219--246, Centrum voor Wiskunde
  en Informatica, Amsterdam, The Netherlands, September 1989.

\bibitem{LMWF}
Nancy Lynch, Michael Merritt, William Weihl, and Alan Fekete.
\newblock {\em Atomic Transactions}.
\newblock Morgan Kaufmann, 1994.

\bibitem{LynchShvartsman-disc02}
Nancy Lynch and Alex Shvartsman.
\newblock R{AMBO}: A reconfigurable atomic memory service for dynamic networks.
\newblock In D.~Malkhi, editor, {\em Distributed Computing (Proceedings of the
  16th International Symposium on DIStributed Computing (DISC))}, volume 2508
  of {\em Lecture Notes in Computer Science}, pages 173--190, Toulouse, France,
  October 2002. Springer-Verlag.
\newblock Also, Technical Report MIT-LCS-TR-856.

\bibitem{Mil99}
R.~Milner.
\newblock {\em Communicating and mobile systems: the $\pi$-calculus}.
\newblock Addison-Wesley, Reading, Mass., 1999.

\bibitem{Milner91thepolyadic}
Robin Milner.
\newblock The polyadic pi-calculus: a tutorial.
\newblock Technical report, Logic and Algebra of Specification, 1991.

\bibitem{Mil09}
Robin Milner.
\newblock {\em The Space and Motion of Communicating Agents}.
\newblock Cambridge University Press, 2009.

\bibitem{NS95}
R.~De Nicola and R.~Segala.
\newblock A process algebraic view of {I/O} automata.
\newblock {\em Theoretical Computer Science}, 138:391--423, mar 1995.

\bibitem{RH98}
J.~Riely and M.~Hennessy.
\newblock A typed language for distributed mobile processes.
\newblock In {\em Proceedings of the 25th ACM SIGPLAN-SIGACT Symposium on
  Principles of Programming Languages}, pages 378--390, 1998.

\bibitem{LynchW92}
J.~Welch and N.~A. Lynch.
\newblock A modular {D}rinking {P}hilosophers algorithm.
\newblock {\em Distributed Computing}, 6(4):233--244, jul 1993.

\end{thebibliography}

\end{document}